\documentclass[a4paper,onecolumn,10pt,unpublished]{quantumarticle}
\pdfoutput=1
\usepackage[english]{babel}
\usepackage[T1]{fontenc}
\usepackage{amsmath}
\usepackage{tikz}
\usepackage{lipsum}

\usepackage[numbers,sort&compress]{natbib}

\usepackage{amsthm}
\newtheorem{theorem}{Theorem}
\newtheorem{definition}{Definition}
\newtheorem{lemma}{Lemma}
\newtheorem{corollary}{Corollary}

\usepackage{mathtools}
\usepackage{qcircuit}
\usepackage{cancel}
\usepackage{amssymb}
\usepackage{subcaption}
\usepackage{hyperref}
\allowdisplaybreaks

\newtheorem{assumption}{Assumption}
\DeclareUnicodeCharacter{2009}{\,} % Maps thin space to LaTeX command
\newcommand{\R}{\mathbb{R}}

\newcommand{\Tr}{\mathrm{Tr}}
\renewcommand{\part}[2]{\frac{\partial #1}{\partial #2}}
\newcommand{\Eo}{\mathcal{E}}
\newcommand{\Li}{\mathcal{L}}
\newcommand{\Bc}{\mathcal{B}}
\newcommand{\scriptH}{\mathcal{H}}
\newcommand{\scriptK}{\mathcal{K}}
\newcommand{\bigO}{\mathcal{O}}
\newcommand{\polylog}{\mathrm{polylog}}
\newcommand{\supp}{\mathrm{supp}}
\newcommand{\Prep}{\mathrm{Prep}}
\newcommand{\Sel}{\mathrm{Sel}}

\begin{document}

\title{Parallelizing Dissipative Quantum Algorithms}

\author{Sudatta Hor}
\email{sudatta.hor@yale.edu}
\affiliation{Department of Computer Science, Yale University}
% \orcid{0000-0002-2445-2701}
\author{Nathan Wiebe}
\affiliation{Department of Computer Science, University of Toronto}
\affiliation{Pacific Northwest National Laboratory}
\author{Yongshan Ding}
\affiliation{Department of Computer Science, Yale University}
\maketitle

\begin{abstract}
Inspired by classical Markov Chain Monte Carlo (MCMC) methods, \textit{dissipative quantum algorithms} are an emerging class of methods for preparing quantum states on a quantum computer. They operate by simulating a dissipative process governed by a mathematical operator known as the \textit{Lindbladian}. The performance of these algorithms can be limited by the sequential implementation of the Lindbladian’s constituent \textit{jump operators}, which leads to high circuit depth.
To address this issue, this work analyzes a parallelization scheme to implement jump operators simultaneously. We leverage the structure of geometric locality to show that individual jump operators remain spatially localized. This exponentially reduces the circuit depth per time step, resulting in an overall polynomial improvement. Numerical experiments demonstrate a 53x reduction in query depth for a 1D benchmark system of 100 qubits. The reduced circuit depth makes dissipative algorithms more practical for near-term quantum computing.
\end{abstract}

\section{Introduction}
\label{sec:intro}

% Dissipative algs are emerging
Given the ubiquitous use of Markov Chain Monte Carlo (MCMC) methods in classical algorithms, it is expected that analogous methods for quantum algorithms will have an equally prominent role in quantum computing. Motivated by this, researchers have sought to develop quantum algorithms that take inspiration from classical MCMC techniques \cite{temme_quantum_2011, szegedyQuantumSpeedupMarkov2004, jiangQuantumMetropolisSampling2024}. However, due to fundamental principles of quantum mechanics such as the no-cloning theorem and the uncertainty principle, directly translating classical MCMC methods into a quantum algorithm can be difficult. To circumvent these obstacles, \textit{dissipative quantum algorithms} have emerged as a powerful, general framework for translating classical MCMC methods into their quantum counterparts, showing prominent applications in thermal and ground state preparation \cite{chen_quantum_2023, rall_thermal_2023, ding_single-ancilla_2024, zhan_rapid_2025, chen_efficient_2025, ding_efficient_2025, hahn_efficient_2025, gilyen_quantum_2026, lloyd_quantum_2026, ong_rigorous_2026}. In these algorithms, continuous-time Markov chains are generalized into continuous-time quantum Markov chains governed by the \textit{Lindbladian} formalism, which models an open quantum system interacting with its environment. By simulating the corresponding Lindbladian on a quantum computer, this naturally translates into a quantum algorithm.

However, implementing a dissipative quantum algorithm on a quantum computer may require a large amount of computational resources. Simulating its Lindbladian relies on utilizing intricate block encodings and unitary control circuits that assume fully fault-tolerant quantum computers. For early quantum computers with constrained resources such as limited runtime and qubits, these complications make it difficult to implement such algorithms.

% Single ancilla
As a step toward making dissipative algorithms practical for near-term hardware, \cite{ding_single-ancilla_2024} introduced an approach tailored specifically for ground state preparation. Because targeting the ground state inherently avoids the stringent detailed balance conditions (which balance the heating and cooling transitions required for the Gibbs state to be stationary), the mathematical structure of the Lindbladian is vastly simplified. Additionally, it consists of only a single jump operator, which further reduces its complexity. Together, these simplifications yield an algorithm that can be executed using just one ancillary qubit, which is more suitable for near-term implementation.

% Bulk dissipation and the Depth Problem
While \cite{ding_single-ancilla_2024} presents an implementation that is highly efficient in qubit cost, its time for convergence, i.e. its \textit{mixing time}, can be very large since only a single jump operator is used. To address this, \cite{zhan_rapid_2025} introduced \textit{bulk dissipation}—a technique that applies multiple jump operators across the system. This achieves \textit{rapid mixing}, which means the mixing time grows only logarithmically in the system size. Unfortunately, bulk dissipation increases the circuit depth per simulation step, which introduces a significant computational tradeoff. Using the same simulation method, the circuit depth per step increases linearly with the number of jump operators. In general, this highlights a tradeoff depicted by the following equation.
\begin{center}
    (total circuit depth) $=$ (circuit depth per unit time) $\times$ (mixing time)
\end{center}
Bulk dissipation reduces the mixing time, but the circuit depth per unit time increases which diminishes the reduction in total circuit depth.

% comparison table
\begin{table}
    \centering
    \renewcommand{\arraystretch}{2}
    \begin{tabular}{lccc}
    \hline
    \textbf{Method} & \textbf{Circuit Depth} & \textbf{Mixing Time} & \textbf{Total Depth} \\
    \hline
    Single jump dissipation \cite{ding_single-ancilla_2024} 
        & $\tilde \bigO(n^3 t^2)$
        & $\bigO(n^3 \log n)$
        & $\tilde \bigO(n^9)$ \\
    Sequential bulk dissipation \cite{zhan_rapid_2025}
        & $\tilde \bigO(n^5 t^2)$
        & $\bigO(\log n)$
        & $\tilde{\bigO}(n^5)$ \\

    Parallel bulk dissipation (this work) 
        & $\tilde{\bigO}(n t^2)$
        & $\bigO(\log n)$
        & $\tilde{\bigO}(n)$ \\
    \hline
    \end{tabular}
    \caption{Comparison of query depth for three dissipation methods on a system of $n$ qubits for a fixed precision $\epsilon$. We consider the mixing times for the 1D transverse-field Ising model in the weak interaction regime derived in \cite{zhan_rapid_2025} and use it to compute the total depth required.}
    \label{table:comparison-table}
\end{table}
% \begin{table}
%     \centering
%     \renewcommand{\arraystretch}{1.5}
%     \setlength{\tabcolsep}{20pt}
%     \small
%     \begin{tabular}{lccc}
%         \hline
%         & \textbf{Single jump}
%         & \textbf{Sequential bulk}
%         & \textbf{Parallel bulk} \\
%         \hline
%         Query depth
%         & $\tilde{\bigO}(n^2 t^2)$
%         & $\tilde{\bigO}(n^4 t^2)$
%         & $\tilde{\bigO}(n t^2)$ \\

%         Circuit depth
%         & $\tilde{\bigO}(n^3 t^2)$
%         & $\tilde{\bigO}(n^5 t^2)$
%         & $\tilde{\bigO}(n t^2)$ \\

%         Mixing time
%         & $\bigO(n^3)$
%         & $\bigO(\log n)$
%         & $\bigO(\log n)$ \\

%         Total depth
%         & $\tilde{\bigO}(n^9)$
%         & $\tilde{\bigO}(n^5)$
%         & $\tilde{\bigO}(n)$ \\
%         \hline
%     \end{tabular}
%     \caption{Comparison of Lindbladian simulation methods on a 1D
%     system of $n$ qubits for fixed precision $\epsilon$.
%     Single-jump dissipation is from
%     \cite{ding_single-ancilla_2024}, sequential bulk dissipation
%     is from \cite{zhan_rapid_2025}, and parallel bulk dissipation
%     is this work. The mixing times are based on
%     \cite{zhan_rapid_2025}.}
%     \label{tab:comparison}
% \end{table}

% Contributions
\textbf{Contributions.} In this paper, we parallelize the implementation of jump operators which substantially reduces the total circuit depth. We tailor our method towards application-specific algorithms such as those in \cite{chen_quantum_2023, rall_thermal_2023, ding_single-ancilla_2024, zhan_rapid_2025, chen_efficient_2025, ding_efficient_2025, hahn_efficient_2025, gilyen_quantum_2026, lloyd_quantum_2026, ong_rigorous_2026}. To simulate a quantum system given by a Hamiltonian $H$, we consider a class of Lindbladians of the form
\begin{equation}
    \frac{d\rho}{dt} = \Li [\rho] \coloneq -i [H, \rho] + \sum_a\left( K_a \rho K_a^\dagger - \frac{1}{2} \left\{K_a^\dagger K_a, \rho \right\}\right)
\end{equation}
where the \textit{jump operators} $K_a$ are defined as
\begin{equation}
    K_a \coloneq \int_{-\infty}^\infty f(t) A_a(t) \, dt \, .
\end{equation}
$a$ indexes each system qubit, $A_a$ is a single qubit Pauli operator acting on qubit $a$, and $f(t)$ is a real-valued filter function. We focus on the case where $H$ is \textit{geometrically local}, which is ubiquitous in physics and encompasses a broad class of practically relevant systems \cite{kogut_hamiltonian_1975, haah_quantum_2023}. For these systems, there exists Lieb-Robinson bounds \cite{nachtergaele_locality_2009, poulin_lieb-robinson_2010, hastings_locality_2010, barthel_quasilocality_2012, anthony_chen_speed_2023} which constrain the spread of observables over time, giving rise to a light-cone structure. Using these bounds, we show that each jump operator can be spatially localized to a region whose radius scales logarithmically. As a result, jump operators acting on disjoint 
regions can be executed in parallel, which exponentially reduces the circuit depth per time step.

To simulate the Lindbladian, we utilize a first-order Trotter formula and Stinespring dilation similar to \cite{ding_single-ancilla_2024, hahn_efficient_2025}. While other methods have asymptotically lower circuit depth \cite{childs_efficient_2017, cleve_efficient_2017, li_simulating_2023}, we choose this method because of its simplicity and compatibility towards early fault-tolerant quantum computers with limited resources. Furthermore, our method explicitly constructs the jump operator $K_a$ while other methods only assume access to its block encoding. This gives an end-to-end resource estimate for an application-specific jump operator.

We compare the total circuit depth of three different dissipation methods:
\begin{itemize}
    \item \textit{Single jump dissipation} \cite{ding_single-ancilla_2024}, which uses a single jump operator on one qubit. 
    \item \textit{Sequential bulk dissipation} \cite{zhan_rapid_2025}, which sequentially applies a jump operator on each site without localization.
    \item \textit{Parallel bulk dissipation}, which localizes jump operators on each site to apply them in parallel.
\end{itemize}

In our work, we compare the asymptotic circuit depths for each method. Table~\ref{table:comparison-table} contains an example that shows the circuit depths between each method, accounting for different mixing times. While sequential bulk dissipation has a higher circuit depth compared to the single jump case, the mixing time is exponentially faster which results in a quartic improvement in total circuit depth. Furthermore, our method of localization and parallelization gives an additional quartic improvement in circuit depth.

Gate complexity is studied in our work instead of Hamiltonian simulation time as done in \cite{ding_single-ancilla_2024}. This gives a more descriptive estimate for the implementation cost on early fault-tolerant quantum computers. Crucially, unlike simulation time, gate complexity heavily depends on the system size. For example, to simulate a Hamiltonian of size $n$ for time $t$, under some assumptions\footnote{We assume that the 1-norm of the LCU coefficients of the Hamiltonian is $\bigO(n)$, and that its LCU consists of $\bigO(n)$ unitaries so the corresponding oracles use $\bigO(n)$ gates.} we expect general algorithms \cite{lowOptimalHamiltonianSimulation2017, low_hamiltonian_2019, gilyenQuantumSingularValue2019} to have query complexity $\bigO(nt +\log (1/\epsilon))$ where each query requires $\bigO(n)$ gates, resulting in a total gate complexity of $\bigO(n^2 t + n\log(1/\epsilon))$. Because of this extra dependence on $n$, localization methods have a big impact in reducing gate complexity.

Our approach is numerically validated on the 1D 
transverse-field Ising model (1D TFIM) on 100 qubits. 
Our experiments demonstrate that parallel bulk dissipation gives a 53x reduction in query depth over the sequential baseline. Furthermore, we show that the single jump method requires an infeasibly high query depth due to its large mixing time. This confirms that parallel bulk dissipation provides the best reduction in query depth, making dissipative algorithms more suitable for near-term quantum computers.

% Related work
\textbf{Related work.} Locality in open quantum systems has been studied and utilized in previous work. 
For Lindbladians with local interactions, it has been shown that they exhibit a Lieb-Robinson bound \cite{poulin_lieb-robinson_2010}, which enables local observables to be efficiently simulated \cite{barthel_quasilocality_2012}. Other works utilized locality to show that the system mixes in polynomial or even logarithmic time \cite{rouze_efficient_2025, rouze_optimal_2026, kastoryano_quantum_2016}. Our work does not investigate mixing time, but instead the implementation cost. In particular, \cite{chen_quantum_2023} suggests that local Lindbladians can potentially be parallelized, but does not provide analysis on circuit depth or establish accuracy and mixing time of the parallelized dynamics.

Recent work proposed Lindbladian simulation algorithms with reduced circuit depth \cite{yu_exponentially_2025, mohammadipour_reducing_2025}. While these methods significantly improve circuit depth, they do not consider locality and obtain only the expectations of observables through sampling and postprocessing rather than evolving the full state. In contrast, our work simulates the full state and exploits locality to reduce circuit depth. This approach is motivated by previous Hamiltonian simulation methods that obtain near-optimal gate complexity for local systems \cite{childs_nearly_2019, haah_quantum_2023}.

Furthermore, our work studies application-specific Lindbladians rather than general ones. General purpose Lindbladian simulation algorithms assume oracle access to the Hamiltonian and jump operators, for example through LCU representations \cite{childs_efficient_2017, cleve_efficient_2017, li_simulating_2023}. Instead, our work explicitly constructs jump operators from a given Hamiltonian and analyzes the cost of this construction. The jump operators that we construct follow a specific form that encompasses applications such as thermal and ground state preparation \cite{chen_quantum_2023, rall_thermal_2023, ding_single-ancilla_2024, zhan_rapid_2025, chen_efficient_2025, ding_efficient_2025, hahn_efficient_2025, gilyen_quantum_2026, lloyd_quantum_2026, ong_rigorous_2026}.

Most notably, our localization method is similar to that in \cite{hahn_efficient_2025}. Their work implements Gibbs sampling algorithms by localizing each jump operator in a similar way. However, our work differs in the following.
\begin{itemize}
    \item Instead of focusing on Gibbs sampling, our work generalizes it and encompasses other dissipative algorithms. In particular, we developed a new assumption of the filter function that includes several dissipative algorithms and gives clear estimates on the circuit cost.
    \item Our work focuses on the parallelization capability and circuit-depth scaling for large scaling system sizes. In contrast, their work places more emphasis on near-term implementation, including numerical studies in the presence of hardware noise. Our numerical experiments study systems of up to 100 qubits, compared with the 12-qubit demonstrations done previously.
    \item Their work compiles each local channel using variational circuits. While this compilation method numerically appears to be effective for small demonstrations, there is no theoretical guarantee that this remains accurate for larger systems. Our work instead constructs the jump operator using quadrature methods and decomposes it into Prepare and Select oracles to count the number of gates. This gives a theoretical upper bound on the circuit depth required for implementation.
\end{itemize}

% Outline
\textbf{Outline.} This article is organized as follows. In Section~\ref{sec:dissalgs}, we provide algorithmic background on dissipative quantum algorithms and give examples for preparing Gibbs and ground states. In Section~\ref{sec:mainresult}, we present \textit{parallel bulk dissipation}. We prove that the localized dynamics remain close to the original, resulting in a similar mixing time and stationary state. We analyze the asymptotic circuit depth and compare it between single jump, sequential, and parallel methods. In Section~\ref{sec:numex}, we demonstrate this parallelization strategy numerically on the 1-dimensional transverse-field Ising model.

\section{Dissipative Algorithms}
\label{sec:dissalgs}
Dissipative algorithms consist of constructing a cleverly designed Lindbladian and then simulating its dynamics. In the following, we give the necessary background on the Lindblad dynamics before describing prominent examples of dissipative algorithms that prepare the Gibbs and ground state.

\subsection{Lindblad Dynamics}
\label{subsec:lindblad}
In a closed quantum system that is described by a Hamiltonian $H_{sys}$, recall that the dynamics on a pure quantum state $|\psi_{sys}\rangle$ is given by
\begin{equation}
    \frac{d}{dt}|\psi_{sys}\rangle = -i H_{sys}\, |\psi_{sys}\rangle \,.
    \label{eq:schrodinger}
\end{equation}
If $H$ is time-independent, then the solution is given by
\begin{equation}
    |\psi_{sys}(t)\rangle = e^{-iH_{sys}t} |\psi_{sys}(0)\rangle\, .
    \label{eq:hamiltonian-evolution}
\end{equation}
In an open quantum system, the system may have interaction with the environment, so the Schr\"{o}dinger equation must use a Hamiltonian $H_{env}$ and pure state $|\psi_{env}\rangle$ that describes the entire system and environment. After solving the Schr\"{o}dinger equation for time $t$ to get $|\psi_{env}(t)\rangle$, the environment can be traced out to get a mixed state $\rho_t$ that describes the state of the system agnostic of the environment.

The motivation for the Lindblad master equation is to describe the dynamics of $\rho_t$. Analogous to (Eq.~\ref{eq:schrodinger}) and (\ref{eq:hamiltonian-evolution}), we wish to find a linear operator $\Li$ such that
\begin{equation}
    \frac{d}{dt}\rho = \Li \rho \quad \text{and} \quad \rho_t = e^{\Li t} [\rho_0]\, 
    \label{eq:lindblad-master}
\end{equation}
where $\rho_0$ denotes the initial state at time $t=0$. This is known as the Lindblad master equation, and $\Li$ is known as the Lindbladian with its dynamics described by $e^{\Li t}$. Assuming the dynamics are Markovian and time-homogeneous, and requiring $e^{{\Li}t}$ to be CPTP, the general form of $\Li$ is given by
\begin{equation}
    \Li [\rho] \coloneq \underbrace{-i [H, \rho]}_{\text{coherent}} + \underbrace{\sum_a\left( K_a \rho K_a^\dagger - \frac{1}{2} \left\{K_a^\dagger K_a, \rho \right\}\right)}_{\text{dissipative}}
    \label{eq:lindbladian}
\end{equation}
where $\{K_a\}$ are known as \textit{jump operators}, which are not necessarily Hermitian \cite{breuer_quantum_2007, manzano_short_2020}.

The coherent part is a direct consequence of the Schr\"{o}dinger equation and its dynamics is unitary. The dissipative part describes the interactions with the environment. The $K_a\rho K_a^\dagger$ operation applies a transition to $\rho$, while the $-\frac{1}{2}\{K_a^\dagger K_a, \rho\}$ operation ensures trace preservation in the event that no transition occurs. The dynamics of the dissipative part is not unitary, and requires additional techniques to be simulated on a quantum computer.

% Markov chain interpretation
% ----- Shortened -------
% The dissipative part can also be interpreted as a quantum analogue of classical MCMC, in particular the Metropolis-Hastings algorithm \cite{levin_markov_2017}. This algorithm evolves a discrete-time Markov chain by applying a random jump (transition) $A^a$ with probability $p(a)$ and accepting the new state with probability $\gamma (\omega) \coloneq \min(1, e^{-\beta \omega})$, where $\omega$ denotes the gain in energy. Otherwise, reject the new state and remain in the original. This is described by the stochastic matrix $M$ given by
% \begin{equation}
%     M_{i,j} = \underbrace{\sum_{a} \Bigl( p(a) \gamma(E_i - E_j) A^a_{i,j}\Bigr)}_{\text{accept}} + \underbrace{R_{i,j} \delta_{i,j}}_{\text{reject}}\, 
% \end{equation}
% where $R_{i,j}$ is a diagonal matrix determined by conservation of probability. This can be generalized into a continuous-time Markov chain where the probability distribution after time $t$ is given by $\pi_t = e^{Lt} \pi_0$, where $L$ is a matrix given by
% \begin{equation}
%     L_{i,j} = \sum_{a} p(a) \left( \underbrace{\gamma(E_i - E_j) A^a_{i,j}}_{\text{transition}} - \underbrace{\delta_{i,j} \sum_k \gamma(E_k - E_j) A^a_{k,j}}_{\text{decay}}\right)
% \end{equation}
% where the decay part exists to ensure that $e^{Lt}$ preserves probability. Analogously, the $K_a\rho K_a^\dagger$ term in the Lindbladian acts as the transition which applies the jump $K_a$ while the $\frac{1}{2} \left\{K_a^\dagger K_a, \rho \right\}$ term acts as the decay which preserves probability.
% -------------------------

The dissipative part can be interpreted as a quantum analogue of classical MCMC, in particular the Metropolis–Hastings algorithm \cite{levin_markov_2017}. In the discrete-time setting, one proposes a random transition $A^a$ with probability $p(a)$ and accepts it with probability $\gamma(\omega)=\min(1,e^{-\beta \omega})$, where $\omega$ is the energy change. Otherwise the state is left unchanged. This can be generalized into a continuous-time Markov chain generator $L$ where the probability distribution after time $t$ is given by $\pi_t = e^{Lt} \pi_0$. In this setting, instead of a rejection term, $L$ contains a decay term that represents the rate at which probability flows out of the state. The Lindbladian has the same structure: the first term $K_a \rho K_a^\dagger$ implements the transition (quantum jump), while the anticommutator term $\tfrac{1}{2}\{K_a^\dagger K_a,\rho\}$ provides the corresponding decay.

% Segway into next section
Dissipative algorithms consist of designing a Lindbladian $\Li$ such that simulating the dynamics $e^{\Li t} [\rho]$ on a quantum computer will give us the desired target state. In the following sections, we will give examples of such algorithms.

\subsection{Gibbs State Preparation}
\label{subsec:thermalstateprep}
Given a Hamiltonian $H$, prepare the quantum Gibbs state, i.e. thermal state, which is given by
\begin{equation}
    \rho_\beta \coloneq e^{-\beta H} / Z(\beta)
    \label{eq:gibbs-state}
\end{equation}
where $Z(\beta) = \Tr(e^{-\beta H})$ is known as the partition function.
If we write $H$ in terms of its spectral decomposition $H = \sum_i E_i |\psi_i\rangle \langle \psi_i|$, then Eq.~\ref{eq:gibbs-state} becomes
\begin{equation}
    \rho_\beta = \sum_i \frac{e^{-\beta E_i}}{Z(\beta)} \, |\psi_i\rangle\langle \psi_i| \, .
\end{equation}
In other words, the quantum Gibbs state is simply the mixed state corresponding to the Gibbs distribution over all eigenstates.

Dissipative Gibbs state preparation aims to prepare the quantum Gibbs state on a quantum computer. A canonical example of such an algorithm simulates the \textit{Davies generator}, which describes thermalization in nature where the system is coupled to a heat bath until thermal equilibrium is reached \cite{chen_quantum_2023}. This is described by the following Lindbladian.
\begin{equation}
    \Li_{\text{Davies}} [\rho] = \sum_{\omega\in \Bc(H)} \gamma(\omega) \left(K_\omega \rho K _\omega^\dagger - \frac{1}{2}\{K_\omega^\dagger K_\omega, \rho\}\right)
    \label{eq:davies-gen}
\end{equation}
where the summation is over all \textit{Bohr frequencies} $\omega$ given by
\begin{displaymath}
\Bc (H) \coloneq \{ E_i - E_j \,|\, E_i , E_j \in \mathrm{spec}(H)\}\, .
\end{displaymath}
The transition weight is $\gamma(\omega) \in [0, 1]$ and chosen such that
\begin{displaymath}
    \frac{\gamma(\omega)}{\gamma(-\omega)} = e^{-\beta \omega}\, .
\end{displaymath}
This is done so that quantum detailed balance is satisfied, which is a condition that heating and cooling transitions are balanced so the Gibbs state is \textit{stationary}, i.e. $e^{\Li t} [\rho_\beta] = \rho_\beta$. For example, we can choose $\gamma(\omega) = \min (1, e^{-\beta \omega})$ similar to the Metropolis algorithm, or $\gamma(\omega) = (e^{\beta \omega}+1)^{-1}$ which is synonymous with Glauber dynamics.
The jump operators $\{K_\omega\}_\omega$ are given by
\begin{equation}
    K_\omega \coloneq \sum_{E_i -E_j = \omega} P_{E_i} A P_{E_j}\, .
    \label{eq:davies-jump-projector}
\end{equation}
where $A$ is some matrix that describes transition amplitudes (e.g. Pauli matrices) and $P_{E_i}$ is the projector onto the $E_i$-subspace.
One can think of this operator as a map that sends $|\psi_j\rangle$ to $|\psi_i\rangle$ with some transition amplitude $\langle \psi_i| A |\psi_j\rangle$.
Another way to write this operator is through the Fourier transform of the Heisenberg evolution, given below.
\begin{equation}
    K_\omega \propto \int_{-\infty}^\infty e^{-i\omega t} \underbrace{e^{iHt} A \, e^{-iHt}}_{\coloneq A(t)} \, dt
    \label{eq:davies-jump}
\end{equation}

Modern dissipative Gibbs-state preparation algorithms construct jump operators that use filtered Heisenberg evolutions \cite{ding_efficient_2025, chen_efficient_2025, hahn_efficient_2025, lloyd_quantum_2026, ong_rigorous_2026}. In particular, the jump operator takes the form
\begin{equation}
K = \int_{-\infty}^\infty f(t) A(t) , dt,
\label{eq:davies-jump-weighted}
\end{equation}
where the filter function $f(t)$ is chosen so that the resulting Lindbladian satisfies an appropriate quantum detailed-balance condition.

Dissipative Gibbs state preparation displays two key points. First, Eq.~\ref{eq:davies-jump-projector} shows that the jump operator $K_\omega$ can be thought of as a map from one energy subspace to another with a transition amplitude that satisfies detailed balance. Secondly, the jump operator in Eq.~\ref{eq:davies-jump-weighted} is weighted by a filter function $f(t)$ whose tails decay rapidly. For local Hamiltonians, this results in quasi-locality of the jump operator. This will be described further in Section~\ref{sec:mainresult}.

\subsection{Ground State Preparation}
\label{subsec:groundstateprep}
Inspired by dissipative Gibbs state preparation, \textit{dissipative ground state preparation} was introduced in \cite{ding_single-ancilla_2024} to prepare the ground state of a given Hamiltonian $H$. Instead of modeling after a physical process found in nature, it uses a Monte Carlo approach that maps eigenstates to those with lower energy. Furthermore, its simplicity makes it more suitable for near-term quantum computers.

The Lindbladian $\Li$ in this method is given by
\begin{equation}
    \Li [\rho] = \underbrace{-i[H, \rho]}_{\scriptH [\rho]} + \underbrace{K\rho K^\dagger - \frac{1}{2} \{K^\dagger K, \rho\}}_{\scriptK [\rho]}
    \label{eq:sagsp-lindblad}
\end{equation}
where the jump operator $K$ is defined to be
\begin{equation}
    K \coloneq \int_{-\infty}^\infty f(t) A(t) \, dt
    \label{eq:sagsp-jump}
\end{equation}
where $A$ is a Hermitian operator that represents the coupling between the system and environment. For example, $A$ can be a Pauli operator on one qubit. $f(t)$ is a filter function such that it has exponentially decaying tails and its Fourier transform $\hat f(\omega)$ satisfies
\begin{equation}
    \hat f(\omega) = 0 \quad \text{when }\omega \geq 0\, .
    \label{eq:sagsp-filter-prop}
\end{equation}

If we write the Hamiltonian in its spectral decomposition $H = \sum_i E_i |\psi_i\rangle \langle \psi_i|$, then the jump operator as defined in
Eq.~\ref{eq:sagsp-jump} can be written as
% the Heisenberg evolution can be written as
% $A(t) = e^{iHt}Ae^{-iHt} = \sum_{i,j} e^{i(E_i - E_j)t} |\psi_i\rangle \langle \psi_i| A|\psi_j\rangle \langle \psi_j|$
\begin{equation}
    K = \sum_{i,j} \hat f(\lambda_i - \lambda_j)\, |\psi_i\rangle \langle \psi_i|A |\psi_j\rangle \langle \psi_j| \, .
    \label{eq:sagsp-jump-summation}
\end{equation}
From this, we can see that $K$ maps an eigenstate $|\psi_j\rangle$ to an eigenstate $|\psi_i\rangle$ with amplitude $\hat f(\lambda_i - \lambda_j) \langle \psi_i |A|\psi_j\rangle$. By the property in Eq.~\ref{eq:sagsp-filter-prop}, we see that $|\psi_j\rangle$ maps to $0$ whenever $\lambda_i \geq \lambda_j$. In other words, $K$ allows transitions only to lower energy eigenstates. Thus, if $|\psi_0\rangle $ is the ground state then
\begin{equation}
    K |\psi_0\rangle = \sum_i \hat f(\lambda_i - \lambda_0) |\psi_i\rangle \langle \psi_i| A|\psi_0\rangle\,  = 0.
\end{equation}
As a result, $\Li [|\psi_0\rangle\langle \psi_0|] = 0$ which implies that the ground state is a fixed point of the Lindblad dynamics, i.e.
\begin{equation}
      e^{\Li t} [|\psi_0\rangle \langle \psi_0|] = |\psi_0\rangle \langle \psi_0|\, .
\end{equation}

\begin{figure}[t]
    \centering
    \vspace{1em}
    \includegraphics[width=0.6\linewidth, trim=0 0.2in 0 0]{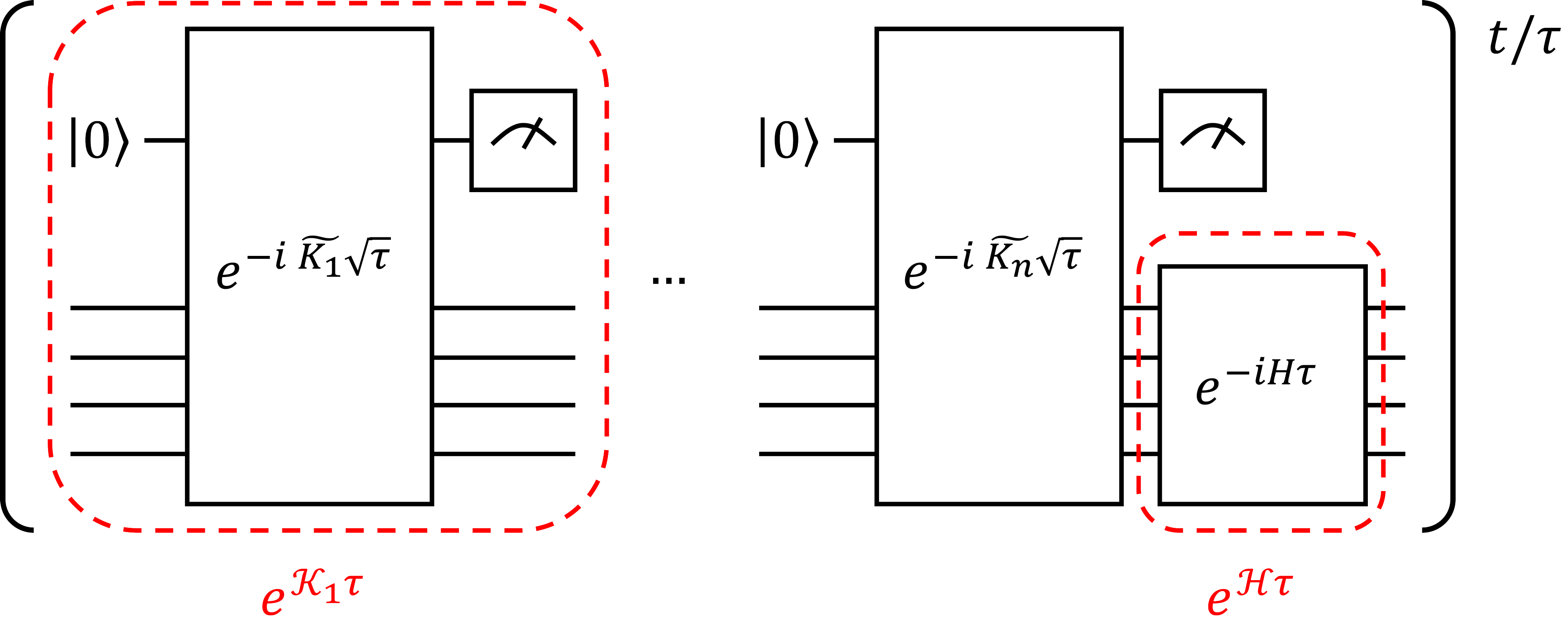}
    \caption{Quantum circuit used to simulate bulk dissipation for preparing the ground state $\rho_0$ of $H$. Given input state $\rho$, the circuit approximates $e^{\Li t}[\rho]$. Each dissipative part $e^{\Li_a\tau}$ is simulated using Stinespring dilation using an ancilla qubit, whose measurement result is discarded. Thus, it is not required to post-select on the measurement outcome.}
    \label{fig:circuit-struct}
\end{figure}

% multiple jump operators
Instead of only using a single jump operator, the Lindbladian can be extended to have multiple, as done in \cite{zhan_rapid_2025}. By doing so, the mixing time dramatically improves. In fact, for certain 1D quantum systems they show that it is possible to have \textit{rapid mixing}, i.e. the mixing time scales logarithmically in the system size. Specifically, the Lindbladian in Eq.~\ref{eq:sagsp-lindblad} is modified so that for each system qubit $j$, the coupling operator $A_a$ is chosen to be a single qubit Pauli operator acting on qubit $j$ and the jump operator $K_a$ is defined as
\begin{equation}
    K_a \coloneq \int_{-\infty}^\infty f(t) A_a(t) \, dt \, .
    \label{eq:rapid-jump}
\end{equation}
The index $a$ and $j$ are separate because there can be multiple $A_a$ for each qubit $j$. Thus there are $\bigO(n)$ many jump operators and the new Lindbladian is given by
\begin{equation}
    \Li [\rho] = -i[H, \rho] + \sum_a K_a\rho K_a^\dagger - \frac{1}{2} \{K_a^\dagger K_a, \rho\}\, .
    \label{eq:rapid-lindblad}
\end{equation}
Choosing jump operators in this way is referred to as \textit{bulk dissipation}.

It is possible to implement bulk dissipation using the single-ancilla method from \cite{ding_single-ancilla_2024}. However, even though the ancilla cost is minimal, the circuit depth can be large.
Since only one ancilla is available, each  jump operator must be applied sequentially to wait for the ancilla to be measured and reset. This linearly increases the circuit depth per time step. Additionally, the Trotter error also increases linearly due to the additional number of terms, which increases the number of time steps required. Altogether, this results in a quadratic increase in circuit depth over the single jump implementation.

\section{Parallel Dissipation}
\label{sec:mainresult}
To reduce circuit depth, we propose utilizing multiple ancilla qubits to apply jump operators in parallel rather than sequentially. This utilizes more qubits, but the circuit depth per time step is reduced exponentially. Furthermore, the total number of qubits remain asymptotically the same. This suggests that the substantial reduction in circuit depth outweighs the increased qubit cost.

\subsection{Setup}
\label{subsec:setup}
% Let $K_a$ be in the form of Eq. \ref{eq:rapid-jump}, and let $\Li$ be defined as in Eq.~\ref{eq:rapid-lindblad}. Our goal is to simulate the dynamics $e^{\Li t}$. We use a first-order Trotter formula with time step $\tau$ given by
% \begin{equation}
%     e^{\Li t} \approx \left(e^{\scriptH \tau}\prod_a e^{\scriptK_a \tau} \right)^{t/\tau}
%     \label{eq:bulk-trotter}
% \end{equation}
% where $\scriptH [\rho] = -i[H, \rho]$ and 
% \begin{equation}
%     \scriptK_a [\rho]= K_a\rho K_a^\dagger - \frac{1}{2} \{K_a^\dagger K_a, \rho\}\, .
% \end{equation}
% The coherent part $e^{\Li_H \tau}$ is unitary and can be implemented using Hamiltonian simulation, but each dissipative term $e^{\Li_a \tau}$ is nonunitary and must be implemented with Stinespring dilation. \rev{add stinespring dilation equation}. The resulting circuit structure is illustrated in Fig.~\ref{fig:circuit-struct}.
% Note that while $\{\scriptK_a\}$ is not necessarily a set of commuting operators, we can apply the product $\prod_a e^{\scriptK_a\tau}$ in any order because the summation $\sum_a \scriptK_a$ can be arranged in any order.

% Setup of geometrically local Hamiltonian
Assume that qubits are laid out in a $D$-dimensional lattice $\Lambda$ of length $L$ in  $\R^D$ and let the Hamiltonian $H$ be geometrically $k$-local. In other words, it can be written as a sum of terms $\sum_X h_X$ such that $h_X$ acts nontrivially only on a region $X$ of radius $k$ and acts as identity elsewhere. If $X$ has a radius larger than $k$ then $h_X = 0$. Furthermore, assume that $\|h_X \| \leq J$ for some constant $J$.

Let each term $h_X$ be given in a linear combination of unitaries (LCU)
\begin{equation}
h_X = \sum_{j=1}^{m_X} \alpha_{X, j} U_{X, j}
\end{equation}
such that the 1-norm of its coefficients and the number of terms $m_X$ is constant. For example, we can write $h_X$ in the Pauli basis. Since $X$ consists of $\bigO(k^D)$ qubits, $h_X$ can be written as the linear combination of $4^{O(k^D
)}$ many Pauli terms. Furthermore, the 1-norm of its coefficients is upper bounded by $2^{\bigO(k^D)} \|h_X\|$, which is constant. 

We assume that we have access to $h_X$ through the oracles $\Prep_{h_X}$ and $\Sel_{h_X}$, defined by
\begin{equation}
    \begin{split}
        \mathrm{Prep}_{h_X}\, |0\rangle  &= \sum_j \sqrt{\frac{|\alpha_{X,j}|}{\alpha_X}}\, |j\rangle\\
        \mathrm{Sel}_{h_X}\, |j\rangle |\psi\rangle &= |j\rangle\, U_{X,j} |\psi\rangle \, .
    \end{split}
\end{equation}
where $\alpha_X \coloneq \sum_{j} |\alpha_{X, j}|$.

Given access to two-qubit gates and single-qubit rotation gates, the $\Prep$ and $\Sel$ oracles can be implemented with a number of gates that scale linearly with the number of terms in its LCU decomposition \cite{low_hamiltonian_2019}.

The $\Prep$ and $\Sel$ oracles for $h_X$ can be used to construct those for larger regions.     Altogether, these oracles are sufficient to construct a circuit that simulates $e^{\Li t}$.

Additionally, we impose the following assumptions on the filter function.

% lemma: local jump operator
\begin{assumption}
    \label{assumption:rapid-decay-filter}
    $f$ decays rapidly. In particular, assume that there exists positive constants $C, \mu$ such that $|f(s)| \leq C e^{-\mu |s|}$ with $\mu > 2c_0$ where $c_0 \coloneq \max_{p \in \Lambda} \sum_{X \ni p} |X|\, \| h_X\| $.
\end{assumption}
\begin{assumption}
    \label{assumption:rapid-decay-filter-fourier}
    $\hat f$ decays rapidly. In particular, assume that there exists some positive constants $\hat C, \hat \mu$ such that $|\hat f(\xi)| \leq \hat C e^{-\hat \mu |\xi|}$.
\end{assumption}
The first assumption is necessary to localize the jump operator so that the Lieb-Robinson bound does not blow up. The second assumption is used in order to apply quadrature of the jump operator with Poisson summation, which is highly effective. If we instead assume that $\hat f$ is compactly supported, we can use the Nyquist-Shannon sampling theorem which is even more effective \cite{ding_single-ancilla_2024}. However, it is not possible to have a compactly supported $\hat f$ while simultaneously satisfying Assumption~\ref{assumption:rapid-decay-filter}. Additionally, it is possible to remove Assumption~\ref{assumption:rapid-decay-filter-fourier} and instead use a quadrature method different than Poisson summation, for example Riemann summation, but this may require larger implementation cost.

These assumptions on $f$ encompass a number of dissipative algorithms such as those in \cite{ding_efficient_2025, chen_efficient_2025, hahn_efficient_2025, lloyd_quantum_2026, ong_rigorous_2026}.

\subsection{Localization}
\label{subsec:localization}
% lemma: Lieb robinson bound
To enable parallelization, we show that each jump operator can be localized and that the resulting dynamics remain close to the original. For geometrically local systems, this can be done by recognizing that the Heisenberg evolution $A_a(t) \coloneq e^{iHt} A_a e^{-iHt}$ in Eq.~\ref{eq:rapid-jump} can be localized using the Lieb-Robinson bound. Using the assumption of geometric locality, this can be done by using the Lieb-Robinson bound, which is described as follows. This is described in the following lemma adapted from \cite{haah_quantum_2023}.
\begin{lemma}[Lieb-Robinson bound]
    Let $A_a$ be an observable on a single qubit $a$, and for some positive integer $r$ let $\Omega$ be the set of qubits within distance $r$ of 
    qubit $a$. Let
    \begin{equation}
        A_a^\Omega (t) \coloneq e^{iH_a^\Omega t} A e^{-iH_a^\Omega t}
    \end{equation}
    where $H_a^\Omega \coloneq \sum_{X\subseteq \Omega} h_X$. Then
    \begin{equation}
        \| A_a(t) - A_a^\Omega (t) \|  \leq  \| A_a\| \, \frac{(2 c_0 |t|)^r}{r!} 
    \end{equation}
    where $c_0 \coloneq \max_{p \in \Lambda} \sum_{X \ni p} |X|\, \| h_X\| $. Here, $\|\cdot \| $ denotes the spectral norm.
\end{lemma}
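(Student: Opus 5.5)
The plan is the standard Duhamel (interaction-picture) argument, iterated $r$ times so that the error shows up as the tail of a Taylor series. Write $H = H_a^\Omega + B$, where $B \coloneq \sum_{X \not\subseteq \Omega} h_X$ collects the terms that are not contained in $\Omega$. Differentiate the interpolating operator $s \mapsto e^{iHs} e^{-iH_a^\Omega s}$, or equivalently compare the two Heisenberg evolutions directly. This gives
\begin{equation}
A_a(t) - A_a^\Omega(t) = i\int_0^t e^{iH(t-s)}\,[B, A_a^\Omega(s)]\,e^{-iH(t-s)}\,ds,
\end{equation}
so by unitary invariance $\|A_a(t)-A_a^\Omega(t)\| \le \int_0^{|t|} \|[B, A_a^\Omega(s)]\|\,ds$. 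The task therefore reduces to showing that $A_a^\Omega(s)$ is almost supported near $a$, so that it barely fails to commute with the boundary terms in $B$.

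For this I would expand $A_a^\Omega(s)$ by iterated commutators, in the form of the standard Lieb--Robinson recursion. Define $C_X(s) \coloneq \sup\|[O_X, A_a^\Omega(s)]\|$ over operators $O_X$ on $X$ with $\|O_X\|\le 1$. Applying Duhamel again, with $H_a^\Omega$ split into terms overlapping $X$ and terms not overlapping $X$, gives
\begin{equation}
C_X(s) \le C_X(0) + 2\sum_{Y\cap X\neq\emptyset}\|h_Y\|\int_0^{|s|} C_Y(s')\,ds'.
\end{equation}
Since $C_X(0)\le 2\|A_a\|\,\mathbf 1[a\in X]$, iterating this inequality produces a sum over chains $X = Y_0, Y_1, \dots, Y_m \ni a$ of overlapping interaction sets. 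The $m$-th term carries the factor $(2|s|)^m/m!$ times a product of norms $\|h_{Y_i}\|$.

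The key combinatorial step is the following. A term of $B$ touches the complement of $\Omega$, and every $Y_i$ has radius at most $k$, so a chain starting from such a term must take roughly $r$ steps (in units of the lattice distance, with $k$ absorbed into the counting) before it can reach $a$. All terms with $m < r$ therefore vanish. To bound the chain sums, at each step I would sum over the sets $Y_{i+1}$ that overlap $Y_i$ by choosing a point $p\in Y_i$ and then a set $Y_{i+1}\ni p$. This contributes at most $\sum_{Y\ni p}|Y|\|h_Y\| \le c_0$, where the factor $|Y|$ accounts for the choice of overlap point at the next step. The integration over $s$ from the outer Duhamel formula supplies the remaining power. Altogether the bound becomes $\|A_a\|\sum_{m\ge r}(2c_0|t|)^m/m!$, which is not quite the claimed $(2c_0|t|)^r/r!$.

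The main obstacle is removing this tail sum and getting exactly $(2c_0|t|)^r/r!$ with no $e^{2c_0|t|}$ factor. For this I would follow the argument of \cite{haah_quantum_2023}: truncate the Duhamel iteration at depth exactly $r$ instead of expanding to infinity. Concretely, peel off the Hamiltonian in $r$ nested shells $\Omega_0=\{a\}\subset\Omega_1\subset\cdots\subset\Omega_r=\Omega$ and telescope
\begin{equation}
A_a(t)-A_a^{\Omega}(t)=\sum_{\ell}\bigl(A^{\Omega_{\ell+1}}\text{-type differences}\bigr).
\end{equation}
Each stage uses one unitary Duhamel step, so no norm growth enters except through the commutator bounds. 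The $r$ nested time integrals then yield $|t|^r/r!$ exactly, while the commutators contribute $(2c_0)^r$ via the counting argument above. Because the remaining evolutions are unitary and bounded in norm by $\|A_a\|$, the nested expansion terminates with the claimed bound.
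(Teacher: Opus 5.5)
The paper gives no proof of this lemma. It imports the bound from \cite{haah_quantum_2023}, so the question is whether your argument closes on its own. Your first part does. It is the standard path-counting Lieb--Robinson argument. If you start each chain sum at the pinned end ($Y_m\ni a$) and sum outward, the weights $|Y|\,\|h_Y\|$ give $\|A_a(t)-A_a^\Omega(t)\|\le\|A_a\|\sum_{m\ge r}(2c_0|t|)^m/m!$. Two caveats apply. Summing from the boundary term inward instead leaves a prefactor $\sum_Z|Z|\,\|h_Z\|$ over boundary terms. And with terms of radius $k$, a chain needs only about $r/(2k)$ steps to leave $\Omega$, so ``absorbing $k$'' changes the exponent: $r$ must be read as interaction-graph distance.

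The genuine gap is the step you flag as the main obstacle. Telescoping over shells inside $\Omega$ gives $A_a^{\Omega}(t)-A_a^{\Omega_0}(t)$, and telescoping outward from $\Omega$ just reproduces the tail. What does work is a nested recursion,
$\|A_a^{\Omega_{j+1}}(t)-A_a^{\Omega_j}(t)\|\le 2b_j\int_0^{|t|}\|A_a^{\Omega_j}(s)-A_a^{\Omega_{j-1}}(s)\|\,ds$,
with the full lattice in place of $\Omega_{j+1}$ at the top level. It uses the fact that a term crossing $\partial\Omega_j$ misses $\Omega_{j-1}$.

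The problem is the constant. Here $b_j=\sum_{Z\text{ crossing }\partial\Omega_j}\|h_Z\|$ is the total coupling across the $j$-th shell, not $c_0$, and it grows like $j^{D-1}$. So $\prod_j b_j\sim(r!)^{D-1}$, which cancels ($D=2$) or overwhelms ($D\ge3$) the $1/r!$. Your claim that ``the commutators contribute $(2c_0)^r$ via the counting argument above'' therefore does not follow. The $c_0$ counting is about chains of individual terms pinned at $a$, and peeling off whole shells discards that structure. In 1D, $b_j$ is bounded and the shell recursion does give a tail-free bound. Truncating your chain recursion at depth $r$, and bounding the last commutator by $2\|A_a\|$, also removes the tail. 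But it unpins the chain from $a$, so you pick up the boundary prefactor noted above.

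You do not actually need to remove the tail. There are two ways to proceed.

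\emph{Use the tail directly.} Feeding it into Lemma~\ref{lemma:appx-jump} gives $\int_0^\infty e^{-\mu s}\sum_{m\ge r}\frac{(2c_0s)^m}{m!}\,ds=\frac{1}{\mu}\sum_{m\ge r}(2c_0/\mu)^m$. This is still geometric in $r$, since $\mu>2c_0$.

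\emph{Combine with the trivial bound $2\|A_a\|$.} Set $x=2c_0|t|$. For $x\le(r+1)/2$, one has $\sum_{m\ge r}x^m/m!\le 2x^r/r!$. For $x>(r+1)/2$, one has $x^r/r!\ge1$ because $r!\le\left(\frac{r+1}{2}\right)^r$ by AM--GM. Hence $\|A_a(t)-A_a^\Omega(t)\|\le 2\|A_a\|\,(2c_0|t|)^r/r!$ for all $t$. This is the stated bound up to a factor of $2$, which is all the localization argument needs.
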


Fixing the observable $A_a$, this lemma tells us that there exists a constant $v_{LR}$ such that for $r > v_{LR} |t|$, we have 
\begin{equation}
    \|A_a(t) - A_a^\Omega(t)\|   = \bigO (e^{-\mu_{LR} r})
\end{equation}
for some $\mu_{LR} > 0$. The constant $v_{LR}$ is known as the Lieb-Robinson velocity.

% local jump
The Lieb-Robinson bound allows us to localize each jump operator described by the following lemma.

\begin{lemma}
    \label{lemma:appx-jump}
    Define the local operator $K_a^\Omega$ as
    \begin{equation}
    K_a^\Omega \coloneq \int_{-\infty}^{\infty} f(s) A_a^\Omega (s) \, ds \, .
    \end{equation}
    Then for any $\epsilon > 0$, we have $\|K_a - K_a^\Omega \|  < \epsilon$ when
    % \begin{equation}
        $r = \Omega\left(\log \frac{1}{\epsilon} \right)\, .$
    % \end{equation}
\end{lemma}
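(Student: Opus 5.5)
The plan is to bound the difference directly under the integral and control it with two separate estimates, one for short times and one for long times. By linearity,
\begin{equation*}
    \|K_a - K_a^\Omega\| \leq \int_{-\infty}^\infty |f(s)|\, \|A_a(s) - A_a^\Omega(s)\|\, ds .
\end{equation*}
For the integrand we have two bounds available. The Lieb-Robinson lemma gives $\|A_a(s)-A_a^\Omega(s)\| \leq \|A_a\|\,(2c_0|s|)^r/r!$. Since both evolutions are unitary conjugations, we also have the trivial bound $\|A_a(s)-A_a^\Omega(s)\| \leq 2\|A_a\|$. For a Pauli operator, $\|A_a\| = 1$.

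The cleanest route avoids splitting the integral at all. Using Assumption~\ref{assumption:rapid-decay-filter}, $|f(s)|\le Ce^{-\mu|s|}$, and inserting the Lieb-Robinson bound over the whole real line gives
\begin{equation*}
    \|K_a-K_a^\Omega\| \le \frac{2C\|A_a\|(2c_0)^r}{r!}\int_0^\infty s^r e^{-\mu s}\,ds = 2C\|A_a\|\frac{(2c_0)^r}{r!}\cdot\frac{r!}{\mu^{r+1}} = \frac{2C\|A_a\|}{\mu}\left(\frac{2c_0}{\mu}\right)^r .
\end{equation*}
This is where the hypothesis $\mu>2c_0$ enters. The ratio $q\coloneq 2c_0/\mu$ lies in $(0,1)$, so the bound decays geometrically in $r$. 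It is at most $\epsilon$ as soon as
\begin{equation*}
    r \ge \frac{\log\left(2C\|A_a\|/(\mu\epsilon)\right)}{\log(1/q)},
\end{equation*}
that is, $r=\Omega(\log(1/\epsilon))$, with constants depending only on $C,\mu,c_0$. A strict inequality follows by increasing $r$ by one.

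The main thing to check is that the integral actually converges. The polynomial growth $(2c_0|s|)^r$ competes with the exponential decay of $f$, and the Gamma-function identity $\int_0^\infty s^r e^{-\mu s}\,ds = r!/\mu^{r+1}$ makes the factorials cancel exactly. This is precisely why Assumption~\ref{assumption:rapid-decay-filter} requires $\mu>2c_0$, and why this assumption is said to keep the Lieb-Robinson bound from blowing up.

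If one wanted a weaker assumption on $\mu$, the alternative would be to split the integral at $|s|=T$. On $|s|\le T$ one would use the Lieb-Robinson bound with Stirling's approximation $(2c_0T)^r/r! \le (2ec_0T/r)^r$, and on $|s|>T$ one would use the trivial bound together with the tail $\int_T^\infty Ce^{-\mu s}\,ds$. Choosing $T\propto r$ then still gives exponential decay in $r$. Under the stated assumption, however, the one-line Gamma computation suffices, and I would present that.
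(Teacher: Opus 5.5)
Your proof is correct and matches the paper's argument: move the norm inside the integral, apply the Lieb--Robinson bound on the whole real line, and use $\int_0^\infty s^r e^{-\mu s}\,ds = r!/\mu^{r+1}$ to obtain $\frac{2C}{\mu}\left(\frac{2c_0}{\mu}\right)^r$, which is below $\epsilon$ for $r = \Omega(\log(1/\epsilon))$ because $\mu > 2c_0$. Your side remark about splitting at $|s| = T = \kappa r$ with $\kappa < 1/(2ec_0)$ is also sound, and it strengthens the paper's lemma: the short-time piece is at most $\frac{2C}{\mu}(2ec_0\kappa)^r$ and the tail is at most $\frac{4C}{\mu}e^{-\mu\kappa r}$, so any $\mu > 0$ gives exponential decay in $r$ and the condition $\mu > 2c_0$ is not needed.
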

\begin{proof}
    \begin{align}
        \| K_a - K_a^\Omega \|  &= \left\| \int_{-\infty}^\infty f(s) (A_a(s) - A_a^\Omega(s))\, ds \right\|  \\
        &\leq \int_{-\infty}^\infty |f(s) | \cdot \|A_a(s) - A_a^\Omega (s) \|  \, ds \\
        &\leq 2C \int_0^\infty e^{-\mu s} \cdot \frac{(2c_0 s)^r}{r !} \, ds \\
        % &=2C \frac{(2c_0)^r}{r !} \int_0^\infty e^{-\mu s} s^r \, ds \\
        &= 2C \frac{(2c_0)^r}{r!} \frac{\Gamma (r +1)}{\mu^{r + 1}} 
    \end{align}
    where $\Gamma$ denotes the gamma function. Since $\Gamma(r + 1) = r!$ for positive integers $r$, this bound becomes $\|K_a - K_a^\Omega \|  \leq \frac{2C}{\mu} \left(\frac{2c_0}{\mu}\right)^r$. By assumption, we have $2c_0 / \mu < 1$, so as $r$ increases we have that $\|K_a - K_a^\Omega \|  \to 0$. If $r$ is large enough where $\Omega$ becomes the entire lattice, then we have $K_a = K_a^\Omega$.
    To obtain $\|K_a - K_a^\Omega \|  < \epsilon$, take $\frac{2C}{\mu} \left(\frac{2c_0}{\mu}\right)^r < \epsilon$. Thus, it suffices to take 
    % $r = \Omega \left(\log \frac{1}{\epsilon}\right)$\,.
    \begin{equation}
        r > \frac{\log \frac{1}{\epsilon} + \log \frac{2C}{\mu}}{\log \frac{\mu}{2c_0}} = \Omega \left(\log \frac{1}{\epsilon}\right) \, .
    \end{equation}
    
\end{proof}
Without exponential decay of $f$ according to Assumption \ref{assumption:rapid-decay-filter}, this bound possibly may not hold since the integral bound may not converge. In particular, $\alpha$-Gevrey functions such as those considered in \cite{ding_single-ancilla_2024} are subexponential for $\alpha > 1$ and results in a divergent integral bound when paired with the Lieb-Robinson bound. In this case, tighter bounds than Lieb-Robinson are required for localization.

If we localize every jump operator, the resulting Lindbladian and its dynamics remain close to the original as long as the truncation radius $r$ is sufficiently large. This is shown in the following corollary and theorem.

\begin{corollary}
    \label{cor:appx-lindblad}
    Let $\Li^\Omega$ be the new Lindbladian after replacing each jump operator with its local approximation, i.e.
    \begin{equation}
        \Li^\Omega [\rho] = -i[H, \rho] + \sum_a \bigg(\underbrace{K_a^\Omega \rho (K_a^{\Omega })^{\dagger} - \frac{1}{2} \{ (K_a^\Omega)^\dagger K_a^\Omega, \rho\} }_{\coloneq \scriptK_a^\Omega [\rho]} \bigg)\, .
    \end{equation}
    Then $\left \| \Li -\Li^\Omega \right\|_{1\to1} = \bigO(\epsilon)$ when $r = \Omega\left(\log \frac{n}{\epsilon} \right)$ where $\|\cdot \|_{1\to1}$ denotes the superoperator norm defined as
    \begin{equation}
        \| \Phi\| _{1\to1} = \sup_{X\neq 0} \frac{\|\Phi(X) \|_{1}}{\|X\|_1}\, 
    \end{equation}
    and $\|\cdot \|_1$ denotes the trace norm.   
\end{corollary}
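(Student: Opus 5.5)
The plan is to reduce the superoperator bound to a sum of per-jump-operator bounds and then invoke Lemma~\ref{lemma:appx-jump} with a rescaled precision. Since the coherent parts of $\Li$ and $\Li^\Omega$ coincide, we have
\begin{equation}
    \Li - \Li^\Omega = \sum_a \left(\scriptK_a - \scriptK_a^\Omega\right),
\end{equation}
where $\scriptK_a[\rho] = K_a\rho K_a^\dagger - \frac{1}{2}\{K_a^\dagger K_a,\rho\}$. The triangle inequality then gives $\|\Li-\Li^\Omega\|_{1\to1} \le \sum_a \|\scriptK_a - \scriptK_a^\Omega\|_{1\to1}$. There are $\bigO(n)$ jump operators, so it suffices to show that each term is $\bigO(\epsilon/n)$.

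For a single $a$, write $\Delta_a \coloneq K_a - K_a^\Omega$ and expand the differences with the telescoping identities
\begin{equation}
    K\rho K^\dagger - K'\rho K'^\dagger = (K-K')\rho K^\dagger + K'\rho (K-K')^\dagger
\end{equation}
and
\begin{equation}
    K^\dagger K - K'^\dagger K' = (K-K')^\dagger K + K'^\dagger (K-K').
\end{equation}
Hölder's inequality $\|AXB\|_1 \le \|A\|\,\|X\|_1\,\|B\|$ then yields
\begin{equation}
    \|\scriptK_a - \scriptK_a^\Omega\|_{1\to1} \le 2\|\Delta_a\|\left(\|K_a\| + \|K_a^\Omega\|\right).
\end{equation}

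Next I bound the operator norms uniformly in $a$ and $n$. Both $A_a(s)$ and $A_a^\Omega(s)$ are unitary conjugations of a Pauli operator, so each has norm $1$. Hence
\begin{equation}
    \|K_a\|,\ \|K_a^\Omega\| \le \int |f(s)|\,ds \le \frac{2C}{\mu}
\end{equation}
by Assumption~\ref{assumption:rapid-decay-filter}, a constant independent of $n$. Each term is therefore at most $\frac{8C}{\mu}\|\Delta_a\|$.

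Finally, apply Lemma~\ref{lemma:appx-jump} with precision $\epsilon' = \epsilon/n$. Its proof gives the explicit bound $\|\Delta_a\| \le \frac{2C}{\mu}(2c_0/\mu)^r$, so $\|\Delta_a\| < \epsilon/n$ once
\begin{equation}
    r > \frac{\log\frac{n}{\epsilon} + \log\frac{2C}{\mu}}{\log\frac{\mu}{2c_0}} = \Omega\left(\log\frac{n}{\epsilon}\right).
\end{equation}
Summing over the $\bigO(n)$ indices $a$ gives $\|\Li-\Li^\Omega\|_{1\to1} = \bigO(\epsilon)$.

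The only step needing care is the expansion of the $\scriptK_a$ differences: one must keep the dependence on $\|K_a\|$ and $\|K_a^\Omega\|$ explicit and check that these are $n$-independent constants, since a hidden factor of $n$ there would spoil the $\log(n/\epsilon)$ scaling of $r$. The rest is direct.
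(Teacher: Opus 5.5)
Your proof is correct and follows the paper's argument: the triangle inequality over the jump operators, a per-term perturbation bound via Hölder (the paper's Lemmas~\ref{lemma:jump1} and \ref{lemma:jump2}), and Lemma~\ref{lemma:appx-jump} applied at precision $\epsilon/n$. Your telescoping identities are a slightly cleaner version of those lemmas, since they avoid the $\epsilon^2$ term. You also make explicit the $n$-independent bound $\|K_a\|,\|K_a^\Omega\|\le 2C/\mu$, which the paper's $\bigO(\epsilon)$ leaves implicit.
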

\begin{proof}
   By triangle inequality, we have
    \begin{equation}
        \left\| \Li- \Li^\Omega \right\|_{1\to1} \leq \sum_a \left( \left\| K_a \cdot K_a^\dagger - K_a^\Omega \cdot (K_a^\Omega)^\dagger \right\|_{1\to1} 
        + \frac{1}{2} \left\| \{K_a^\dagger K_a, \cdot \} - \{(K_a^\Omega)^\dagger K_a^\Omega, \cdot \}\right\|_{1\to1} \right)\, .
     \end{equation}
    By Lemma~\ref{lemma:appx-jump} we have $\| K_a - K_a^\Omega\|  < \epsilon / n$. Thus by Lemma~\ref{lemma:jump1} and \ref{lemma:jump2}, we have 
    \begin{equation}
        \left\|\Li -  \Li^\Omega \right\|_{1\to1} \leq \sum_a \bigO\left(\epsilon /n\right)  = \bigO \left(\epsilon\right)\, .
    \end{equation}
\end{proof}

% theorem: overall dynamics is approximated
\begin{theorem}[Localized Lindblad Dynamics]
    \label{thm:approximate-lindblad}
    For any $\epsilon > 0$, we have 
    \begin{equation}
        \left\|e^{\Li t} - e^{\Li^\Omega t} \right\|_{1\to1} \leq \epsilon\qquad  \text{ when }\qquad
        r =  \Omega \left( \log \left(\frac{nt}{\epsilon}\right)\right)\, .
    \end{equation} 
\end{theorem}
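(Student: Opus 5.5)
The plan is to compare the two semigroups with a Duhamel (variation-of-parameters) identity and then apply Corollary~\ref{cor:appx-lindblad} with a rescaled accuracy parameter. Since $\Li$ and $\Li^\Omega$ are both time-independent generators, we have
\begin{equation}
    e^{\Li t} - e^{\Li^\Omega t} = \int_0^t e^{\Li (t-s)} \left(\Li - \Li^\Omega\right) e^{\Li^\Omega s}\, ds \, .
\end{equation}
This follows by differentiating $s \mapsto e^{\Li(t-s)} e^{\Li^\Omega s}$ and integrating from $0$ to $t$.

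The key structural input is that both $\Li$ and $\Li^\Omega$ are of Lindblad form. The localized jump operators $K_a^\Omega$ are legitimate operators, and the Hamiltonian part is unchanged, so $e^{\Li u}$ and $e^{\Li^\Omega u}$ are CPTP maps for every $u \ge 0$. CPTP maps are contractive in trace norm. More precisely, $\|\Phi\|_{1\to1} \le 1$ for positive trace-preserving $\Phi$; on Hermitian inputs this is immediate, and for general inputs one can use the diamond norm bound, which is $1$ for CPTP maps and dominates $\|\cdot\|_{1\to1}$.

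Using submultiplicativity of $\|\cdot\|_{1\to1}$ inside the integral then gives
\begin{equation}
    \left\| e^{\Li t} - e^{\Li^\Omega t}\right\|_{1\to1} \le \int_0^t \left\|e^{\Li(t-s)}\right\|_{1\to1} \left\|\Li - \Li^\Omega\right\|_{1\to1} \left\|e^{\Li^\Omega s}\right\|_{1\to1} ds \le t \left\|\Li - \Li^\Omega\right\|_{1\to1} \, .
\end{equation}

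It remains to invoke Corollary~\ref{cor:appx-lindblad} with target accuracy $\epsilon' = \epsilon/(c\,t)$, where $c$ is the implicit constant in its $\bigO(\epsilon)$. This yields $\|\Li - \Li^\Omega\|_{1\to1} \le \epsilon/t$ provided $r = \Omega(\log(n/\epsilon')) = \Omega(\log(nt/\epsilon))$. Tracing the constants back through Lemma~\ref{lemma:appx-jump}, the requirement is
\begin{equation}
    r > \frac{\log \frac{c\,n t}{\epsilon} + \log\frac{2C}{\mu}}{\log\frac{\mu}{2c_0}} \, ,
\end{equation}
which only adds a constant to the $\log$. Combining with the Duhamel bound gives the stated $\epsilon$.

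The step needing the most care is the contractivity claim $\|e^{\Li u}\|_{1\to1}\le 1$ on all (possibly non-Hermitian) operators $X$. I would handle it by noting that $\|\Phi\|_{1\to1}\le\|\Phi\|_\diamond = 1$ for CPTP $\Phi$. Alternatively, one can restrict to density-matrix inputs, where the claim is immediate; this is the only setting used algorithmically.

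A secondary check is that Corollary~\ref{cor:appx-lindblad} holds uniformly. This requires the $\bigO(\epsilon/n)$ bounds from Lemmas~\ref{lemma:jump1} and~\ref{lemma:jump2} to depend on $\|K_a\|$ and $\|K_a^\Omega\|$ only through constants. That is true because $\|K_a\|, \|K_a^\Omega\| \le \|A_a\|\int|f| \le 2C\|A_a\|/\mu$.
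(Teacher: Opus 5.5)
Your proposal is correct and follows the paper's proof. The paper likewise gets $\|e^{\Li t} - e^{\Li^\Omega t}\|_{1\to1} \le t\,\|\Li - \Li^\Omega\|_{1\to1}$ from Duhamel's identity and trace-norm contractivity of the two CPTP semigroups (Lemma~\ref{lemma:duhamel-lindbladian-evol}), then applies Corollary~\ref{cor:appx-lindblad} with the accuracy rescaled by $t$. Your diamond-norm justification of contractivity on non-Hermitian inputs, and your check that the constants are uniform in $a$, are slightly more careful than the paper's bare appeal to $\|e^{x\Li}\|_{1\to1}=1$.
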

\begin{proof}[Proof]
    Lemma~\ref{lemma:duhamel-lindbladian-evol} states that $\left\| e^{\Li t} - e^{\Li^\Omega t} \right\|_{1\to1} \leq \left\|\Li - \Li^\Omega \right\|_{1\to1} t$. To get this to $\bigO(\epsilon)$, it suffices to choose $r = \Omega (\log (nt / \epsilon)) $ by Corollary~\ref{cor:appx-lindblad}.

\end{proof}

Since the localized dynamics are close to the original, we can make the following statement on its mixing time and stationary states.

\begin{theorem}[Stationary States and Mixing Time of Localized Lindbladian]
    \label{thm:approximate-stationary}
    Let $t_{\mathrm{mix}}, t_{\mathrm{mix}}'$ be the mixing times of $\Li, \Li^\Omega$, respectively. Assume that $\Li, \Li^\Omega$ have unique stationary states denoted $\sigma, \sigma'$ respectively. Then for $r = \Omega \left(\log \left(\frac{nt_{\mathrm{mix}} (\epsilon)}{\epsilon}\right)\right)$, we have
    \begin{equation}
        \|\sigma - \sigma' \|_1 = \bigO(\epsilon)
    \end{equation}
    and
    \begin{equation}
        t_{\mathrm{mix}}(4\epsilon) \leq t_{\mathrm{mix}}'
        (\epsilon) \leq  t_{\mathrm{mix}}(\epsilon/4) \, .
    \end{equation}
\end{theorem}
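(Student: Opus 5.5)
We will combine Theorem~\ref{thm:approximate-lindblad} with the definition of mixing time, $t_{\mathrm{mix}}(\epsilon) \coloneq \inf\{t : \sup_{\rho}\|e^{\Li t}[\rho] - \sigma\|_1 \le \epsilon\}$, and analogously for $\Li^\Omega$ with $\sigma'$. The plan has three steps. First, fix $T \coloneq t_{\mathrm{mix}}(\epsilon/4)$ and choose the radius
\[
r = \Omega\!\left(\log \frac{nT}{\epsilon}\right).
\]
Then Theorem~\ref{thm:approximate-lindblad} gives $\|e^{\Li t} - e^{\Li^\Omega t}\|_{1\to1} \le \epsilon/4$ for all $t \le T$. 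Note that the bound $\|\Li-\Li^\Omega\|_{1\to1}\, t$ from Lemma~\ref{lemma:duhamel-lindbladian-evol} is monotone in $t$, so the same $r$ works uniformly on $[0,T]$.

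\textbf{Stationary states.} Apply the triangle inequality at time $T$, using $e^{\Li^\Omega T}[\sigma'] = \sigma'$:
\[
\|\sigma - \sigma'\|_1 \le \|\sigma - e^{\Li T}[\sigma']\|_1 + \|e^{\Li T}[\sigma'] - e^{\Li^\Omega T}[\sigma']\|_1 \le \epsilon/4 + \epsilon/4 .
\]
The first term is at most $\epsilon/4$ because $T$ is the mixing time of $\Li$. Hence $\|\sigma - \sigma'\|_1 = \bigO(\epsilon)$.

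\textbf{Upper bound on $t'_{\mathrm{mix}}$.} For any $\rho$ and any $t \ge T$, bound
\[
\|e^{\Li^\Omega t}[\rho] - \sigma'\|_1 \le \|e^{\Li^\Omega t}[\rho] - e^{\Li t}[\rho]\|_1 + \|e^{\Li t}[\rho]-\sigma\|_1 + \|\sigma-\sigma'\|_1 .
\]
Each of the three terms is at most $\epsilon/4$, giving a total of $3\epsilon/4 \le \epsilon$. Strictly this is needed only at $t = T$, which suffices if mixing time is defined as the first hitting time. If the definition requires the bound for all $t \ge T$, we instead use contractivity of CPTP maps: the distance to the fixed point $\sigma'$ is nonincreasing in $t$, since $e^{\Li^\Omega s}$ is trace-norm contractive and fixes $\sigma'$. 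Thus $t'_{\mathrm{mix}}(\epsilon) \le T = t_{\mathrm{mix}}(\epsilon/4)$.

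\textbf{Lower bound on $t'_{\mathrm{mix}}$.} Symmetrically, set $T' = t'_{\mathrm{mix}}(\epsilon) \le T$. Because $T' \le T$, the chosen $r$ already controls the evolution difference at $T'$ by $\epsilon/4$. Then
\[
\|e^{\Li T'}[\rho] - \sigma\|_1 \le \epsilon/4 + \epsilon + \epsilon/2 \le 4\epsilon ,
\]
where the last term uses $\|\sigma - \sigma'\|_1 \le \epsilon/2$. Monotonicity for $\Li$ then gives $t_{\mathrm{mix}}(4\epsilon) \le T'$.

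\textbf{Main obstacle.} The delicate step is the circularity in the radius choice. The radius must be fixed in terms of $t_{\mathrm{mix}}(\epsilon/4)$ of the \emph{original} dynamics, so that one single $r$ controls every time used in the argument. This is exactly why the statement's $r$ depends on $t_{\mathrm{mix}}$, up to constant rescaling of $\epsilon$ absorbed into the $\Omega(\cdot)$. It is handled by the ordering above: the upper bound $T' \le T$ is proved first, and only then is it used for the lower bound. Monotonicity of distance to the fixed point, which follows from contractivity, is the one auxiliary fact needed beyond Theorem~\ref{thm:approximate-lindblad}.
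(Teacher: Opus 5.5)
Your proof is correct and follows the paper's argument: it uses the same two triangle-inequality splits, combining Theorem~\ref{thm:approximate-lindblad} with the definition of mixing time. In addition, your use of $t'_{\mathrm{mix}}(\epsilon)\le t_{\mathrm{mix}}(\epsilon/4)$ to control the localization error in the lower bound makes rigorous a step the paper only asserts ``by symmetry.'' The one loose remark, which the paper shares, is replacing $t_{\mathrm{mix}}(\epsilon/4)$ by $t_{\mathrm{mix}}(\epsilon)$ inside the logarithm; this is justified by the submultiplicativity $d(t+s)\le d(t)\,d(s)$, where $d(t)=\sup_\rho\|e^{\Li t}[\rho]-\sigma\|_1$, which gives $t_{\mathrm{mix}}(\epsilon/4)\le 2\,t_{\mathrm{mix}}(\epsilon)$ for $\epsilon\le 1/4$.
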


\begin{proof}
    Let $t_{\mathrm{mix}}$ denote $t_{\mathrm{mix}}(\epsilon)$. Then
    \begin{align}
        \left\| \sigma - \sigma' \right\|_1   &= \left\| e^{\Li t_{\mathrm{mix}}} [\sigma] - e^{\Li^\Omega t_{\mathrm{mix}}} [\sigma'] \right\|_1 \\
        &\leq \left\|e^{\Li t_{\mathrm{mix}}} [\sigma] - 
    e^{\Li t_{\mathrm{mix}}}[\sigma'] \right\|_1 + \left\|e^{\Li t_{\mathrm{mix}}}[ \sigma'] - e^{\Li^\Omega t_{\mathrm{mix}}} [ \sigma'] \right\|_1 
    \end{align}
    Each term is bounded by $\epsilon$ by definition of mixing time and Theorem~\ref{thm:approximate-lindblad}. Altogether, it is bounded by $2\epsilon$.

    Let $t = t_{\mathrm{mix}}(\eta)$ for some $\eta > 0$. Then for any density operator $\rho$,
    \begin{align}
        \left \|e^{\Li^\Omega t} [ \rho] - \sigma' \right\|_1 &\leq \left \| e^{\Li^\Omega t}[\rho] - e^{\Li t} [\rho]\right\|_1 + \left\| e^{\Li t}[\rho] - \sigma\right \|_1 + \|\sigma - \sigma'\|_1 \\
        &= \epsilon + \eta + 2\epsilon \\
        &= \eta + 3\epsilon.
    \end{align}
    Choose $\eta = \epsilon$, then by definition of mixing time we have $t_{\mathrm{mix}}' (4 \epsilon) \leq t_{\mathrm{mix}}(\epsilon)$. By symmetry we also have $t_{\mathrm{mix}} (4 \epsilon) \leq t_{\mathrm{mix}}'(\epsilon)$.

\end{proof}

\subsection{Parallelization}
\label{sec:parallelization}
% theorem: overall gate count
Theorem~\ref{thm:approximate-lindblad} and \ref{thm:approximate-stationary} establishes that the radius $r$ need only grow logarithmically in the system size $n$.
In this case, each jump operator acts on a set of $\bigO(r^D)$ local qubits. As shown in Fig.~\ref{fig:parallel-jumps}, jump operators that are $\Omega(r)$ qubits apart can be applied simultaneously.
Thus in one Trotter step, every jump operator can be applied in $\bigO(r^D)$ layers, each containing $\bigO\left( \frac{n}{r^D}\right)$ jump operators. 
The following theorem formalizes this parallelization scheme and quantifies its total circuit depth. We make a distinction between query depth and circuit depth, where the former is the number of layers of queries to $\Prep$ and $\Sel$ and the latter is the number layers of gates.

% % figure: parallel layering
% \begin{figure}
%     \centering
% \textbf{(a)}\\
% \vspace{0.5em}
% \includegraphics[width=0.45\linewidth, trim=0 0.2in 0 0]{fig/sequential-jumps.png}\\
% \vspace{1em}
% \textbf{(b)}\\
% \vspace{0.5em}
% \includegraphics[width=0.35\linewidth, trim=0 0.2in 0 0]{fig/parallel-jumps.png}
%     \caption{Circuit diagrams for $\prod_a e^{\mathcal{L}_a\tau}$, where each block applies the jump operator $K_a$ coupled to qubit $a$. Each block uses an ancilla qubit which is omitted from the figure for clarity. The shading indicates the degree to which each qubit is affected, with darker shading corresponding to a stronger effect and lighter shading to a weaker effect. While illustrated here for a 1D system, the brickwork pattern can be extended to higher dimensions. \textbf{(a)} The sequential implementation, where all blocks can reuse the same ancilla if it is measured and reset after each block. \textbf{(b)} The parallel implementation, applying $k=3$ layers each containing $n/k = 3$ blocks, thus requiring $n/k$ ancilla qubits in total. Although the individual blocks do not commute with one another, their order can be adjusted by reordering the summation $\sum_a \mathcal{L}_a$ prior to Trotterization.}
%     \label{fig:parallel-jumps}
% \end{figure}

\begin{figure}
    \centering
        
    \begin{minipage}{0.6\linewidth}
        \centering
        \textbf{(a)}\\[0.5em]
        \includegraphics[height=3cm]{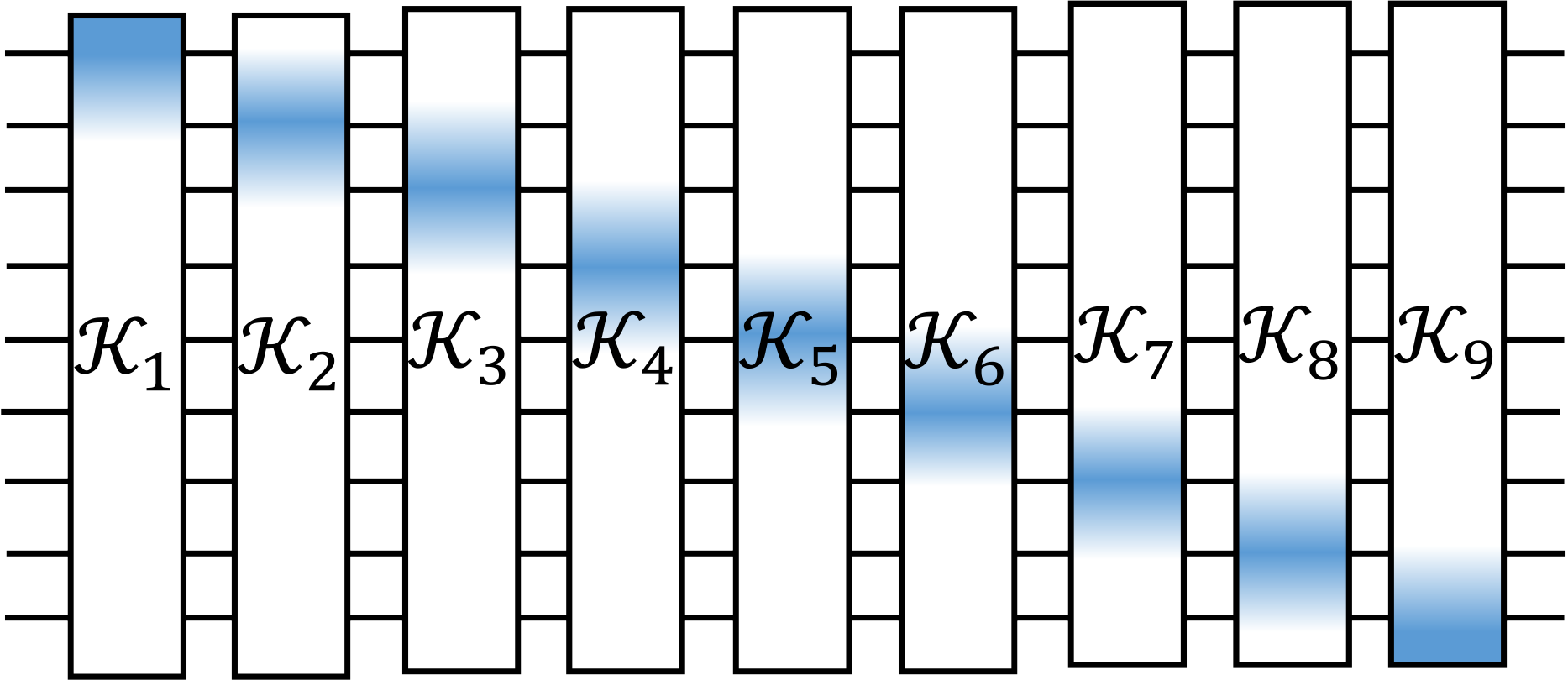}
    \end{minipage}%
    \begin{minipage}{0.4\linewidth}
        \centering
        \textbf{(b)}\\[0.5em]
        \includegraphics[height=3cm]{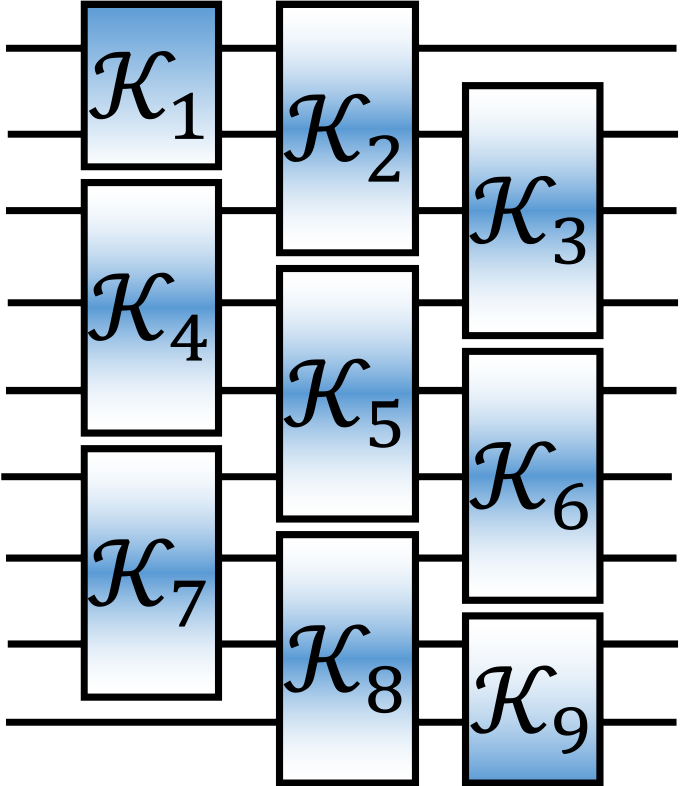}
    \end{minipage}
    
    \caption{Circuit diagrams of $\prod_a e^{\scriptK_a\tau}$ for a 1D system, where each block applies the jump operator $\scriptK_a$ coupled to qubit $a$. Each block uses an ancilla qubit which is omitted for clarity. The shading indicates the degree to which each qubit is affected, with darker shading corresponding to a stronger effect and lighter shading to a weaker effect. \textbf{(a)} The sequential implementation, where all blocks can reuse the same ancilla if it is measured and reset after each block. \textbf{(b)} The parallel implementation. Each layer consists of 3 blocks, thus 3 ancilla qubits are required to apply them simultaneously. Although the individual blocks do not commute with one another, their order can be adjusted by reordering the summation $\sum_a \scriptK_a$ prior to Trotterization.}
    \label{fig:parallel-jumps}
\end{figure}
 
\begin{theorem}[Parallel Bulk Dissipation]
% \textcolor{blue}{NW: The term simulation is vague here, I think you mean a more specific statement about the 1-1 norm.}
    Given $\bigO\left(\frac{n}{\polylog (nt / \epsilon)}\right)$ ancilla qubits that can be measured and discarded, $e^{\Li t}$ can be approximated to precision $\epsilon$ in superoperator 1-1 norm with a query and circuit depth of
    \begin{equation}
    \bigO \left( \|f\|_{L^1}^4 \left( \frac{J}{\mu} + \frac{1}{\mu \hat \mu}\right) \frac{nt^2}{\epsilon} \polylog \left( \frac{nt}{\epsilon}\right)\right)\, 
    \end{equation}
    where we hide polylogarithmic factors in $C, \mu, \hat C, \hat \mu$.
\end{theorem}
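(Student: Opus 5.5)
We will combine Theorem~\ref{thm:approximate-lindblad} with a Trotterized, Stinespring-dilated simulation of $\Li^\Omega$, allotting an $\epsilon/3$ budget to each of three error sources: localization, Trotterization, and quadrature/implementation of each $K_a^\Omega$.

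\textbf{Step 1 (localization).} First, we take $r = \Theta(\log(nt/\epsilon))$ so that Theorem~\ref{thm:approximate-lindblad} gives $\|e^{\Li t}-e^{\Li^\Omega t}\|_{1\to1}\le \epsilon/3$. Each $K_a^\Omega$ then acts only on the $\bigO(r^D)$ qubits within distance $r$ of $a$.

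\textbf{Step 2 (Trotterization).} Next, we approximate
\begin{equation*}
e^{\Li^\Omega t}\approx\Bigl(e^{\scriptH\tau}\prod_a e^{\scriptK_a^\Omega\tau}\Bigr)^{t/\tau}.
\end{equation*}
The first-order Trotter error is bounded by commutator and norm bounds. Each $\|\scriptK^\Omega_a\|_{1\to1}\le 2\|K_a^\Omega\|^2\le 2\|f\|_{L^1}^2$, since $\|K_a^\Omega\|\le \|f\|_{L^1}\|A_a\|$.

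Because $\scriptK_a^\Omega$ and $\scriptK_b^\Omega$ commute unless their supports overlap, and each overlaps only $\bigO(r^D)$ others, the sum of commutator norms is $\bigO(n\, r^D \|f\|_{L^1}^4)$ rather than $n^2$. The commutators $[\scriptH,\scriptK_a^\Omega]$ contribute a further $\bigO(n r^D J\|f\|_{L^1}^2)$.

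The total Trotter error is therefore roughly $\bigO(t\tau\, n\,\polylog\cdot\|f\|_{L^1}^4 J)$. This forces $\tau\sim \epsilon/(n t\,\polylog)$, giving $t/\tau = \bigO(n t^2\polylog/\epsilon)$ Trotter steps. I expect the $1/\mu$ factors to enter through $r$ (via $\log(\mu/2c_0)$), and the quadrature contribution to carry the $1/(\mu\hat\mu)$ term.

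\textbf{Step 3 (implementing one step).} The coherent part $e^{-iH\tau}$ of one step is realized by a local Hamiltonian simulation of constant depth, for example a brickwork of local evolutions on blocks.

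For each $e^{\scriptK_a^\Omega\tau}$ we use Stinespring dilation with one ancilla, as in \cite{ding_single-ancilla_2024}. We evolve under the dilated Hamiltonian $\tilde K_a=\sigma^+\otimes K_a^\Omega + \mathrm{h.c.}$ for time $\sqrt\tau$ and discard the ancilla, which agrees with $e^{\scriptK_a^\Omega\tau}$ up to $\bigO(\tau^2)$ error.

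$K_a^\Omega$ itself is discretized by truncating the integral to $|s|\le T=\bigO(\mu^{-1}\log(1/\epsilon'))$ using Assumption~\ref{assumption:rapid-decay-filter}. The truncated integral is then replaced by a Poisson-summation quadrature with step $h=\bigO(\hat\mu)$, justified by Assumption~\ref{assumption:rapid-decay-filter-fourier}. This yields $M=\bigO(T/h)=\bigO(\log(1/\epsilon')/(\mu\hat\mu))$ terms $f(s_j)A^\Omega_a(s_j)$.

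Each $A^\Omega_a(s_j)$ needs a local Hamiltonian simulation of $H^\Omega_a$ for time $\le T$. Using $\Prep/\Sel$ built from the $h_X$ on $\Omega$, this costs query/gate depth $\polylog(nt/\epsilon)\cdot (JT+\log)$. The resulting linear combination is assembled by LCU, and the dilated evolution is simulated by QSP-style methods with polylog overhead.

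\textbf{Step 4 (parallelization and counting).} Finally, we colour the lattice so that sites at distance $\ge 2r+1$ share a colour, using $\bigO(r^D)$ colours. Within each colour, the dilated channels act on disjoint qubits and ancillas and can run simultaneously, which needs $\bigO(n/r^D)=\bigO(n/\polylog)$ ancillas. The depth per Trotter step is then $\bigO(r^D)\times$ (cost of one jump), i.e. $\polylog(nt/\epsilon)\,(J/\mu+1/(\mu\hat\mu))$. Multiplying by the number of Trotter steps gives the claimed bound.

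\textbf{Main obstacle.} I expect Step 2 to be the hardest: obtaining Trotter error linear rather than quadratic in $n$ requires exploiting locality of the commutators among non-commuting overlapping $\scriptK^\Omega_a$. It also requires tracking how the per-jump quadrature and dilation errors, $\bigO(\epsilon/(\text{number of jumps}\cdot\text{steps}))$, feed only logarithmically into the cost.
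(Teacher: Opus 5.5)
The outer argument matches the paper: localization with $r=\Theta(\log(nt/\epsilon))$, first-order Trotter with only $\bigO(nr^D)$ nonzero commutator pairs, Stinespring dilation with $\bigO(\|f\|_{L^1}^4\tau^2)$ error, $t/\tau=\bigO(nt^2\polylog/\epsilon)$ steps, and an $\bigO(r^D)$-colouring into layers using $\bigO(n/r^D)$ ancillas. Step 2 goes through as you sketch it. The step that fails as written is Step 3, which is where the prefactor $J/\mu+1/(\mu\hat\mu)$ comes from.

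First, the quadrature step $h=\bigO(\hat\mu)$ is too large. Poisson summation gives an aliasing error $\sum_{k\neq 0}|\hat f(\omega+2\pi k/h)|$ at each Bohr frequency $\omega$ of $H_a^\Omega$. These frequencies range over $|\omega|\le 2\|H_a^\Omega\|=\bigO(Jr^D)$, so you need $2\pi/h\gtrsim 2\|H_a^\Omega\|+\hat\mu^{-1}\log(\hat C/\epsilon')$. The paper adds a further $r^D/\hat\mu$ from a union bound over the $4^{\bigO(r^D)}$ matrix elements. Hence $M=s_M/h$ carries an extra factor $\hat\mu Jr^D+\log(\hat C/\epsilon')+r^D$, and part of the $J/\mu$ term actually comes from here. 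Relatedly, $1/\mu$ does not enter through $r$, whose $\mu$-dependence $1/\log(\mu/2c_0)$ is hidden. It enters through $s_M\sim\mu^{-1}\log(C/\mu\epsilon')$ multiplying the LCU norm $\bigO(Jr^D)$ of $H_a^\Omega$.

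Second, and more seriously, you never say how the $M$ Heisenberg-evolved terms are applied coherently. Suppose the LCU select applies each $A_a^\Omega(s_j)$ through its own simulation of time up to $T$. Then one jump costs $\bigO(M\cdot Jr^DT)$ queries. That is a product of the truncation and quadrature factors, not the sum $J/\mu+1/(\mu\hat\mu)$ in the theorem. The $n,t,\epsilon$ scaling survives, but the stated dependence on $J,\mu,\hat\mu$ does not. Your asserted per-jump cost is therefore not derived.

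The missing idea is to make the total simulated time $\bigO(s_M)$, independent of $M$. The paper does this in three moves:
\begin{itemize}
\item It splits $e^{-i(\tilde K_a^\Omega)'\sqrt\tau}$ with a second-order product formula over $\ell$.
\item It writes each factor as $e^{iH_a^\Omega s_\ell}e^{-i\frac{\sqrt\tau}{2}\sigma_\ell\otimes A_a}e^{-iH_a^\Omega s_\ell}$.
\item It telescopes neighbouring conjugations into $e^{\pm iH_a^\Omega\Delta\tau}$.
\end{itemize}
This leaves $\bigO(1)$ evolutions of time $s_M$ plus $\bigO(M)$ evolutions of time $\Delta\tau$, each costing $\bigO(\log(M/\epsilon))$. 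The product-formula error is $\bigO(\tau^2\|f\|_{L^1}^4)$ by the partial-trace bound of \cite{ding_single-ancilla_2024}.

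Your LCU+QSP route can be repaired the same way. Build $\sum_j|j\rangle\langle j|\otimes e^{-iH_a^\Omega s_j}$ from evolutions of times $2^k h$ controlled on the bits of $j$, for total time $\bigO(T)$. Then $M$ appears only in the $\bigO(M)$-gate preparation of the weights $f(s_j)h$. This recovers the stated bound, at the price of a few extra ancillas per parallel jump.

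A smaller point: if you keep $e^{\scriptH\tau}$ as one factor and realize it by a brickwork, you must add that inner Trotter error, $\bigO(nJ^2\tau^2)$ per step. Alternatively, split $\scriptH=\sum_X\scriptH_X$ in the outer product as the paper does, so those commutators enter $\Li^\Omega_{\mathrm{comm}}$.
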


\begin{proof}
    Fix $\epsilon > 0$ and approximate $\Li$ with the localized $\Li^\Omega$, then write
    \begin{equation}
        \Li^\Omega  = \sum_{X\subseteq \Lambda} \scriptH_X + \sum_{a} \scriptK_a^\Omega
    \end{equation}
    where $\scriptH_X [\rho] \coloneq -i[h_X, \rho]$. Choose the localization radius as $r = \Theta(\log (nt/\epsilon))$ so that the localization error is $\bigO(\epsilon)$ by Theorem~\ref{thm:approximate-lindblad}.
    
    Consider a small time step $\tau$, then use the first-order Trotter formula which is given by
    \begin{equation}
        \bigg\| e^{\Li^\Omega \tau} - \underbrace{\prod_{X \subseteq \Lambda} e^{\scriptH_X \tau}\prod_{a} e^{\scriptK_a^\Omega \tau}}_{\coloneq S(\tau)} \bigg\|_{1\to1} = \bigO \left(  \left\| \Li^\Omega_{\mathrm{comm}}\right\|\tau^2 \right)
        \label{eq:first-order-trotter-error}
    \end{equation}
    where
    \begin{equation}
    \Li^\Omega_{\mathrm{comm}} \coloneq \sum_{\{X, Y\}} [\scriptH_X, \scriptH_Y] + \sum_{X, a} [\scriptH_X, \scriptK_a^\Omega] + \sum_{a<b} [\scriptK_a^\Omega, \scriptK_b^\Omega]\, .
    \label{eq:commutator-error}
    \end{equation}

    The dissipative terms $e^{\scriptK^\Omega_a \tau}$ are implemented by Stinespring dilation with an ancilla qubit set to $|0\rangle$, given by
    \begin{equation}
        \bigg\|e^{\scriptK_a^\Omega \tau } [\rho] - \underbrace{\Tr_a \left(e^{-i\tilde K_a^\Omega \sqrt{\tau}}\, [|0\rangle \langle 0| \otimes \rho]\, e^{i \tilde K_a^\Omega \sqrt{\tau}}\right)}_{\coloneq \mathcal{E}_a[\rho]}\bigg\|_{1\to 1} = \bigO \left(\|K_a^\Omega \|^4\,  \tau^2\right)
    \end{equation}
    where $\tilde K_a^\Omega$ is the dilated unitary of $K_a^\Omega$ given by
    \begin{equation}
        \tilde K_a^\Omega\coloneq \begin{pmatrix}
            0 & (K_a ^\Omega)^\dagger \\
            K_a^\Omega & 0
        \end{pmatrix}
    \end{equation}

    Let $\|\cdot\|_1$ be shorthand for the superoperator 1-1 norm $\| \cdot \|_{1\to1}$. Then the implementation for one Trotter step is given by
    \begin{equation}
        \bigg\| e^{\Li^\Omega \tau} - \tilde S(\tau) \bigg\|_1  \leq \left\| e^{\Li^\Omega \tau} - S(\tau)\right\|_1 + \left\| S(\tau) - \tilde S(\tau) \right\|_1\, .
    \end{equation}
    where $\tilde S(\tau)$ is the implementation of $\prod_{X \subseteq \Lambda} e^{\scriptH_X \tau}\prod_{a} \Eo_a$ into Prep and Sel oracles as described in Appendix~\ref{sec:gate-implementation}.

    The first term is the error given in Eq.~\ref{eq:first-order-trotter-error}. It can be bounded by computing the commutator error $\|\Li_{\mathrm{comm}}^\Omega\|$. Because each $\scriptH_X$ and $\scriptK_a^\Omega$ are localized, only the pairwise intersecting terms in Eq.~\ref{eq:commutator-error} are nonzero. Together with the triangle inequality, this gives us
    \begin{equation}
        \| \Li_{\mathrm{comm}}^\Omega\| = \bigO\left( nk^D \|h_X\|^2 + n\left(k+r\right)^D \|h_X\| \left\|K_a^\Omega\right\|^2 + nr^D \left\|K_a^\Omega \right\|^4\right)
    \end{equation}
    where we utilized Lemma~\ref{lemma:commutator-bound}. To simplify notation we assumed that each term is uniform that $\|h_X\|$ is the same for all $X$ and $\|K_a^\Omega\|$ is the same for all $a$, but this assumption can be removed by bounding by the supremum. By Lemma~\ref{lemma:jump-operator-norm-bound}, we have $\|\Li_{\mathrm{comm}}^\Omega\| = \bigO(nr^D \|f\|_{L^1}^4)$.
    
    The second term is the error from Stinespring dilation and query implementation of the jump operators. By Lemma~3 in \cite{ding_single-ancilla_2024}, we have that
    \begin{equation}
        \left\| \mathcal{E}_a[\rho] - e^{\scriptK_a^\Omega\tau}[\rho]  \right \|_1 = \bigO\left(\left\|K_a^\Omega \right\|^4\tau^2\right)  = \bigO \left( \|f \|_{L^1}^4 \tau^2\right)\, .
    \end{equation}
    Using Lemma~\ref{lemma:gates-h} and Lemma~\ref{lemma:stinespring-dilation-gates} to approximate $e^{\scriptH_X \tau}$ and $\Eo_a$ with gates, we have that $S(\tau)$ is approximated to accuracy
    \begin{equation}
        \|S(\tau) - \tilde S(\tau)\|_1 = \bigO \left(\|f\|_{L^1} \epsilon' n\tau + \|f\|_{L^1}^4 n\tau^2\right) \, 
    \end{equation}
    where $\epsilon'$ is the quadrature error in Lemma~\ref{lemma:quadrature-error}.
    
    Altogether, the error for one simulation step is
    \begin{equation}
         \bigO\left( \|f\|_{L^1}^4 n r^D \tau^2 + \|f\|_{L^1} \epsilon' n\tau\right) \, .
    \end{equation}
    Repeating this for $t/ \tau$ steps, we have that $e^{\Li^\Omega t}$ is approximated to accuracy
    \begin{equation}
        \frac{t}{\tau} \cdot \bigO\left( \|f\|_{L^1}^4 n r^D \tau^2 + \|f\|_{L^1} \epsilon' n\tau\right) = \bigO(\|f\|_{L^1}^4 n r^D t \tau + \|f\|_{L^1} nt \epsilon ') \, .
    \end{equation}
    % \begin{equation}
    % \begin{split}
    %     \left\| e^{\Li^\Omega t} - \tilde S(\tau)^{t/ \tau} \right \|_1 &= \frac{t}{\tau} \cdot \bigO\left( \|f\|_{L^1}^4 n r^D \tau^2\right) \\
    %     &= \bigO\left( \|f\|_{L^1}^4 n r^D t \tau\right)\, .
    % \end{split}
    % \end{equation}
    For this error to be $\bigO(\epsilon)$, it suffices to choose
    \begin{equation}
        \tau = \bigO \left( \frac{\epsilon}{nr^D t} \frac{1}{\|f\|_{L^1}^4}\right) \,, \quad \epsilon' = \bigO\left( \frac{\epsilon}{nt} \frac{1}{\|f\|_{L^1}}\right)\, .
        \label{eq:tau-formula}
    \end{equation}

    By Lemma~\ref{lemma:stinespring-dilation-gates}, to approximate each channel $\Eo_a[\rho]$ we use
    \begin{equation}
        N_{\mathrm{queries}} = \bigO\left(\left( \frac{J}{\mu} + \frac{1}{\mu \hat \mu}  \right) r^{D} \polylog \left(\frac{nt}{\epsilon }\right) \right)\, 
    \end{equation}
    many queries to $\Prep_{H_a^\Omega}, \Sel_{H_a^\Omega}$, where we hide logarithmic factors in $C, \hat C, \mu, \hat \mu$.

    In each Trotter step, each $\Eo_a[\rho]$ can be implemented simultaneously if they are at least a distance $\Omega(r)$ apart. Along one dimension, $\bigO(L/r)$ of them can be implemented simultaneously and for $D$ dimensions this becomes $\bigO(L^D / r^D)$. Since there are $\bigO(n)$ many terms, $\bigO(r^D)$ layers are required. Similarly each $e^{\scriptH_X\tau}$ can be implemented in $\bigO(k) = \bigO(1)$ layers. Overall, each Trotter step has query depth 
    \begin{equation}
    \bigO(r^D) \times N_{\mathrm{queries}} = \bigO\left(\left( \frac{J}{\mu} + \frac{1}{\mu \hat \mu}\right) r^{2D}\polylog \left(\frac{nt}{\epsilon}\right)\right) \, .
    \label{eq:circuit-depth-per-step}
    \end{equation}
    
    Thus, the total query depth is 
    \begin{equation}
        \begin{split}
        (\text{\# steps}) \times (\text{query depth per step}) = \frac{t}{\tau} \times \bigO\left(\left( \frac{J}{\mu} + \frac{1}{\mu \hat \mu}\right) r^{2D}\polylog \left(\frac{nt}{\epsilon}\right)\right)\\
        = \bigO\left(\frac{n r^D t^2}{\epsilon} \|f\|_{L^1}^4 \right) \times \bigO\left(\left( \frac{J}{\mu} + \frac{1}{\mu \hat \mu}\right)r^{2D}\polylog \left(\frac{nt}{\epsilon}\right)\right)\, \\
        = \bigO\left( \|f\|_{L^1}^4 \left( \frac{J}{\mu} + \frac{1}{\mu \hat \mu}\right)\frac{n t^2}{\epsilon}r^{3D}\polylog \left(\frac{nt}{\epsilon}\right)\right)\,
        \end{split}
        \label{eq:circuit-depth-count-parallel}
    \end{equation}
    Since each oracle consists of $\bigO(r^D)$ gates, we have the circuit depth is
    \begin{equation}
        \bigO\left( \|f\|_{L^1}^4 \left( \frac{J}{\mu} + \frac{1}{\mu \hat \mu}\right)\frac{n t^2}{\epsilon}r^{4D}\polylog \left(\frac{nt}{\epsilon}\right)\right)\, .
    \end{equation}
    By our choice of $r$, we obtain the desired result.

    For the number of qubits, we require one ancilla qubit for each jump operator that is applied simultaneously. After each layer, the ancilla qubits can be reset to be used for the next. In one layer, $\bigO(n /r^D)$ jump operators are applied simultaneously so by our choice of $r$, we obtain the desired result.

\end{proof}

% note on mixing time improvement
Without localization and parallelization, the sequential method requires significantly more time steps and larger circuit depth for each step, the former of which is due to a larger commutator error compared in Eq.~\ref{eq:commutator-error} since there are no longer pairwise terms that are zero. Taking the radius $r$ to cover the entire lattice ($r^D = \bigO(n)$), this altogether results in a circuit depth of
\begin{equation}
\bigO\left( \|f\|_{L^1}^4 \left( \frac{J}{\mu} + \frac{1}{\mu \hat \mu}\right)\frac{n^5 t^2}{\epsilon} \polylog \left(\frac{nt}{\epsilon}\right)\right)\, .
\end{equation}
If we hide polylogarithmic factors and hold factors other than $n,t$ to be constant, this circuit depth scales as $\tilde \bigO(n^5t^2)$ whereas that of the parallel method is $\tilde \bigO (n t^2)$.

The single jump implementation from \cite{ding_single-ancilla_2024} requires less time steps and less circuit depth per step since the commutator error no longer has an additional $\bigO(n)$ dependence and the number of layers per step is only $\bigO(1)$. The circuit depth in this method turns out to be
\begin{equation}
    \begin{split}
    (\text{\# steps}) \times (\text{circuit depth per step}) = \frac{t}{\tau} \times \bigO\left(\left( \frac{J}{\mu} + \frac{1}{\mu \hat \mu}\right)n^2 \polylog \left(\frac{1}{\tau^2}\right)\right)\\
    = \bigO\left(\frac{t^2}{\epsilon} \|f\|_{L^1}^2 \left(nJ + \|f\|_{L^1}^2\right) \right) \times \bigO\left(\left( \frac{J}{\mu} + \frac{1}{\mu \hat \mu}\right) n^2 \polylog \left(\frac{nt}{\epsilon}\right)\right)\\
    = \bigO\left( \|f\|_{L^1}^2 \left(nJ + \|f\|_{L^1}^2\right) \left( \frac{J}{\mu} + \frac{1}{\mu \hat \mu}\right)\frac{n^2 t^2}{\epsilon} \polylog \left(\frac{nt}{\epsilon}\right)\right)\, .
    \end{split}
\end{equation}
Hiding polylogarithmic factors and holding factors other than $n,t$ to be constant, this circuit depth scales as $\tilde \bigO(n^3t^2)$.

\cite{zhan_rapid_2025} shows that for some 1D systems, bulk dissipation exhibits a mixing time of $\bigO(\log n)$ while boundary dissipation (one jump operator on each end) exhibits a mixing time of $\bigO(n^3)$. Using this as a heuristic, the single jump dissipation has depth $\tilde \bigO (n^9)$, sequential bulk dissipation scales as $\tilde \bigO(n^5)$ and its parallelized version scales as $\tilde \bigO(n)$. This example suggests that parallel bulk dissipation has the lowest depth to prepare the ground state.

% set up next section
In the next section, we show a concrete numerical example where the parallelization method displays a substantial reduction in depth when preparing the ground state of a $n=100$ site 1D transverse-field Ising model.

\section{Numerical Example}
\label{sec:numex}
In our numerical experiments, we demonstrate parallel bulk dissipation on the 1D transverse-field Ising model (1D TFIM) on $n=100$ spin sites. The Hamiltonian for this system is given by
\begin{equation}
    H = -g \sum_{i=1}^n Z_i - J \sum^{n-1}_{i=1}X_i X_{i+1}\, 
\end{equation}
where $g$ and $J$ are some constants and $Z_i, X_i$ denote the Pauli operators acting on the $i^\text{th}$ site. We choose $g = 1.5$ and $J = 1.0$.

% fig: filter function
\begin{figure}
  \centering
  \includegraphics[width=\textwidth]{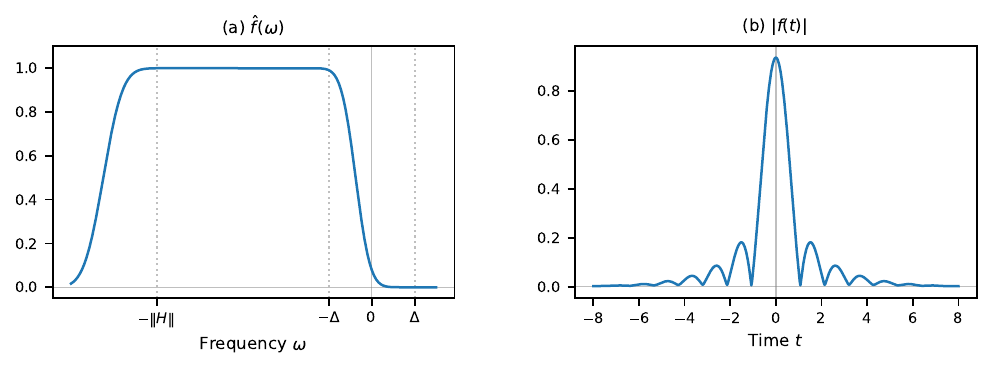}
  \caption{Filter function used in our numerical experiments. Left: The Fourier transform of the filter function $\hat f(\omega)$ in frequency space. It goes to $0$ for positive $\omega$ to ensure that the jump operator maps to lower energy eigenstates. $\hat f(\omega)$ starts to decay at $-\Delta$ so that any energy difference has a strong transition amplitude. Right: The magnitude of the filter function $|f(t)|$ in time space. Its tails decay rapidly as $|t| \to \infty$.}
  \label{fig:filter-functions}
\end{figure}

We choose a filter function $\hat f(\omega)$ defined as
\begin{equation}
    \hat f(\omega) \coloneq \frac{1}{2}\left(  \mathrm{erf}\left(\frac{\omega + a}{\delta_a} \right) - \mathrm{erf}\left( \frac{\omega + b}{\delta_b}\right)\right) \,
    \label{eq:numerical-filter}
\end{equation}
where $\mathrm{erf}(\omega) = \frac{2}{\sqrt{\pi}}\int_0^\omega e^{-x^2}\, dx$ is the error function. The four parameters are chosen to be $a = 1.25 \|H \| , \, \delta_a = 0. 1 \|H \| ,\,  b = \delta_b = 0.375 \Delta$ where $\Delta$ is the spectral gap of the $H$. For our system, $\Delta = 1.0$. Its inverse Fourier transform $f(t)$ can be written as
\begin{equation}
    f(t) = \frac{\exp\left(-\frac{\delta_a^2 t^2}{4} + i at\right) - \exp \left( \frac{-\delta_b^2 t^2}{4} + ibt\right) }{2\pi i t}\, .
\end{equation}
$\hat f(\omega)$ and $|f(t)|$ are plotted in Fig.~\ref{fig:filter-functions}. Note that while Eq.~\ref{eq:sagsp-filter-prop} requires $\hat f(\omega) = 0$ for $\omega > 0$, this function does not satisfy that exactly but instead $\hat f(\omega)$ rapidly approaches $0$. This still yields good mixing time numerically.

According to Lemma~\ref{lemma:appx-jump} the locality of each jump operator depends on the decay rate of $|f(t)|$, which depends on the parameter $\delta_b$. For our choice of parameters, $|f(t)|$ does not decay fast enough for the Lieb-Robinson bound to be feasible. However, this bound is very loose and the localization can be made smaller
% , see Fig.~\ref{fig:local-dynamics-comparison}a
. In practice, a support of radius $r = 16$ closely approximates the jump operator, while the Lemma~\ref{lemma:appx-jump} requires $r > 100$. Although the bound in Lemma~\ref{lemma:appx-jump} can be improved by choosing different parameters such that $|f(t)|$ decays faster, this results in a much slower mixing time, see Appendix~\ref{appendix:filter}. Furthermore, our parameters allow us to choose the radius as low as $r = 4$ and still obtain a close approximation to the original dynamics
 % as shown in Fig.~\ref{fig:local-dynamics-comparison}b
.

Using the computational methods for quasi-free systems in \cite{zhan_rapid_2025}, we simulate the Lindblad dynamics of $e^{\Li t} [\rho]$ where $\rho = \mathbb{I} / 2^n$ is the maximally mixed state. We run the simulation for bulk dissipation where the set of coupling operators $\{A_a\}$ are chosen to be $X_i, Y_i$ for $i=1, \dots, n-1$, for boundary dissipation where the set of coupling operators is $\{X_n, Y_1\}$, and for dissipation with a single coupling operator $X_{50}$ in the middle of the chain. We compute the energy $E_t$ of the system after time $t$ and plot it against the theoretical upper bound on query depth given by Eq.~\ref{eq:circuit-depth-count-parallel} (including constant factors) for fixed $\epsilon = 10^{-3}$. For bulk dissipation, the simulation is repeated for varying levels of parallelization for $r = 4, 10, 16$. We observe that lower values of $r$ require less query depth but limits the accuracy of $E_t$. For single and boundary dissipation, while the query depth per time step is much lower, the substantially large mixing time makes it infeasible.

% % fig: local dynamics comparison
% \begin{figure}
%   \centering
%   \begin{subfigure}{0.4\textwidth}
%     \centering
%     \includegraphics[width=\linewidth, trim=0 0.2in 0 0]{fig/plot_lieb_robinson.pdf}
%   \end{subfigure}
%   \begin{subfigure}{0.4\textwidth}
%     \centering
%     \includegraphics[width=\linewidth, trim=0 0.2in 0 0]{fig/local_lindblad_comparison.pdf}
%   \end{subfigure}
%   \caption{
%     Left: The jump operator $K_a$ is approximated by the localized jump operator $K_a^\Omega$. The spectral norm of their difference is $\|K_a - K_a^\Omega\| $ and is plotted against the radius $r$ of the localized support $\Omega$. The dotted line shows the bound given by Lieb-Robinson, which appears to grow up to $r = 50$. For larger $r$, this bound will start to decrease rapidly, but will be impractical at that point. Fortunately, the exact difference is much lower, reaching an error of $\epsilon = 10^{-4}$ when $r \approx 16$. 
%     Right: For $r = 4, 10, 16$, we simulate the Lindblad dynamics $e^{\Li^\Omega t}$ up to time $t = 10$. We can see that the localized dynamics closely matches the exact dynamics. \rev{For smaller values of $r$, it closely matches up to some time at which the localization error becomes large.}
%   }
%   \label{fig:local-dynamics-comparison}
% \end{figure}

\begin{figure}
  \centering
  \includegraphics[width=0.4\textwidth, trim=0 0.2in 0 0]{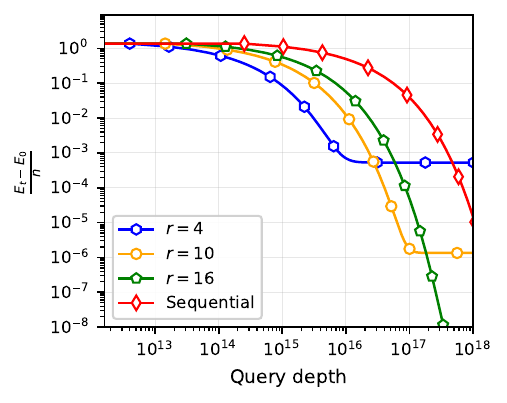}
  \hspace{2em}
  \includegraphics[width=0.4\textwidth, trim=0 0.2in 0 0]{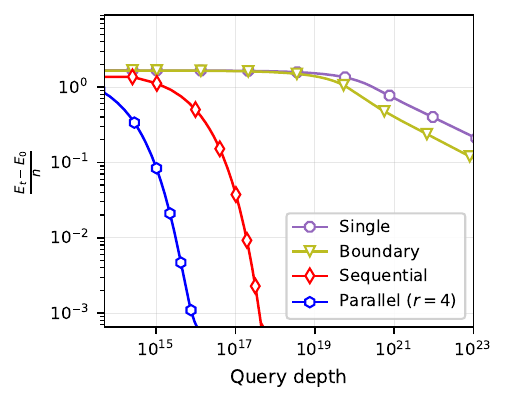}
  \caption{Theoretical upper bounds on query depth. Left: Comparisons on different localization radius $r = 4, 10, 16$. For relative energy error $10^{-3}$, the $r = 4$ implementation has query depth that is 53x lower than that in the sequential implementation. Notice that the stationary state deviates for lower values of $r$.
  Right: Query depth comparisons for single, boundary, and bulk dissipation. While the single and boundary dissipation methods have a lower circuit cost per simulation step, its mixing time is so large that the total cost is infeasible. Parallel bulk dissipation has the lowest query depth out of all of the implementations.}
  \label{fig:circuit-depth-comp}
\end{figure}

\section{Discussion}
\label{sec:disc}
This work demonstrates that parallelization is a powerful tool for running dissipative quantum algorithms on quantum computers. Previously, not much work has utilized parallelization for their implementation. Existing schemes have obtained minimal ancilla cost \cite{ding_single-ancilla_2024}, but they do not consider the increased circuit cost when utilizing bulk dissipation to reduce mixing time. The method in our paper showed that both circuit cost and mixing time can be reduced when using bulk dissipation.

Our method encompasses a broad class of dissipative algorithms whose jump operator has the form of a filtered Heisenberg evolution. We presented a few prominent examples of such instances and imposed conditions on the filter function that allow for parallelization. In particular, we require the rapid decay of the filter function. While this property is commonly used to prove mixing time results, it is also essential for localization when parallelizing the circuit implementation. Our assumption on the filter function is broad and assumes a bound according to $|f(t)| \leq C e^{-\mu |t|}$ for some $C, \mu$. It is important to note that these parameters hide some complexity of the problem. For example, in some instances they may depend on the spectral gap. For provably hard problems with a small spectral gap, the hardness will lie in the filter function which may a slow decay rate.

For sufficiently fast decaying filter function, our theoretical results demonstrated that each jump operator can be localized to arbitrary precision using the Lieb-Robinson bound. We showed that this results in a Lindbladian whose dynamics and stationary state are close to its original. This allows us to use parallelization to simulate the dynamics, which results in an asymptotically lower circuit depth. In particular, we obtain a quartic improvement in circuit depth over the sequential case. This improvement lies in three factors. First, the commutator error of the first-order Trotter formula is reduced since many terms become zero due to localization. Second, simulating the evolution of localized Hamiltonians is much cheaper. Finally, parallelization allows us to apply $\bigO(r^D)$ jump operators simultaneously, reducing the number of sequential operations.

The numerical results demonstrate a 53x reduction in query depth compared to the sequential implementation for ground state energy estimation of the 1D TFIM on $n=100$ sites. This reduction is expected to improve as the system size grows. While our numerical results focus on 1D systems, the parallelization can extend to higher dimensions. Numerical experiments on other systems for larger dimensions and other dissipative algorithms (such as Gibbs state preparation) would be enlightening, but is difficult for large system sizes $n\sim100$ where parallelization is most advantageous.

Although the circuit costs required by our numerical example remain beyond the reach of current quantum hardware, this work suggests that parallelization can substantially reduce these costs, bringing practical dissipative quantum algorithms closer to near-term implementation.

\section*{Author Contributions}
SH wrote the article and developed the results, while NW and YD gave guidance, contributed to the ideas and reviewed the work. AI tools were used for grammar checking, assistance in phrasing for clarity, and to generate code for plotting data.

\begin{acknowledgments}
This material is based upon work supported by U.S. Department of Energy, Office of Science, National Quantum Information Science Research Center, Co-design Center for Quantum Advantage (C2QA) under Contract No. DE-SC0012704. NW acknowledges support from Pacific Northwest National Laboratory's Quantum Algorithms and Architecture for Domain Science (QuAADS) Laboratory Directed Research and Development (LDRD) Initiative as well as   NSERC support throught the Quantum Software Consortium. YD acknowledges partial support by the National Science Foundation (under awards CCF-2312754 and CCF-2338063), by Quantum CT (under NSF Engines award ITE-2302908), by Air Force Office of Scientific Research MURI (FA9550-26-1-B036), by Boehringer Ingelheim, by NSF NQVL-ERASE (under award OSI-2435244), and by NSF QLCI Practiqal (under award OSI-2552449). External interest disclosure: YD is a consultant and equity holder of Toric Labs and D-Wave Quantum.
\end{acknowledgments}

%%
%% The next two lines define the bibliography style to be used, and
%% the bibliography file.
% \balance
% \nocite{*}
\bibliographystyle{quantum}
\bibliography{references-zotero,adref}
\pagebreak

%%
%% If your work has an appendix, this is the place to put it.
\appendix

\section{Filter function}
\label{appendix:filter}
\begin{figure}
    \centering
    \includegraphics[width=0.9\textwidth]{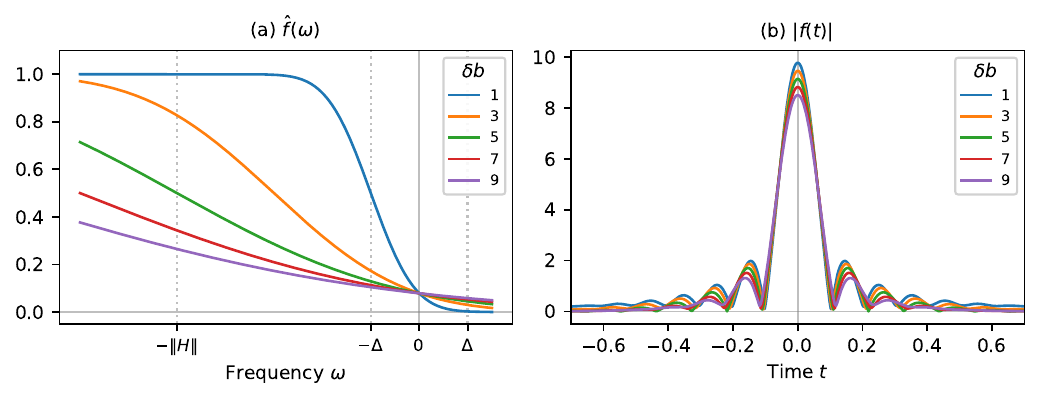}
    
    \caption{Plot of different filter functions with varying $\delta_b$ parameter. Notice that larger values of $\delta b$ give faster decay of $|f(t)|$ but }
    \label{fig:various-filter-plots}
\end{figure}

\begin{figure}
    \centering
        
    \begin{minipage}{0.5\linewidth}
        \centering
        \textbf{(a)}\\[0.5em]
        \includegraphics[height=5cm]{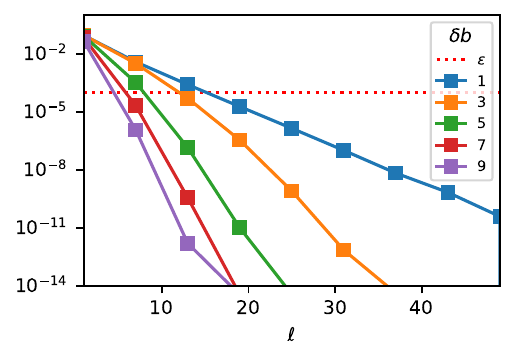}
    \end{minipage}%
    \begin{minipage}{0.5\linewidth}
        \centering
        \textbf{(b)}\\[0.5em]
        \includegraphics[height=5cm]{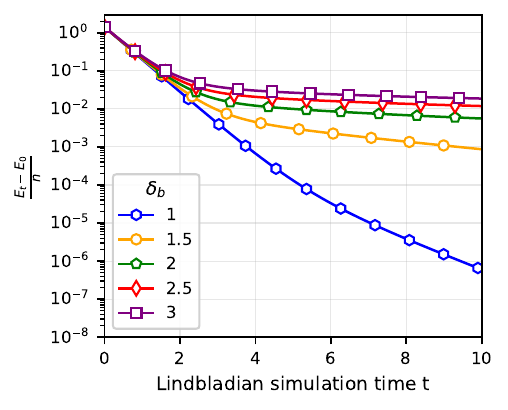}
    \end{minipage}
    
    \caption{Comparison of different filter functions with varying $\delta_b$ parameter. Left: Plot of localization error $\|K_a- K_a^\Omega\|$. On the horizontal axis, $\ell$ denotes the localization radius. The dotted red line is $\epsilon = 10^{-4}$. Right: Plot of the energy of the dynamics over time. We can see that larger values of $\delta_b$ allow a smaller localization radius, but results in substantially larger mixing time.}
    \label{fig:various-filter-results}
\end{figure}

In this section we describe our choice of filter function. Fig.~\ref{fig:filter-functions} shows the choice of filter function that is used in our numerical experiments. The choice of the filter function is given by Eq.~\ref{eq:numerical-filter} which is based off of the one in \cite{zhan_rapid_2025}. We tried several sets of parameters and chose the best set that give the fast mixing time while allowing small localization.

In particular, we tried different values of $\delta_b$, which controls the decay rate of the filter function in time (Fig.~\ref{fig:various-filter-plots}). As shown in Fig.~\ref{fig:various-filter-results} larger values of $\delta_b$ allow us to have a smaller localization radius, but results in a much larger mixing time.

% % fig: various LRB
% \begin{figure}
%   \centering
%   \includegraphics[width=0.8\linewidth]{fig/various_lrb.pdf}
%   \caption{}
%   \label{fig:various_lrb}
% \end{figure}

% % fig: various exact
% \begin{figure}
%   \centering
%   \includegraphics[width=0.8\linewidth]{fig/various_exact.pdf}
%   \caption{}
%   \label{fig:various_exact}
% \end{figure}

% \begin{figure}
%     \centering
%     \includegraphics[width=0.8\linewidth]{fig/various_lindblad.pdf}
%     \caption{Caption}
%     \label{fig:various_lindblad}
% \end{figure}

% \begin{figure}
%     \centering
%     \includegraphics[width=0.8\linewidth]{fig/various_filters.pdf}
%     \caption{Caption}
%     \label{fig:various_filters}
% \end{figure}

This filter function is derived in \cite{ding_single-ancilla_2024}, where the filter function is written as a product of Gevrey functions, defined in Definition \ref{def:gevrey}. This is formalized in the following assumption.
\begin{assumption}[Form of filter function]
    \label{assumption:filter}
    Choose $S_\omega = \bigO(\|H\|)$ such that $S_\omega > 2 \Delta$. Then the filter function $\hat f(\omega)$ can be written as
    \begin{equation}
        \hat f(\omega) = \hat u \left( \frac{\omega}{S_\omega}\right) \hat v \left(\frac{\omega}{\Delta}\right) \, 
    \end{equation}
    where $\hat u$ and $\hat v$ are $\nu$-Gevrey functions such that $\mathrm{supp}(\hat u) \subseteq [-1, 1]$ and $\hat u(\omega) = \Omega(1)$ when $\omega \in \left[-\frac{1}{2}, \frac{1}{2}\right]$, and 
    $\left \| \frac{d}{d\omega}\hat v\right\|_{L^1} = \bigO(1)$, $\mathrm{supp}(\hat v) \subseteq (-\infty, 0]$.
\end{assumption}
\begin{definition}[Gevrey class]
    \label{def:gevrey}
    \sloppy A function $\hat u$ is a $\nu$-Gevrey function if it belongs to the Gevrey class $\mathcal{G}_{A, B}^\nu (\R)$ for some $A, B > 0$ and $\nu \geq 1$ if
    \begin{equation}
        \sup_{\omega \in \R} \left| \frac{d^N}{d\omega^N} \hat u (\omega) \right| \leq AB^N N^{N\nu}
    \end{equation}
    for any $N \in \mathbb{N}$.
\end{definition}

Adopting this assumption on $\hat f$, the filter function in the time domain $f(t)$ will decay rapidly given by the following lemma, which is adapted from lemma 16 in \cite{ding_single-ancilla_2024}.
\begin{lemma}[Rapid decay of filter function]
    \label{lemma:rapid-decay-filter}
    If $\hat{f}$ satisfies Assumption~\ref{assumption:filter}, then its inverse Fourier transform $f(s)$ satisfies
    \begin{equation}
        \begin{split}
            |f(s)| &= \bigO(\|H\| \exp(-|s \Delta|^{1/\nu}))\\
            \|f(s)\|_{L^1} &=  \bigO( \log (\|H\| / \Delta)) \, .
        \end{split}
    \end{equation}
\end{lemma}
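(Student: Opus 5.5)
The first estimate is a pointwise bound on $f$. Write $f(s)=\frac{1}{2\pi}\int \hat f(\omega)e^{i\omega s}\,d\omega$. I would obtain decay by integrating by parts $N$ times: $|f(s)|\le \frac{1}{2\pi |s|^N}\int |\hat f^{(N)}(\omega)|\,d\omega$. By Assumption~\ref{assumption:filter}, $\hat f(\omega)=\hat u(\omega/S_\omega)\hat v(\omega/\Delta)$ is supported in $[-S_\omega,0]$. That interval has length $\bigO(\|H\|)$, which is where the prefactor $\|H\|$ comes from.

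The Leibniz rule gives
\[
\hat f^{(N)}=\sum_{k}\binom{N}{k} S_\omega^{-k}\,\hat u^{(k)}\;\Delta^{-(N-k)}\,\hat v^{(N-k)} .
\]
By the Gevrey bounds, each summand is at most $A^2 B^N \Delta^{-N} N^{N\nu}$ up to the binomial factor, since $S_\omega>2\Delta$ means $S_\omega^{-k}\le\Delta^{-k}$. Summing the binomials gives $\sup|\hat f^{(N)}|\le A^2 (2B/\Delta)^N N^{N\nu}$. Therefore
\[
|f(s)|\lesssim \|H\|\,\bigl(2BN^\nu/(|s|\Delta)\bigr)^N .
\]
I would then optimize over $N$ by taking $N\approx (|s|\Delta/(2Be))^{1/\nu}$. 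This makes the bracket at most $e^{-\nu}$ per factor, so the bound becomes $\|H\|\exp(-c|s\Delta|^{1/\nu})$. The constant $c$ can be absorbed by rescaling, or by stating the bound up to constants in the exponent, as in the statement. For $|s|\lesssim 1/\Delta$ I would use instead the trivial bound $|f(s)|\le \frac{1}{2\pi}\|\hat f\|_{L^1}=\bigO(S_\omega)=\bigO(\|H\|)$, which covers the small-$s$ regime.

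For the $L^1$ norm, I would split the real line into three regions. On $|s|\le 1/S_\omega$, the trivial bound gives a contribution $\bigO(\|H\|\cdot 1/S_\omega)=\bigO(1)$. On $|s|\ge 1/\Delta$, the stretched-exponential tail integrates to $\bigO(\|H\|/\Delta)$. That is too crude, so there I would integrate by parts once more with the sharper derivative bound below.

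The intermediate region $1/S_\omega\le |s|\le 1/\Delta$ is where the logarithm must come from. Here I would use a single integration by parts: $|f(s)|\le \frac{1}{2\pi|s|}\|\hat f'\|_{L^1}$. By the product rule, $\|\hat f'\|_{L^1}\le \|\hat u'(\cdot/S_\omega)\|_{L^1}S_\omega^{-1}\sup|\hat v|+\sup|\hat u|\,\|\frac{d}{d\omega}\hat v(\cdot/\Delta)\|_{L^1}$. Both terms are $\bigO(1)$ by scale invariance of the $L^1$ norm of a derivative together with the hypothesis $\|\hat v'\|_{L^1}=\bigO(1)$. Hence $|f(s)|=\bigO(1/|s|)$ on all of $|s|\ge 1/S_\omega$. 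Integrating this from $1/S_\omega$ to $1/\Delta$ gives $\bigO(\log(S_\omega/\Delta))=\bigO(\log(\|H\|/\Delta))$.

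It remains to handle $|s|\ge1/\Delta$ with the $\bigO(1/|s|)$ bound replaced by something integrable that is still $\bigO(1)$ in total. I would use $\min\{1/|s|,\ \|H\|e^{-c|s\Delta|^{1/\nu}}\}$. This needs care, because the Gevrey tail carries the $\|H\|$ prefactor, and I expect this to be the main obstacle. My first attempt would be to rerun the $N$-fold integration by parts while keeping one derivative in $L^1$ rather than sup-norm. This gives $|f(s)|\lesssim |s|^{-1}(CN^\nu/(|s|\Delta))^{N-1}$, which removes the $\|H\|$ prefactor. The tail then contributes $\int_{1/\Delta}^\infty |s|^{-1}e^{-c|s\Delta|^{1/\nu}}\,ds=\bigO(1)$, and summing the three regions gives the claimed $\bigO(\log(\|H\|/\Delta))$.
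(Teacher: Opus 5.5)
\textbf{Overall.} The paper gives no proof of its own here; it defers to Lemma~16 of \cite{ding_single-ancilla_2024}. Most of your argument is correct and is the standard route: the pointwise bound via Leibniz, Gevrey estimates, $N$-fold integration by parts, optimization in $N$ and the trivial bound for small $|s|$, and your treatment of the regions $|s|\le 1/S_\omega$ and $1/S_\omega\le |s|\le 1/\Delta$. The problem is the last step.

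\textbf{The failing step.} You claim that keeping one derivative in $L^1$ gives $\|\hat f^{(N)}\|_{L^1}\lesssim (CN^\nu/\Delta)^{N-1}$ with no $\|H\|$ prefactor.
\begin{itemize}
\item For the Leibniz terms with $k\ge 1$ this works. The factor $S_\omega^{-k}\hat u^{(k)}(\cdot/S_\omega)$ has $L^1$ norm $S_\omega^{1-k}\|\hat u^{(k)}\|_{L^1}\le 2AB^k k^{k\nu}\Delta^{1-k}$.
\item The $k=0$ term fails. It is $\hat u(\omega/S_\omega)\,\Delta^{-N}\hat v^{(N)}(\omega/\Delta)$, with $L^1$ norm up to $\sup|\hat u|\,\Delta^{1-N}\int_{-S_\omega/\Delta}^{0}|\hat v^{(N)}(x)|\,dx$.
\end{itemize}
Assumption~\ref{assumption:filter} controls $\hat v'$ in $L^1$, but $\hat v^{(N)}$ for $N\ge 2$ only in sup norm, and $\hat v$ need not be compactly supported. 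So the best available bound is $(S_\omega/\Delta)\,AB^NN^{N\nu}\Delta^{1-N}$, which reinstates the factor $\|H\|/\Delta$ you were trying to remove. Your estimate would hold if $\hat v$ were constant on $(-\infty,-1]$, as for a smoothed step, since then $\hat v^{(N)}$ is supported in $[-1,0]$. That is an extra hypothesis the stated assumption does not provide.

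\textbf{The fix.} The sharper bound is unnecessary. The bound $\min\{C_2/|s|,\ C_1 S_\omega e^{-c(|s|\Delta)^{1/\nu}}\}$ that you wrote down already closes the argument, and you have proved both halves of it. Choose $T$ with $Y:=(T\Delta)^{1/\nu}=\frac{2}{c}\log(S_\omega/\Delta)$, so that $e^{-cY}=(\Delta/S_\omega)^2$. Then
\begin{align*}
\int_{1/\Delta}^{T}\frac{C_2}{s}\,ds &= C_2\,\nu\log\Bigl(\tfrac{2}{c}\log\tfrac{S_\omega}{\Delta}\Bigr),\\
\int_{T}^{\infty} C_1 S_\omega e^{-c(s\Delta)^{1/\nu}}\,ds &= \frac{C_1 S_\omega}{\Delta}\int_{Y}^{\infty}\nu y^{\nu-1}e^{-cy}\,dy = \bigO\Bigl(\tfrac{\Delta}{S_\omega}\log^{\nu-1}\tfrac{S_\omega}{\Delta}\Bigr)=\bigO(1).
\end{align*}
So the region $|s|\ge 1/\Delta$ contributes only $\bigO(\log\log(\|H\|/\Delta))$, and the total is $\bigO(\log(\|H\|/\Delta))$. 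The $\|H\|$ prefactor in the tail is harmless, because the stretched exponential absorbs it once $|s|\Delta$ is polylogarithmic in $\|H\|/\Delta$.

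\textbf{A small point on the first bound.} A constant $c<1$ in the exponent cannot be absorbed into the $\bigO$, and rescaling does not remove it. The first bound should be read with an unspecified constant in the exponent, which is exactly what your argument proves.
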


Choosing $\nu = 1$ allows us to choose $f$ that satisfies Assumption~\ref{assumption:rapid-decay-filter}. Note that \cite{ding_single-ancilla_2024} chose $\nu > 1$ for theoretical results in order for $\hat f$ to have compact support. However, in their numerical results they used a function with $\nu = 1$. It is necessary to do so in order to have rapid decay, but this results in $\hat f$ not having compact support but instead an exponentially decaying support. This means that the Nyquist-Shannon sampling theorem cannot be used as done in their work, but instead Poisson summation can be used which is still very effective.

\section{Mathematical tools}
\label{appendix:proofs}
This section details mathematical tools used to prove the main result in Section~\ref{sec:mainresult}. First, we state Holder's inequality on Schatten norms for linear operators.
% Holder's inequality for Schatten p-norms
\begin{lemma}[Holder's inequality for Schatten p-norms]
    \sloppy For $p \in [1, \infty]$ and $q$ such that $\frac{1}{p} + \frac{1}{q} = 1$, and linear operators $S, T$ we have $\|ST \|_1 \leq \|S\|_p\|T\|_q$.
    Thus for $p=\infty$ and $q = 1$, we have
    \begin{equation}
        \| X \rho Y \|_1 \leq \|X \|  \|\rho \|_1 \|Y \|  \, .
    \end{equation}
\end{lemma}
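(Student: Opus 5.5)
The statement to prove is Hölder's inequality for Schatten norms, $\|ST\|_1 \le \|S\|_p \|T\|_q$ with $1/p+1/q=1$, and its specialization $\|X\rho Y\|_1 \le \|X\|\,\|\rho\|_1\,\|Y\|$. I would treat the general $p$ case as standard and cite it (e.g.\ Bhatia, \emph{Matrix Analysis}, Cor.~IV.2.6). I would give a short self-contained argument only for the $p=\infty$, $q=1$ case, since that is the only form used later in the paper (for instance in the bounds of Lemma~\ref{lemma:jump1} and \ref{lemma:jump2}).

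The plan rests on the variational characterization of the trace norm, $\|Z\|_1 = \sup_{\|U\|\le 1} |\Tr(UZ)|$. This follows from the singular value decomposition $Z = \sum_i s_i |u_i\rangle\langle v_i|$.
\begin{itemize}
\item For the lower bound, take $U = \sum_i |v_i\rangle\langle u_i|$, which gives $\Tr(UZ) = \sum_i s_i$.
\item For the upper bound, note that $|\Tr(UZ)| \le \sum_i s_i |\langle v_i|U|u_i\rangle| \le \sum_i s_i$.
\end{itemize}

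Granting this characterization, the main step is a change of variables inside the trace. Cyclicity gives $\Tr(U X\rho Y) = \Tr\big((YUX)\rho\big)$. The operator $V := YUX/(\|X\|\|Y\|)$ has spectral norm at most $1$, by submultiplicativity of the operator norm. Therefore
$$|\Tr(UX\rho Y)| \le \|X\|\|Y\|\,|\Tr(V\rho)| \le \|X\|\|Y\|\,\|\rho\|_1.$$
Taking the supremum over $U$ gives the claim. The degenerate case where $X$ or $Y$ is zero is trivial. The same argument with $Y=I$ gives $\|ST\|_1 \le \|S\|\,\|T\|_1$, which is the $p=\infty$ case of the general statement.

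For general $p$, one would first use the majorization $\prod_{i\le k} s_i(ST) \le \prod_{i\le k} s_i(S)s_i(T)$ (Horn's inequality). Weak majorization then yields $\sum_i s_i(ST) \le \sum_i s_i(S)s_i(T)$, and classical Hölder on the singular value sequences finishes the proof. The only genuinely nontrivial ingredient is Horn's inequality, which I would cite rather than reprove; this is the expected obstacle if full generality is required.
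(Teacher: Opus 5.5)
Your proposal is correct. The paper gives no proof of this lemma: it states H\"older's inequality as a standard fact and, downstream (Lemmas~\ref{lemma:jump1}, \ref{lemma:jump2} and \ref{lemma:commutator-bound}), only ever uses the three-factor specialization $\|X\rho Y\|_1 \le \|X\|\,\|\rho\|_1\,\|Y\|$.

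Your route differs in that you prove this specialization directly, from the duality $\|Z\|_1=\sup_{\|U\|\le 1}|\Tr(UZ)|$ and cyclicity of the trace, rather than deducing it from the two-factor inequality. That buys two things.
\begin{itemize}
\item It is elementary and self-contained. The only nontrivial ingredient in your plan (Horn's inequality, or equivalently von Neumann's trace inequality) is needed solely for general $p$, which the paper never uses.
\item It handles both sides at once. Deducing the three-factor bound from the stated two-factor inequality, as the paper's ``Thus'' implicitly does, requires two applications: one with $(p,q)=(\infty,1)$ and one with $(p,q)=(1,\infty)$, giving $\|X\rho Y\|_1\le\|X\|\,\|\rho Y\|_1\le\|X\|\,\|\rho\|_1\,\|Y\|$.
\end{itemize}

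Your sketch for general $p$ is also sound: Horn's log-majorization, passage to weak majorization, then classical H\"older on the decreasing sequence $s_i(S)s_i(T)$. Since the paper works in finite dimensions, the supremum characterization and SVD arguments raise no trace-class subtleties.
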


Holder's inequality can be used to prove the following two lemmas.
% approximate 1st term
\begin{lemma}
    \label{lemma:jump1}
    Let $\Phi_A$ denote the superoperator given by $\rho \mapsto \{A^\dagger A, \rho\}$. If $\|A - B\|  \leq \epsilon$, then $\|\Phi_A - \Phi_B\|_{1\to1} = \bigO(\epsilon)$.
\end{lemma}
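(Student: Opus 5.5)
The statement to prove is Lemma~\ref{lemma:jump1}: if $\|A-B\|\le\epsilon$, then $\|\Phi_A-\Phi_B\|_{1\to1}=\bigO(\epsilon)$. The plan is to reduce it to a bound on the operator difference $A^\dagger A - B^\dagger B$ in spectral norm, and then pass to the superoperator norm with Holder's inequality.

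First, for any $X$ with $\|X\|_1\neq 0$, linearity gives
\begin{equation}
    (\Phi_A-\Phi_B)(X) = (A^\dagger A - B^\dagger B)X + X(A^\dagger A - B^\dagger B).
\end{equation}
Applying the triangle inequality and then Holder's inequality (the $p=\infty$, $q=1$ case stated above) to each term gives
\begin{equation}
    \|(\Phi_A-\Phi_B)(X)\|_1 \le 2\,\|A^\dagger A - B^\dagger B\|\,\|X\|_1 .
\end{equation}
It therefore suffices to show $\|A^\dagger A - B^\dagger B\| = \bigO(\epsilon)$.

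Second, telescope the difference:
\begin{equation}
    A^\dagger A - B^\dagger B = A^\dagger(A-B) + (A-B)^\dagger B .
\end{equation}
Submultiplicativity of the spectral norm, together with $\|(A-B)^\dagger\| = \|A-B\|$, then gives
\begin{equation}
    \|A^\dagger A - B^\dagger B\| \le (\|A\|+\|B\|)\,\epsilon \le (2\|A\|+\epsilon)\,\epsilon .
\end{equation}

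The only real subtlety is the constant hidden in the $\bigO(\cdot)$: it depends on $\|A\|$. The bound is $\bigO(\epsilon)$ only if $\|A\|$ is treated as bounded, and I would state this explicitly. In the intended application, $A = K_a$ and $B = K_a^\Omega$. There $\|K_a\| \le \|f\|_{L^1}\|A_a\|$, since $\|A_a(s)\| = \|A_a\|$ for all $s$ and the jump operator is an $f$-weighted integral of $A_a(s)$. So the constant is $\bigO(\|f\|_{L^1})$, which is uniform in $a$ and in $n$. This uniformity is what Corollary~\ref{cor:appx-lindblad} needs when summing the $n$ error terms.

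Finally, I would record the result with its constant, $\|\Phi_A-\Phi_B\|_{1\to1} \le 2(2\|A\|+\epsilon)\epsilon$, rather than only as $\bigO(\epsilon)$.
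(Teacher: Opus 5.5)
Your proof is correct and follows essentially the same route as the paper: bound the spectral norm $\|A^\dagger A - B^\dagger B\|$ by $2\|A\|\epsilon + \epsilon^2$ and then apply H\"older's inequality. The paper substitutes $B = A - E$ into three terms, whereas you telescope into two and use $\|B\| \le \|A\| + \epsilon$; your version is also slightly more careful, since it handles both sides of the anticommutator explicitly (the paper writes out only the left-multiplication term) and notes that the hidden constant depends on $\|A\|$, which is uniform here by Lemma~\ref{lemma:jump-operator-norm-bound}.
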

\begin{proof}
    Let $E = A - B$. First note that for all $\rho \neq 0$,
    \begin{align}
        \|A^\dagger A \rho - B^\dagger B \rho \|_1 
        &= \| A^\dagger A \rho - (A - E)^\dagger (A- E)\rho \|_1 \\
        &= \| A^\dagger E \rho + E^\dagger A \rho - E^\dagger E \rho \|_1 \\
        &\leq \|A^\dagger E \rho \|_1 + \|E^\dagger A \rho \|_1 + \| E^\dagger E \rho \|_1 \\
        &\leq \|A^\dagger\|  \|E\|  \|\rho\|_1 + \|E^\dagger\|  \|A\|  \|\rho \|_1 
        + \|E^\dagger \|  \|E\|  \|\rho \|_1\\
        &\leq \|\rho \|_1 (2 \|A\|  \epsilon + \epsilon^2) \\
        &= \|\rho \|_1 \bigO(\epsilon).
    \end{align}

    Thus for all $\rho \neq 0$,
    \begin{equation}
        \frac{\| (\Phi_A - \Phi_B)[\rho] \|_1}{\|\rho \|_1} = \bigO (\epsilon) \, 
    \end{equation}
    so we have $\| (\Phi_A - \Phi_B) [\rho] \|_{1\to1} = \bigO(\epsilon)$.

\end{proof}

% approximate 2nd term
\begin{lemma}
    \label{lemma:jump2}
    Let $\Phi_A$ denote the superoperator given by $\rho \mapsto A\rho A^\dagger$. If $\| A - B\|  \leq \epsilon$ then $\| \Phi_A - \Phi_B \|_{1\to1} = \bigO(\epsilon)$. 
\end{lemma}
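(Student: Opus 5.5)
The plan mirrors the proof of Lemma~\ref{lemma:jump1}: write $E \coloneq A - B$, expand the difference of the two superoperators algebraically into a sum of terms that each contain at least one factor of $E$, and bound each term in trace norm with Hölder's inequality. Concretely, for any operator $\rho \neq 0$, substitute $B = A - E$ to get
\begin{equation}
    A\rho A^\dagger - B\rho B^\dagger = A\rho A^\dagger - (A-E)\rho(A-E)^\dagger = E\rho A^\dagger + A\rho E^\dagger - E\rho E^\dagger \, .
\end{equation}

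Next apply the triangle inequality together with Hölder's inequality in the form $\|X\rho Y\|_1 \leq \|X\|\,\|\rho\|_1\,\|Y\|$ to each of the three terms. Using $\|E\| = \|E^\dagger\| \leq \epsilon$ and $\|A^\dagger\| = \|A\|$, this gives
\begin{equation}
    \|A\rho A^\dagger - B\rho B^\dagger\|_1 \leq \|\rho\|_1\left(2\|A\|\,\epsilon + \epsilon^2\right) \, .
\end{equation}
Dividing by $\|\rho\|_1$ and taking the supremum over $\rho \neq 0$ yields $\|\Phi_A - \Phi_B\|_{1\to1} \leq 2\|A\|\epsilon + \epsilon^2 = \bigO(\epsilon)$.

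There is no real obstacle here. The one point to state explicitly is that the hidden constant depends on $\|A\|$, which is bounded. In the application in Corollary~\ref{cor:appx-lindblad}, $A = K_a$ and $\|K_a\| \leq \|f\|_{L^1}\|A_a\|$, which is independent of $n$. Also $\epsilon \leq 1$ can be assumed, so the $\epsilon^2$ term is absorbed into $\bigO(\epsilon)$.
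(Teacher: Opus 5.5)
Your proposal is correct and follows the paper's proof essentially line for line: the same substitution $B = A - E$, the same three-term expansion bounded via Hölder's inequality, and the same final estimate $2\|A\|\epsilon + \epsilon^2$. Your explicit remark that the hidden constant depends on $\|A\|$ (bounded by $\|f\|_{L^1}$ in the application to Corollary~\ref{cor:appx-lindblad}) is a worthwhile clarification that the paper leaves implicit.
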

\begin{proof}
    Let $E = A- B$. Then for any $\rho \neq 0$, 
    \begin{align}
        \|A \rho A^\dagger - B\rho B^\dagger\|_1
        &= \|A\rho A^\dagger - (A-E)\rho (A-E)^\dagger \|_1 \\
        &= \|A\rho E^\dagger + E\rho A^\dagger - E\rho E^\dagger \|_1\\
        &\leq \|A\rho E^\dagger \|_1 + \|E\rho A^\dagger \|_1 + \|E\rho E^\dagger \|_1 \\
        &\leq \|A\|  \|\rho\|_1\|E\|  + \|E\|  \|\rho\|_1 \|A\|  
        + \|E\|  \|\rho\|_1 \|E\| 
    \end{align}
    Dividing by $\|\rho\|_1$ we get that
    \begin{align}
        \frac{\|(\Phi_A- \Phi_B) [\rho] \|_1}{\|\rho \|_1} &\leq 2 \|A\|  \|E\|  + \|E\| ^2 \\
        &\leq 2\|A\|  \epsilon + \epsilon^2 \, .
    \end{align}
    Since this holds for all $\rho \neq 0 $, we have that $\|\Phi_A - \Phi_B \|_{1\to1} \leq 2 \|A \|  \epsilon + \epsilon^2 = \bigO(\epsilon)$.

\end{proof}

The norm of each jump operator $K_a$ is bounded by a common value. This is shown in the following lemma.
\begin{lemma}
    For every jump operator $K_a$ defined as in Eq.~\ref{eq:rapid-jump}, we have $\|K_a \| \leq \|f\|_{L^1}$.
    \label{lemma:jump-operator-norm-bound}
\end{lemma}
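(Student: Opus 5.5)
The plan is to bound the norm of the Bochner integral defining $K_a$ by the integral of the norms, and then to use that the Heisenberg evolution is a unitary conjugation. I first write $K_a = \int_{-\infty}^\infty f(t) A_a(t)\,dt$ with $A_a(t) = e^{iHt}A_a e^{-iHt}$. The triangle inequality for integrals (equivalently, a limit of Riemann sums, or testing against unit vectors $\langle \phi|K_a|\psi\rangle$ and taking the supremum) gives
\begin{equation}
    \|K_a\| \leq \int_{-\infty}^\infty |f(t)|\,\|A_a(t)\|\,dt .
\end{equation}
This is the same manipulation already used in the proof of Lemma~\ref{lemma:appx-jump}, so I take it as given.

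Next, the spectral norm is unitarily invariant. Since $e^{\pm iHt}$ is unitary, $\|A_a(t)\| = \|A_a\|$ for every $t$. Because $A_a$ is a single-qubit Pauli operator, $\|A_a\| = 1$. Substituting this into the display above yields $\|K_a\| \leq \int |f(t)|\,dt = \|f\|_{L^1}$.

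The same argument applies verbatim to the localized operators $K_a^\Omega$, since $A_a^\Omega(t)$ is also a unitary conjugation of $A_a$. This gives $\|K_a^\Omega\| \leq \|f\|_{L^1}$ as well, which is the form actually used in the parallel bulk dissipation theorem.

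The only point needing any care is integrability. Under Assumption~\ref{assumption:rapid-decay-filter} we have $|f(t)|\le Ce^{-\mu|t|}$, so $f\in L^1$. The integrand is therefore absolutely integrable in operator norm, and the integral defining $K_a$ is well defined. There is no substantive obstacle beyond this.
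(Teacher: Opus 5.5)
Your proof is correct and follows the paper's argument: move the norm inside the integral, use unitary invariance of the spectral norm to get $\|A_a(s)\| = \|A_a\| = 1$, and integrate $|f|$. Your extra remark is worth keeping: the same bound holds for $K_a^\Omega$ because $H_a^\Omega$ is Hermitian, and the parallel bulk dissipation theorem applies the lemma to these localized operators.
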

\begin{proof}
    \begin{align}
        \|K_a \| &= \left\| \int_{-\infty}^\infty f(s) \, A_a(s) \, ds\right\|\\
        &\leq \int_{-\infty}^\infty |f(s)| \, \|A_a(s)\| \, ds\\
        &= \int_{-\infty}^\infty |f(s)| \, \|e^{iHs} A_a e^{-iHs}\| \, ds\\
        &= \int_{-\infty}^\infty |f(s)| \, \cancelto{1}{\|A_a\|} \, ds \\
        &= \| f \|_{L^1}
    \end{align}

    If $f$ follows the rapid decay assumption in Lemma~\ref{lemma:appx-jump}, then $\|f\|_{L^1} \leq 2C / \mu$.
\end{proof}

The following lemma is used to show a bound on the commutator of the superoperators $\|[\scriptH, \scriptK_a]\|_1$ and $\|[\scriptK_a, \scriptK_b]\|_1$. Here, $\|\cdot \|_1$ is shorthand for the superoperator norm $\|\cdot \|_{1\to1}$.

\begin{lemma}
    \label{lemma:commutator-bound}
    For $\scriptH$ defined as $\scriptH[\rho] = -i[H, \rho]$ and $\scriptK_a$ defined as $\scriptK_a [ \rho] = K_a\rho K_a^\dagger - \frac{1}{2}\{ K_a^\dagger K_a, \rho\}$, we have $\| [\scriptH, \scriptK_a] \|_1 \leq 8 \, \|H\|\, \|K_a\|^2 $ and $\| [\scriptK_a, \scriptK_b]\|_1 \leq 8\, \|K_a\|^2 \|K_b\|^2$
\end{lemma}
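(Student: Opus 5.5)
The plan is to treat $\scriptH$ and $\scriptK_a$ as superoperators built from left and right multiplications, and to estimate $\|\cdot\|_{1\to1}$ norms with the Hölder inequality for Schatten norms stated above. Concretely, write $\scriptH = -i(\mathcal{L}_H - \mathcal{R}_H)$, where $\mathcal{L}_X[\rho] = X\rho$ and $\mathcal{R}_X[\rho]=\rho X$. Hölder gives $\|\mathcal{L}_X\|_{1\to1}, \|\mathcal{R}_X\|_{1\to1}\le \|X\|$ and $\|\rho\mapsto X\rho Y\|_{1\to1}\le \|X\|\,\|Y\|$. From this, $\|\scriptH\|_{1\to1}\le 2\|H\|$. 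Likewise, $\|\scriptK_a\|_{1\to1}\le \|K_a\|^2 + \tfrac12\cdot 2\|K_a^\dagger K_a\| \le 2\|K_a\|^2$, using $\|K_a^\dagger K_a\| = \|K_a\|^2$.

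With these single-superoperator bounds, the commutator estimates follow from the triangle inequality and submultiplicativity of the induced norm, $\|\Phi\Psi\|_{1\to1}\le\|\Phi\|_{1\to1}\|\Psi\|_{1\to1}$. For the first bound,
\begin{equation}
\|[\scriptH,\scriptK_a]\|_{1\to1}\le \|\scriptH\scriptK_a\|_{1\to1}+\|\scriptK_a\scriptH\|_{1\to1}\le 2\cdot(2\|H\|)(2\|K_a\|^2)=8\|H\|\,\|K_a\|^2 .
\end{equation}
For the second bound, the same argument gives $\|[\scriptK_a,\scriptK_b]\|_{1\to1}\le 2\cdot (2\|K_a\|^2)(2\|K_b\|^2)=8\|K_a\|^2\|K_b\|^2$. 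Both constants then match the statement exactly.

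The one step that needs care is justifying submultiplicativity and the per-term Hölder bounds for general, not necessarily Hermitian, inputs $X$, since $\|\cdot\|_{1\to1}$ is a supremum over all nonzero operators. Hölder in the stated form, $\|X\rho Y\|_1\le\|X\|\|\rho\|_1\|Y\|$, holds for arbitrary $\rho$, and submultiplicativity is immediate for any induced operator norm, so this is routine.

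There is no real obstacle. Note that the argument gives only a crude bound and does not exploit any cancellation. The localization gain in the main theorem comes from commutators of disjointly supported terms vanishing, not from this lemma.
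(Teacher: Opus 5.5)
Your proposal is correct and follows the paper's proof essentially step for step. You bound $\|\scriptH\|_{1\to1}\le 2\|H\|$ and $\|\scriptK_a\|_{1\to1}\le 2\|K_a\|^2$ via H\"older's inequality, then use the triangle inequality and submultiplicativity to obtain $2\cdot(2\|H\|)(2\|K_a\|^2)$ and $2\cdot(2\|K_a\|^2)(2\|K_b\|^2)$, exactly as the paper does.
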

\begin{proof}
First find a bound on $\|\scriptH\|$ and $\|\scriptK_a\|$.
\begin{align}
    \|\scriptH[\rho]\|_1 &= \| - i [H, \rho ] \|_1\\
    &= \|H\rho - \rho H\|_1 \\
    &\leq \|H \rho \|_1 + \| \rho H \|_1 \\
    &\leq 2 \|H\| \, \|\rho\|_1 \, . \\
    \| \scriptK_a [\rho]\| _1 &\leq \left\|K_a \rho K_a^\dagger - \frac{1}{2}(K_a^\dagger K_a\rho + \rho K_a^\dagger K_a) \right\|_1 \\
    &= \|K_a \rho K_a^\dagger \|_1 + \frac{1}{2} \| K_a^\dagger K_a\rho \|_1 + \frac{1}{2}\|\rho K_a^\dagger K_a \|_1 \\
    &\leq 2\|K_a\|^2 \, \|\rho \|_1\ .
\end{align}
Thus $\|\scriptH\|_1 \leq 2\|H\|$ and $\|\scriptK_a \|_1 \leq 2\|K_a\|^2$. Then by Lemma~\ref{lemma:jump-operator-norm-bound},

\begin{align}
    \| [\scriptH, \scriptK_a] \|_1 &= \| \scriptH \scriptK_a - \scriptK_a \scriptH\|_1 \\
    &\leq 2 \, \| \scriptH \|_1 \, \|\scriptK_a \|_1\\
    &\leq 2 \cdot (2 \|H\|) \cdot (2\|K_a\|^2)\\
    &= 8\, \|H \| \, \|K_a\|^2 \, \\
    \|[\scriptK_a, \scriptK_b] \|_1 &= \| \scriptK_a \scriptK_b - \scriptK_b \scriptK_a \|_1 \\
    &\leq 2\, \|\scriptK_a \|_1 \, \|\scriptK_b \|_1\\
    &= 2  (2\|K_a\|^2) (2\|K_b\|^2)\\
    &= 8 \|K_a\|^2 \|K_b\|^2
\end{align}
\end{proof}

\begin{lemma}
    \label{lemma:duhamel-lindbladian-evol}
    For two Lindbladians $\Li, \Li^\Omega$ and any time $t>0$, we have
    \begin{equation}
        \left\| e^{\Li t} - e^{\Li^\Omega t} \right\|_{1\to1} \leq \left\|\Li - \Li^\Omega \right\|_{1\to1} t \, .
    \end{equation}
\end{lemma}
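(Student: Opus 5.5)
The plan is the standard Duhamel (variation-of-parameters) argument, combined with the fact that the evolution generated by any Lindbladian is a contraction in the $1\to1$ norm. Write $\Delta \coloneq \Li - \Li^\Omega$ and consider the interpolating map
\begin{equation}
    F(s) \coloneq e^{\Li (t-s)}\, e^{\Li^\Omega s}, \qquad s\in[0,t],
\end{equation}
so that $F(0) = e^{\Li t}$ and $F(t) = e^{\Li^\Omega t}$. Since everything is finite dimensional, both semigroups are entire functions of their time parameter. Differentiating therefore gives
\begin{equation}
    F'(s) = e^{\Li(t-s)}\left(-\Li + \Li^\Omega\right) e^{\Li^\Omega s} = -\, e^{\Li(t-s)}\, \Delta\, e^{\Li^\Omega s}.
\end{equation}
By the fundamental theorem of calculus,
\begin{equation}
    e^{\Li t} - e^{\Li^\Omega t} = \int_0^t e^{\Li(t-s)}\, \Delta\, e^{\Li^\Omega s}\, ds .
\end{equation}

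Next, take the $1\to1$ norm, move it inside the integral, and use submultiplicativity of the induced norm. This gives the bound $\int_0^t \|e^{\Li(t-s)}\|_{1\to1}\,\|\Delta\|_{1\to1}\,\|e^{\Li^\Omega s}\|_{1\to1}\, ds$. The only real input is then that $\|e^{\mathcal{M} u}\|_{1\to1}\le 1$ for every Lindbladian $\mathcal{M}$ and every $u\ge0$; this applies to both $\Li$ and $\Li^\Omega$, since $\Li^\Omega$ is still of Lindblad form with jump operators $K_a^\Omega$. Given this, the integrand is at most $\|\Delta\|_{1\to1}$, and integrating over $[0,t]$ yields the claim.

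The main obstacle is the contraction property, specifically because the $1\to1$ norm here is a supremum over \emph{all} nonzero operators $X$, not only Hermitian ones or density matrices. The semigroup $e^{\mathcal{M}u}$ is CPTP, since it is generated by a Lindbladian (Section~\ref{subsec:lindblad}). I would therefore show that any CPTP map $\Phi$ satisfies $\|\Phi(X)\|_1\le\|X\|_1$ for arbitrary $X$, using the Kraus/Stinespring form $\Phi(X) = \Tr_E(V X V^\dagger)$ with $V$ an isometry. There are two steps. First, conjugation by an isometry preserves the trace norm. Second, the partial trace is trace-norm contractive: by duality, $\|\Tr_E Y\|_1 = \sup_{\|B\|\le1} |\Tr((B\otimes I)Y)| \le \|Y\|_1$, using Hölder's inequality as stated above. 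If one prefers to avoid this, an alternative is to note that the paper only ever applies the bound to density operators, where trace preservation and positivity give $\|\Phi(\rho)\|_1 = 1$ directly. However, the duality argument handles the full supremum cleanly, so I would use it.
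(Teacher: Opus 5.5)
Your proposal is correct and is essentially the paper's argument. Your interpolation $F(s)=e^{\Li(t-s)}e^{\Li^\Omega s}$ gives exactly the Duhamel identity the paper invokes (substitute $u=(1-s)t$ in $\int_0^1 e^{s\Li t}(\Li-\Li^\Omega)t\,e^{(1-s)\Li^\Omega t}\,ds$), followed by the same submultiplicativity bound and the $1\to1$ contractivity of CPTP semigroups; the only difference is that you prove that contractivity for arbitrary non-Hermitian inputs via Stinespring dilation and partial-trace duality, whereas the paper simply cites it.
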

\begin{proof}
    Duhamel's identity (Eq.~43 in \cite{haber_notes_2018}) states that
    \begin{equation}
        e^A - e^B = \int_0^1 e^{sA}(A - B) e^{(1-s)B}\, ds\, .
    \end{equation}
    Thus
    \begin{align}
        \left\| e^{\Li t} - e^{\Li^\Omega t} \right\|_{1\to1}
        \leq & \int _0^1 \left\| e^{\Li s t} (\Li t - \Li^\Omega t)e^{\Li^\Omega (1-s) t} \right\|_{1\to1} \, ds \\
        \leq & \int_0^1 \left\| e^{\Li st}\right\|_{1\to1} \left\|\Li - \Li^\Omega \right\|_{1-1 }t \, \left\|e^{\Li^\Omega (1-s)t} \right\|_{1\to1} \, ds \\
        = & \left\|\Li - \Li^\Omega \right\|_{1\to1} t
    \end{align}
    where we used the fact that $\|e^{x\Li}\|_{1\to1} = 1$ because $e^{x\Li}$ is a CPTP map \cite{wolf_quantum_2012}.
    
\end{proof}

\section{Implementation of \texorpdfstring{$e^{-iH\tau}$}{e\^ -iH tau} and \texorpdfstring{$e^{-i\tilde K_a^\Omega \sqrt{\tau}}$}{e\^ -i tilde K\_a \^ Omega sqrt tau}}
\label{sec:gate-implementation}
In this section, we decompose the unitary operations of $e^{-iH\tau}$ and $e^{-i\tilde{K}_a^\Omega \sqrt{\tau}}$ into gates. Our goal is to simulate both to accuracy $\bigO(\tau^2)$ so that it does not grow faster than the Trotter error.

% Truncation and discretization
Because the $e^{-iH\tau}$ is more straightforward, we first address the implementation of $e^{-i\tilde K_a^\Omega \sqrt{\tau}}$. We build off of the method in \cite{ding_single-ancilla_2024}. Recall that $K_a^\Omega$ is given by
\begin{equation}
    K_a^\Omega = \int_{-\infty}^\infty f(s)\, A_a^\Omega (s) \, ds \, .
\end{equation}
To implement this operator, truncate it within some range $[s_{-M}, s_M]$ with $s_{-M} = -s_{M}$ and discretize it with time step $\Delta \tau$. This gives the discretized operator $K'$ given by
\begin{equation}
    (K_a^\Omega)' = \sum_{\ell = -M}^M f(s_\ell) \, A_a^\Omega(s_\ell) \, \Delta\tau\, 
    \label{eq:discretized-jump}
\end{equation}
where $s_\ell = \ell \Delta \tau$.
To ensure that the quadrature is sufficiently suppressed, we choose the truncation range $s_M$ and time step $\Delta \tau$ according to the following lemma. We see that this is where the decay parameters $C, \mu, \hat C, \hat \mu$ come into play. Notably, $\mu$ has a big effect on the truncation range $s_M$ while the other parameters have smaller effect.

\begin{lemma}
    \label{lemma:quadrature-error}
    Let $f$ satisfy Assumption ~\ref{assumption:rapid-decay-filter}. Let $r$ be the radius of $\supp ( H_a^\Omega)$. Then for any $\epsilon '> 0 $, we obtain $\|K_a^\Omega - (K_a^\Omega)'\| = \bigO(\epsilon')$ if
    \begin{align}
            s_M &= \Omega \left( \frac{1}{\mu} \log \left( \frac{C}{\mu \epsilon'}\right)\right)\\
            \Delta \tau &= \bigO \left( \frac{\hat \mu}{\hat \mu \|H_a^\Omega \| +  \log \left(\frac{\hat C}{\epsilon'}\right) + r^D}\right) \, .
    \end{align}
\end{lemma}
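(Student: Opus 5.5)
The plan is to split the quadrature error into a truncation error and a discretization error via the triangle inequality:
\begin{equation}
\|K_a^\Omega - (K_a^\Omega)'\| \leq \Big\|\int_{|s|>s_M} f(s) A_a^\Omega(s)\,ds\Big\| + \Big\|\int_{-\infty}^{\infty} f(s)A_a^\Omega(s)\,ds - \sum_{\ell=-\infty}^{\infty} f(s_\ell)A_a^\Omega(s_\ell)\Delta\tau\Big\| + \Big\|\sum_{|\ell|>M} f(s_\ell)A_a^\Omega(s_\ell)\Delta\tau\Big\|.
\end{equation}
The first and third terms are tails. Since $\|A_a^\Omega(s)\| = \|A_a\| = 1$ and $|f(s)|\le Ce^{-\mu|s|}$ by Assumption~\ref{assumption:rapid-decay-filter}, the continuous tail is at most $2C e^{-\mu s_M}/\mu$. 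The discrete tail is a geometric series bounded by $2C\Delta\tau\, e^{-\mu s_M}/(1-e^{-\mu\Delta\tau}) = \bigO(C e^{-\mu s_M}/\mu)$ once $\mu\Delta\tau\lesssim 1$. Both are $\bigO(\epsilon')$ once $s_M \geq \frac{1}{\mu}\log\frac{C}{\mu\epsilon'}$, which gives the stated condition on $s_M$.

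The middle term is the aliasing error of the infinite trapezoidal sum. I would handle it with Poisson summation in the eigenbasis of $H_a^\Omega$. Write $A_a^\Omega(s) = \sum_{i,j} e^{i(\lambda_i-\lambda_j)s} P_i A_a P_j$. For each Bohr frequency $\omega = \lambda_i - \lambda_j$, the sum $\Delta\tau \sum_\ell f(\ell\Delta\tau) e^{i\omega \ell\Delta\tau}$ equals $\sum_{k\in\mathbb{Z}} \hat f(\omega + 2\pi k/\Delta\tau)$, up to the paper's Fourier sign and normalization convention. The $k=0$ term reproduces the exact integral, so the error operator is $\sum_{k\neq0}\sum_{i,j}\hat f(\lambda_i-\lambda_j+2\pi k/\Delta\tau)P_iA_aP_j$. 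Since $|\omega|\le 2\|H_a^\Omega\|$, Assumption~\ref{assumption:rapid-decay-filter-fourier} bounds each scalar coefficient by $\hat C e^{-\hat\mu(2\pi|k|/\Delta\tau - 2\|H_a^\Omega\|)}$. This requires $2\pi/\Delta\tau > 2\|H_a^\Omega\|$, and summing over $k$ gives $\bigO\big(\hat C e^{2\hat\mu\|H_a^\Omega\|}e^{-2\pi\hat\mu/\Delta\tau}\big)$.

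The main obstacle is turning these scalar bounds into an operator-norm bound on $\sum_{i,j} c_{ij}P_iA_aP_j$. Schur multipliers are not bounded by $\max|c_{ij}|$ in general. The crude approach bounds the norm by the sum over all $d^2$ entries, or by $d\max|c_{ij}|$ with $d = 2^{\bigO(r^D)}$ the dimension of the support of $H_a^\Omega$. That costs a factor $e^{\bigO(r^D)}$, which is exactly what the $r^D$ term in the denominator of $\Delta\tau$ absorbs. An alternative that avoids the dimension factor is to write the error as $\sum_{k\ne0}\int g_k(s)A_a^\Omega(s)\,ds$ with $g_k$ a modulated copy of $f$. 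That only gives $\|g_k\|_{L^1}$ bounds, which do not decay in $k$, so I would use the dimension-counting route.

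Requiring $\hat C e^{\bigO(r^D)}e^{2\hat\mu\|H_a^\Omega\|}e^{-2\pi\hat\mu/\Delta\tau}\le\epsilon'$ and taking logarithms gives
\begin{equation}
\frac{2\pi\hat\mu}{\Delta\tau} \gtrsim \hat\mu\|H_a^\Omega\| + \log\frac{\hat C}{\epsilon'} + r^D,
\end{equation}
which is precisely the stated $\Delta\tau$ condition. This choice also satisfies $\mu\Delta\tau = \bigO(1)$ up to constants, so the discrete-tail estimate holds as well.
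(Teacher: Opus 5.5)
Your proposal is correct and follows essentially the same route as the paper. Both use Poisson summation in the eigenbasis of $H_a^\Omega$ to get a per-Bohr-frequency aliasing bound, sum that bound crudely over the $4^{\bigO(r^D)}$ matrix entries (which is exactly where the $r^D$ in the $\Delta\tau$ denominator comes from), and add an exponential tail bound for the truncation. (Two cosmetic points: your continuous-tail term $\|\int_{|s|>s_M} f(s)A_a^\Omega(s)\,ds\|$ is redundant, since $K_a^\Omega-(K_a^\Omega)' = (K_a^\Omega-K'_\infty)+(K'_\infty-(K_a^\Omega)')$ holds exactly; and the discrete tail needs no condition $\mu\Delta\tau\lesssim 1$, because keeping the factor $e^{-\mu\Delta\tau}$ in the geometric series gives $\Delta\tau/(e^{\mu\Delta\tau}-1)\le 1/\mu$, equivalent to the paper's comparison with $\int_{s_M}^\infty e^{-\mu s}\,ds$.)
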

\begin{proof}
    Let
    \begin{equation}
        K_\infty' \coloneq \sum_{\ell \in \mathbb{Z}} f(s_\ell) A_a^\Omega (s_\ell) \Delta \tau \, .
    \end{equation}
    Then by triangle inequality,
    \begin{equation}
        \|K_a^\Omega - (K_a^\Omega)'\| \leq \underbrace{\|K_a^\Omega - K_\infty'\|}_{\text{discretization error}} + \underbrace{\|K_\infty' - (K_a^\Omega)'\|}_{\text{truncation error}} \, .
    \end{equation}
    It suffices to bound both terms by $\bigO(\epsilon')$.
    
    To analyze the discretization error, write $K_a^\Omega$ and $K_\infty'$ as
    \begin{equation}
    \begin{split}
        K_a^\Omega &= \sum_{i,j} \hat f(\lambda_i - \lambda_j) \, |i\rangle \langle i|A_a|j\rangle \langle j| \,\\
        K_\infty' &= \sum_{i,j} \left( \sum_{\ell=-\infty}^\infty f(s_\ell) e^{-i(\lambda_i - \lambda_j) s_\ell} \Delta \tau\right) \, |i\rangle \langle i|A_a|j\rangle \langle j|
    \end{split}
    \end{equation}
    where $\lambda_i, \lambda_j$ denotes eigenvalues of $H_a^\Omega$. 
    From here we can see that $\hat f(\lambda_i - \lambda_j)$ is approximated by the Poisson summation, which is expressed as
    \begin{equation}
        \Delta \tau \sum_{\ell \in \mathbb{Z}} f(s_\ell) e^{-i\omega s_\ell} = \sum_{k\in \mathbb{Z}} \hat f\left( \omega + \frac{2\pi k}{\Delta \tau}\right)\, 
        \label{eq:poisson-sum}
    \end{equation}
    where $\omega$ represents $\lambda_i - \lambda_j$, the Bohr frequencies. This implies that the approximation error to $\hat f(\omega)$ is
    \begin{equation}
        E(\omega) \coloneq \left|\Delta \tau \sum_{\ell \in \mathbb{Z}} f(s_\ell) e^{-i\omega s_\ell} - \hat f(\omega) \right| \leq \sum_{k\neq 0} \left| \hat f \left( \omega + \frac{2\pi k}{\Delta \tau} \right)\right| \, .
    \end{equation}
    By property of spectral norm, $|\omega| \leq 2\|H_a^\Omega \|$. Furthermore, by assumption that $|\hat f (\xi) |\leq \hat C e^{-\hat \mu |\xi|}$ we have
    \begin{align}
        \left| \hat f \left( \omega + \frac{2\pi k}{\Delta \tau} \right)\right| &\leq \hat C \exp \left(- \hat \mu \left| \omega + \frac{2\pi k}{\Delta \tau}\right| \right)\, \\
        &\leq \hat C \exp \left(-\hat \mu \left( \frac{2\pi |k|}{\Delta \tau} - |\omega|\right)\right) \\
        &= \bigO\left( \hat C\,  e^{2\hat \mu \|H_a^\Omega \|}\, \exp \left(- \frac{2\pi \hat \mu |k|}{\Delta \tau}\right)\right) \, .
    \end{align}
    Thus,
    \begin{align}
        E(\omega) &\leq 2\hat C e^{2\hat \mu \|H_a^\Omega \|}\sum_{k =  1}^\infty \exp \left(- \frac{2\pi \hat \mu k}{\Delta \tau}\right)\\
        &= 2\hat C e^{2\hat \mu \|H_a^\Omega \|} \frac{e^{- \frac{2\pi \hat \mu}{\Delta \tau}}}{1-  e^{- \frac{2\pi \hat \mu }{\Delta \tau}}}\\
        &= \hat 2 Ce^{2\hat \mu \|H_a^\Omega \|} e^{- \frac{2\pi \hat \mu}{\Delta \tau}}
    \end{align}
    where we used the geometric series formula.
    
    Using the fact that the total number of Bohr frequencies of $H_a^\Omega$ is upper bounded by $4^{\bigO(r^D)}$, the total discretization error is
    \begin{align}
        \|K_a^\Omega - K_\infty'\| &\leq \sum_{i,j}  E(\omega) \|A_a\| \\
        &\leq 4^{\bigO(r^D)} E(\omega) \\
        &\leq 4^{\bigO(r^D)}\, 2\hat Ce^{2\hat \mu \|H_a^\Omega \|}\, e^{- \frac{2\pi \hat \mu}{\Delta \tau}} \, .
    \end{align}
    To bound this error by $\bigO(\epsilon')$, it suffices to choose
    \begin{equation}
        \Delta \tau < \frac{2\pi \hat \mu}{2\hat \mu \|H_a^\Omega\| + \log \frac{2\hat C}{\epsilon'} + \bigO(r^D)} \, .
    \end{equation}

    The truncation error is
    \begin{align}
        \| K_\infty' - (K_a^\Omega)'\| &\leq \Delta \tau \|A_a\| \sum_{|\ell| > M } |f(s_\ell)| \\
        &\leq 2 C \Delta \tau \sum_{\ell = M+1}^\infty e^{-\mu s_\ell} \\
        &\leq 2 C \int_{s_M}^\infty e^{- \mu s} \, ds\\
        &= \frac{2  C}{ \mu} e^{- \mu s_M} \, .
    \end{align}
    Thus to bound this error by $\bigO(\epsilon')$, it suffices to choose
    \begin{equation}
        s_M > \frac{1}{ \mu} \log \frac{2C}{ \mu \epsilon'}\, .
    \end{equation}
\end{proof}
% which is adapted from Lemma~24 in \cite{ding_single-ancilla_2024}.
% \begin{lemma}[Quadrature error]
%     \label{lemma:quadrature-error}
%     Let $\hat f$ satisfy Assumption~\ref{assumption:filter}. Then for any $\epsilon > 0$, we obtain $\|K_a^\Omega - (K_a^\Omega)' \| = \bigO(\epsilon)$ if
%     \begin{align}
%             s_M &= \Omega \left( \frac{1}{\Delta}\mathrm{polylog}\left( \frac{\|H\|}{\Delta \epsilon} \right)\right)\\
%             \Delta \tau &= \bigO \left( \frac{1}{\| H_a^\Omega \|}\right)\, .
%     \end{align}
% \end{lemma}

% Dilation
The dilated jump operator $\tilde K_a^\Omega$ is given by
\begin{equation}
    \tilde K_a^\Omega = \begin{pmatrix}
        0 & (K_a^\Omega)^\dagger \\
        K_a^\Omega & 0 
    \end{pmatrix}
\end{equation}
thus the discretized operator $(\tilde K_a^\Omega)'$ is
\begin{equation}
    (\tilde K_a^\Omega)' = \sum_{\ell =-M}^M \begin{pmatrix}
        0 & f^*(s_\ell) A_a^\Omega(s_\ell)\Delta \tau \\
        f(s_\ell) A_a^\Omega(s_\ell)\Delta \tau & 0
    \end{pmatrix}\, .
\end{equation}
Define $\sigma_\ell \coloneq \Delta\tau \, (\sigma_x\, \mathrm{Re}\, f(s_\ell) \, + \sigma_y\, \mathrm{Im}\, f(s_\ell))$ where $\sigma_x$ and  $\sigma_y$ are the Pauli $X$ and $Y$ matrices, respectively. Then $(\tilde K_a^\Omega)'$ can be written as
\begin{equation}
    (\tilde K_a^\Omega)' = \sum_{\ell = -M}^M \tilde A_a^\Omega (s_\ell) \, 
\end{equation}
where
\begin{equation}\tilde A_a^\Omega (s_\ell) \coloneq \sigma_\ell \otimes A_a^\Omega(s_\ell),.
\end{equation}
Thus to implement $e^{-i\tilde K_a^\Omega \sqrt{\tau}}$, we need to simulate
\begin{equation}
    e^{-i(\tilde K_a^\Omega)' \sqrt{\tau}} = \exp \left(-i \sqrt{\tau} \sum_{\ell = -M}^M \tilde A_a^\Omega (s_\ell) \right) \, .
\end{equation}

% Product formula
Each term in the summation contains a Heisenberg evolution $A_a^\Omega(s_\ell)$. To pull out its left and right $e^{-iH^\Omega s_\ell}$ evolutions, use a product formula to remove the summation out of the exponent. Following \cite{ding_single-ancilla_2024}, we use a second-order Trotter formula so that errors are sufficiently suppressed. This is given by 
\begin{equation}
    e^{-i(\tilde K_a^\Omega)' \sqrt{\tau}} \approx \prod_\ell^\to  \exp\left( -i\frac{\sqrt{\tau}}{2} \tilde A_a^\Omega(s_\ell)\right) \prod_\ell^\gets \exp \left( -i\frac{\sqrt{\tau}}{2} \tilde A_a^\Omega(s_\ell)\right) \, 
    \label{eq:second-order-trotter}
\end{equation}
where $\prod_\ell^\to, \prod_\ell^{\gets}$ denote ordered products in which the index $\ell$ increases and decreases from left to right, respectively. The error incurred from this approximation is analyzed in Lemma~\ref{lemma:stinespring-dilation-gates}.

% Pulling left and right evolutions out of exponent
After decomposing into a product formula, we can use the following identity to pull the left and right evolutions out of the exponent:
\begin{equation}
\begin{split}
    &\quad \exp\left( -i\frac{\sqrt{\tau}}{2} \tilde A_a^\Omega (s_\ell)\right)\\ &= \exp\left( -i\frac{\sqrt{\tau}}{2} \left(I \otimes e^{iH_a^\Omega s_\ell}\right)(\sigma_\ell \otimes A_a)\left(I \otimes e^{-iH_a^\Omega s_\ell}\right)\right) \\
    &= \left(I \otimes e^{iH_a^\Omega s_\ell}\right) \exp\Bigl( -i\frac{\sqrt{\tau}}{2} (\underbrace{\sigma_\ell \otimes A_a}_{\coloneq \tilde A_a^\ell \, })\Bigr) \left(I \otimes e^{-iH_a^\Omega s_\ell}\right) \, .
\end{split}
\label{eq:heisenberg-evol-pullout}
\end{equation}
This identity can be shown by taking the Taylor expansion definition of matrix exponential.

% Cancellations
Using the identity from Eq.~\ref{eq:heisenberg-evol-pullout} into the Trotter formula in Eq.~\ref{eq:second-order-trotter}, we obtain a product of Hamiltonian evolutions. Furthermore, there exists cancellations in those evolutions, e.g.
\begin{equation}
(I \otimes e^{-iH_a^\Omega s_\ell}) (I \otimes e^{iH_a^\Omega s_{\ell + 1}}) = (I \otimes e^{iH_a^\Omega \Delta \tau})\, .
\end{equation}
Thus, the product in Eq.~\ref{eq:second-order-trotter} becomes
\begin{equation}
    e^{-iH_a^\Omega s_M} \left(\prod_\ell^\to e^{-i \frac{\sqrt{\tau}}{2} \tilde A_a^\ell} e^{iH_a^\Omega \Delta \tau} \right)
    \left(\prod_\ell^\gets e^{-iH_a^\Omega \Delta \tau} e^{-i\frac{\sqrt{\tau}}{2} \tilde A_a^\ell}\right) e^{iH_a^\Omega s_M} \, .
    \label{eq:final-product}
\end{equation}
We use the shorthand notation $e^{-iH_a^\Omega s}$ to denote $(I \otimes e^{-iH_a^\Omega s})$.
To simulate each term in this product, we use qubitization. Specifically, we use the following lemma from \cite{low_hamiltonian_2019}.

% Because $H_a^\Omega$ is restricted to a local set $\Omega$, it has much less terms than the original $H$. Furthermore, $H_a^\Omega$ acts on $k$ qubits while $\tilde A_a^\ell$ acts on two: one ancilla and one system qubit.

% To simulate $e^{-i\tilde K_a^\Omega \sqrt{\tau}}$ up to accuracy $\bigO(\tau^2)$, it suffices to simulate $e^{-iH_a^\Omega s_M}$,
% $e^{-iH_a^\Omega \Delta \tau}, $ and $e^{-i \frac{\sqrt{\tau}}{2}\tilde A_a^\ell}$ up to accuracy $\bigO(\tau^2 / M)$ because error adds up linearly. This can be done by using qubitization \cite{low_hamiltonian_2019} assuming access the corresponding Prepare and Select oracles for each Hamiltonian. In particular, we require the following oracles
% \begin{equation}
% \label{eq:oracle-defined}
% \begin{split}
%     &\mathrm{Prep}_{H_a^{\Omega}}\, , \quad \mathrm{Sel}_{H_a^{\Omega}} \qquad \forall a \in [n]\\
%     &\mathrm{Prep}_{\tilde A_a^\ell}\, , \, \quad \mathrm{Sel}_{\tilde A_a^\ell} \qquad \forall \ell \in [-M , M], \, \forall a \in [n]\, .
% \end{split}
% \end{equation}
% where $\Omega_j$ is the local set of qubits centered around qubit $j$. The following lemma is adapted from Corollary 16 in \cite{low_hamiltonian_2019}.

\begin{lemma}[Hamiltonian Simulation via Qubitization]
\label{lemma:qubitization}
% Prep and Sel are defined as follows.
Given a matrix $A$ written in a linear combination of unitaries (LCU) as $A = \sum_k \alpha_k U_k$,
% then
% \begin{equation}
%     \begin{split}
%         \mathrm{Prep}_{A}\, |0\rangle  &= \sum_k \sqrt{\frac{|\alpha_k|}{\lambda}}\, U_k\\
%         \mathrm{Sel}_A\, |k\rangle |\psi\rangle &= |k\rangle\, U_k |\psi\rangle
%     \end{split}
% \end{equation}
% where $\lambda = \sum_k |\alpha_k|$.
$e^{-iAt}$ can be simulated to accuracy $\epsilon$ using qubitization with $\bigO(\alpha t + \log (1/\epsilon))$ many queries to Prep and Sel, where $\alpha = \sum_k |\alpha_k|$
\end{lemma}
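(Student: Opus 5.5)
This lemma is the standard qubitization result of Low and Chuang, Corollary 16 in \cite{low_hamiltonian_2019}. The plan is to recall its three ingredients: the block encoding built from $\Prep$ and $\Sel$, the qubitized walk operator, and quantum signal processing with Jacobi--Anger truncation.

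First I would form the block encoding. With $\Prep$ and $\Sel$ as defined in the setup (for $\alpha_k<0$ the sign, or more generally the phase, is absorbed into $U_k$), the operator $W_0 \coloneq (\Prep^\dagger\otimes I)\,\Sel\,(\Prep\otimes I)$ satisfies $(\langle 0|\otimes I)W_0(|0\rangle\otimes I) = A/\alpha$. This is checked directly by expanding $\Prep|0\rangle$. Next I would apply the qubitization construction: set $W \coloneq (2|0\rangle\langle 0|\otimes I - I)\,W_0$, after making $\Sel$ self-inverse if needed using one extra qubit. Then for each eigenvector $|\lambda\rangle$ of $A$, $W$ acts on a two-dimensional invariant subspace containing $|0\rangle|\lambda\rangle$. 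On that subspace $W$ is a rotation with eigenphases $\pm\arccos(\lambda/\alpha)$. Thus every use of $W$ costs $\bigO(1)$ queries to $\Prep$ and $\Sel$, and polynomials in $W$ realize polynomials of degree $d$ in $\lambda/\alpha$ in the top-left block.

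Then I would invoke quantum signal processing. The Jacobi--Anger expansion gives
\begin{equation}
    e^{-i\alpha t x} = J_0(\alpha t) + 2\sum_{k\geq 1}(-i)^k J_k(\alpha t)\, T_k(x).
\end{equation}
Truncating at degree $d$ leaves an error of order $\sum_{k>d}|J_k(\alpha t)| \leq \sum_{k>d} (e\alpha t/2k)^k$. This is below $\epsilon$ once $d = \bigO(\alpha t + \log(1/\epsilon))$, with the sharper form $\bigO(\alpha t + \log(1/\epsilon)/\log\log(1/\epsilon))$. The real and imaginary parts (the cosine and sine series) are each bounded polynomials of definite parity. So QSP phase sequences of length $d$ implement each of them as a block of a sequence interleaving $W$ with single-qubit ancilla rotations. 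The two parts are combined with one more ancilla qubit, either by an LCU step with amplitude amplification or by the generalized QSP construction. The total is $\bigO(d)$ applications of $W$ and $W^\dagger$, which gives the claimed $\bigO(\alpha t + \log(1/\epsilon))$ queries.

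The main obstacle is the step combining the real and imaginary parts. The LCU of $\cos$ and $\sin$ halves the success amplitude. That loss has to be recovered by oblivious amplitude amplification, which is what keeps the query count additive rather than multiplicative in $\log(1/\epsilon)$. Since the paper uses the lemma as a black box, I would cite \cite{low_hamiltonian_2019} for this step rather than reproduce the phase-finding argument.
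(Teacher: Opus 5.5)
Your proposal is correct and matches the paper, which gives no proof of this lemma and simply imports it from \cite{low_hamiltonian_2019}. Your outline (block encoding from $\Prep$ and $\Sel$, the qubitized walk operator, Jacobi--Anger truncation at degree $d=\bigO(\alpha t+\log(1/\epsilon))$, then QSP) is a faithful sketch of that cited argument, with one small correction: the additive form of the bound is already fixed by the truncation degree $d$, and a single round of oblivious amplitude amplification on the whole evolution costs only a constant factor. It is splitting the evolution into segments that would make the cost multiplicative.
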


For a geometrically $k$-local Hamiltonian $H = \sum_{X\subseteq\Lambda} h_X$, we have that each local term $h_X$ acts nontrivially only on a region of geometrically $k$-local qubits. Thus, it can be represented in a linear combination of $3^{k^D}$ Pauli operators, which is constant because $k$ is constant. Together with the bounded norm assumption that $\|h_X\| = \bigO(1)$, the 1-norm of the LCU coefficients of $h_X$ is constant. Under this assumption, the 1-norm of the LCU coefficients grows linearly with the size of the Hamiltonian.

The implementation of Prep and Sel oracles require a number of gates that scales linearly with the number of LCU terms \cite{low_hamiltonian_2019}. Thus, the implementation of Prep and Sel for $h_X$ utilizes $\bigO(1)$ gates, and since there are linearly many $h_X$, that for $H$ utilizes $\bigO(n)$ gates. In other words, under our assumptions, the number of gates to implement the oracles scale linearly with the size of the Hamiltonian.

\begin{lemma}
\label{lemma:gates-h}
     $e^{-ih_X\tau}$ can be approximated to accuracy $\epsilon$ in trace norm using $\bigO\left(\log \left( \frac{1}{\epsilon}\right)\right)$ queries and gates.
\end{lemma}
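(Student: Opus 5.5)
The plan is to invoke Lemma~\ref{lemma:qubitization} directly on the LCU representation of $h_X$. By the setup in Section~\ref{subsec:setup}, $h_X=\sum_{j=1}^{m_X}\alpha_{X,j}U_{X,j}$, where both the number of terms $m_X$ and the 1-norm $\alpha_X=\sum_j|\alpha_{X,j}|$ are constants. For instance, a Pauli expansion on the $\bigO(k^D)$ qubits of $X$ gives $m_X\le 4^{\bigO(k^D)}$ and $\alpha_X\le 2^{\bigO(k^D)}J$. Lemma~\ref{lemma:qubitization} then produces a unitary $V$ with $\|V-e^{-ih_X\tau}\|\le\epsilon$ using $\bigO(\alpha_X\tau+\log(1/\epsilon))$ queries to $\Prep_{h_X}$ and $\Sel_{h_X}$. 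We only need $\tau\le 1$, since $\tau$ is the Trotter step chosen in Eq.~\ref{eq:tau-formula} and is small. With that restriction, the query count is $\bigO(1+\log(1/\epsilon))=\bigO(\log(1/\epsilon))$ for $\epsilon<1/2$, say.

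For the gate count, I would use the fact cited from \cite{low_hamiltonian_2019} that $\Prep$ and $\Sel$ can be built from $\bigO(m_X)$ two-qubit and single-qubit rotation gates. Qubitization also adds reflections on the $\lceil\log_2 m_X\rceil$-qubit index register and a phase-estimation/QSP ancilla layer, each of constant cost per query. So each query costs $\bigO(1)$ gates, and the total is $\bigO(\log(1/\epsilon))$ gates.

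The remaining step is to move from operator norm to the stated trace-norm (superoperator $1\to1$) accuracy for the channel $\rho\mapsto V\rho V^\dagger$. By the triangle inequality and H\"older's inequality,
\begin{equation}
\|V\rho V^\dagger-U\rho U^\dagger\|_1\le\|(V-U)\rho V^\dagger\|_1+\|U\rho(V-U)^\dagger\|_1\le 2\|V-U\|\,\|\rho\|_1 .
\end{equation}
This is exactly the computation in Lemma~\ref{lemma:jump2}, and it loses only a factor $2$. That factor is absorbed by running qubitization at precision $\epsilon/2$.

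One subtlety is that qubitization gives a block encoding, so post-selection on the index register could introduce a success probability. I would instead cite the standard version of Lemma~\ref{lemma:qubitization} in which the ancillas return to $|0\rangle$ up to error $\epsilon$, so that tracing them out costs another $\bigO(\epsilon)$ in trace norm. I expect this to be the main point to handle carefully. Everything else is direct substitution of constants.
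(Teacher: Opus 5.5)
Your proposal is correct and follows the paper's proof. Like the paper, you apply Lemma~\ref{lemma:qubitization} with $\alpha = \bigO(1)$ and a small Trotter step $\tau$, so the $\alpha\tau$ term is negligible, giving $\bigO(\log(1/\epsilon))$ queries at $\bigO(1)$ gates each. The paper leaves your operator-norm-to-trace-norm conversion and your ancilla discussion implicit. The ancilla point you flagged as delicate is harmless: even if the ancilla returns to $|0\rangle$ only up to amplitude $\bigO(\sqrt{\epsilon})$, the error of the channel after tracing out the ancilla is still $\bigO(\epsilon)$, and any polynomial change in the target precision only changes constants inside the logarithm.
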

\begin{proof}
    Using qubitization in Lemma~\ref{lemma:qubitization}, $e^{-ih_X\tau}$ can be approximated to accuracy $\epsilon$ using $\bigO(\alpha \tau + \log(1/\epsilon)$ queries to $\mathrm{Prep}_{h_X}$ and $\mathrm{Sel}_{h_X}$, where $\alpha$ is the 1-norm of the LCU coefficients of $h_X$. By assumption, $\alpha = \bigO(1)$ and $\tau$ is chosen by Eq.~\ref{eq:tau-formula}, hence $\alpha \tau$ term is negligible. Overall we use $\bigO(\log (1/\epsilon))$ queries, each of which uses $\bigO(1)$ gates.

\end{proof}

The following lemma analyzes the error associated with implementing the product in Eq.~\ref{eq:final-product} to implement $e^{-i\tilde K_a^\Omega \sqrt{\tau}}$.

\begin{lemma}
\label{lemma:n-gates} 
    Let $r$ be the radius of $\supp (H_a^\Omega)$ and $J$ be a constant bounding the interaction strength, i.e. $\|h_X\| < J$ for all $X$. If $f$ satisfies Assumption~\ref{assumption:rapid-decay-filter} and \ref{assumption:rapid-decay-filter-fourier} then there exists an implementation using
    \begin{equation}
        N_{\mathrm{queries}} = \bigO\left(\left( \frac{J}{\mu} + \frac{1}{\mu \hat \mu}\right) r^{D} \polylog \left( \frac{1}{ \epsilon'\epsilon}\right) \right)\, 
    \end{equation}
    queries to $\Prep_{H_a^\Omega}$ and $\Sel_{H_a^\Omega}$ and
    \begin{equation}
        N_{\mathrm{gates}} = \bigO\left(\left( \frac{J}{\mu} + \frac{1}{\mu \hat \mu}\right) r^{2D} \polylog \left( \frac{1}{ \epsilon'\epsilon}\right) \right)\, 
    \end{equation}
    gates that approximates the product in Eq.~\ref{eq:final-product} to accuracy $\epsilon$ in trace distance, where $\epsilon'$ is the quadrature error in Lemma~\ref{lemma:quadrature-error}. Here, we hide logarithmic factors in $C, \hat C, \mu, \hat \mu$.
\end{lemma}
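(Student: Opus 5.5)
We will bound the cost of the product in Eq.~\ref{eq:final-product} factor by factor and then add up the errors. The product consists of three kinds of unitaries. There are two boundary evolutions $e^{\mp iH_a^\Omega s_M}$. There are $2(2M+1)$ short evolutions $e^{\pm iH_a^\Omega\Delta\tau}$. There are $2(2M+1)$ two-qubit rotations $e^{-i\frac{\sqrt\tau}{2}\tilde A_a^\ell}$, where $\tilde A_a^\ell=\sigma_\ell\otimes A_a$ is a fixed two-qubit operator and costs $\bigO(1)$ gates with no queries. Since errors of unitaries compose additively, we allot each Hamiltonian-simulation factor accuracy $\epsilon/\bigO(M)$.

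\textbf{Step 1: LCU norm and per-factor queries.} $H_a^\Omega=\sum_{X\subseteq\Omega}h_X$ contains $\bigO(r^D)$ terms, each of constant LCU norm at most $\bigO(J)$. Its LCU 1-norm is therefore $\alpha=\bigO(Jr^D)$, and $\Prep_{H_a^\Omega},\Sel_{H_a^\Omega}$ use $\bigO(r^D)$ gates each. By Lemma~\ref{lemma:qubitization}, we get
\begin{equation}
N_{\mathrm{queries}}=\bigO\!\left(\alpha s_M+\log\tfrac{M}{\epsilon}\right)+(4M+2)\,\bigO\!\left(\alpha\Delta\tau+\log\tfrac{M}{\epsilon}\right)
=\bigO\!\left(\alpha s_M + M\log\tfrac{M}{\epsilon}\right),
\end{equation}
using $M\Delta\tau=s_M$.

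\textbf{Step 2: plug in quadrature parameters.} Lemma~\ref{lemma:quadrature-error} gives $s_M=\bigO(\mu^{-1}\log(C/\mu\epsilon'))$, so $\alpha s_M=\bigO(\frac{J}{\mu}r^D\log\frac1{\epsilon'})$. The number of nodes is
\begin{equation}
M=\frac{s_M}{\Delta\tau}=\bigO\!\left(\frac{1}{\mu}\log\tfrac{1}{\epsilon'}\cdot\frac{\hat\mu\|H_a^\Omega\|+\log(\hat C/\epsilon')+r^D}{\hat\mu}\right).
\end{equation}
Bounding $\|H_a^\Omega\|\le Jr^D$ yields $M=\bigO((\frac{J}{\mu}+\frac{1}{\mu\hat\mu})r^D\polylog\frac1{\epsilon'})$. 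Here the additive $\log(\hat C/\epsilon')$ is absorbed into the polylog, and the $r^D$ term is grouped with $1/(\mu\hat\mu)$ after hiding constants. Multiplying by $\log(M/\epsilon)$ gives the stated $N_{\mathrm{queries}}$. The gate count follows by multiplying by the $\bigO(r^D)$ gates per oracle, plus the $\bigO(M)$ cheap rotations, which gives $N_{\mathrm{gates}}$ with $r^{2D}$.

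\textbf{Step 3: error accounting.} The only nontrivial error source beyond qubitization is the second-order Trotter splitting in Eq.~\ref{eq:second-order-trotter}. The claimed statement only asks to approximate the \emph{product} Eq.~\ref{eq:final-product} itself, so the Trotter error belongs to Lemma~\ref{lemma:stinespring-dilation-gates}. If it must be included here, we would bound it by nested commutators of the $\tilde A_a^\Omega(s_\ell)$, each of norm $\bigO(|f(s_\ell)|\Delta\tau)$, giving $\bigO(\tau^{3/2}\|f\|_{L^1}^3)$; this is already $\bigO(\tau^2)$-compatible up to the standard argument in \cite{ding_single-ancilla_2024}. The qubitization errors sum to $\bigO(M)\cdot\epsilon/M=\bigO(\epsilon)$, and unitary spectral-norm error transfers to trace distance for the conjugated state.

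\textbf{Main obstacle.} The main obstacle is the bookkeeping in Step 2. We must make sure that the $r^D$ appearing inside $\Delta\tau$ (from the $4^{\bigO(r^D)}$ count of Bohr frequencies) does not multiply the $r^D$ from $\alpha$ into $r^{2D}$ queries. It multiplies $M$, not $\alpha$, in the $M\log$ term, so queries stay $\bigO(r^D)$ up to polylogs. We must also check that the $\alpha\Delta\tau$ contributions telescope to $\alpha s_M$ rather than adding a separate $M$-dependent term.
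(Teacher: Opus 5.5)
Your proposal is correct and follows the paper's proof. The paper uses the same three-factor decomposition of Eq.~\ref{eq:final-product}, the same bound $\alpha_a^\Omega=\bigO(Jr^D)$, the choices of $s_M$ and $\Delta\tau$ from Lemma~\ref{lemma:quadrature-error}, and the same point that $M=s_M/\Delta\tau$ drives the $M\log(M/\epsilon)$ term. The only differences are cosmetic: you telescope $M\alpha\Delta\tau=\alpha s_M$, where the paper notes $\alpha_a^\Omega\Delta\tau=\bigO(1)$ by its choice of $\Delta\tau$. You are also right that the second-order Trotter error belongs to Lemma~\ref{lemma:stinespring-dilation-gates}, but your aside is slightly off: the unitary-level $\bigO(\tau^{3/2})$ bound is not $\bigO(\tau^2)$, and the $\tau^2$ scaling appears only at the channel level after tracing out the ancilla, as in Eq.~(D9) of \cite{ding_single-ancilla_2024}.
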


\begin{proof}
    We proceed term by term and utilize the fact that by geometric locality, $\alpha_a^\Omega = \bigO(Jr^D) $.

    To simulate $e^{-iH_a^\Omega s_M}$ to accuracy $\bigO(\epsilon)$, we require $\bigO \left(\alpha_a^\Omega s_M + \log \left( \frac{1}{\epsilon}\right)\right)$ queries to $\mathrm{Prep}_{H_a^\Omega}$ and $\mathrm{Sel}_{H_a^\Omega}$.  Using Lemma~\ref{lemma:quadrature-error}, we choose $s_M = \Theta \left( \frac{1}{\mu}\mathrm{log}\left( \frac{C}{\mu \epsilon'} \right)\right)$.
    Thus $ \bigO \left( \frac{Jr^{D}}{\mu}\mathrm{log}\left( \frac{C}{\mu \epsilon' \epsilon} \right)\right)$ queries are sufficient to simulate $e^{-iH_a^\Omega s_M}$.

    To simulate $e^{-i H_a^\Omega \Delta \tau}$ to accuracy $\bigO(\epsilon / M)$ we require $\bigO \left(\alpha_a^\Omega \Delta\tau  + \log \left( \frac{M}{\epsilon}\right)\right)$ queries to $\mathrm{Prep}_{H_a^\Omega}$ and $\mathrm{Sel}_{H_a^\Omega}$. We choose $\Delta \tau$ according to
    \begin{equation}
        \Delta \tau = \bigO \left( \frac{\hat \mu}{\hat \mu \alpha_a^\Omega +  \log \left(\frac{\hat C}{\epsilon'}\right) + r^D}\right)
    \end{equation}
    which also satisfies the condition for Lemma~\ref{lemma:quadrature-error}. This implies that the $\alpha_a^\Omega \Delta \tau$ term is negligible. Thus a total of $\bigO(\log (M/\epsilon))$ gates are used to simulate $e^{-i H_a^\Omega \Delta \tau}$.

    $e^{-i \frac{\sqrt{\tau}}{2} \tilde A_a^\ell}$ can be simulated exactly using a constant number of CNOT and single qubit rotation gates. This is because $\tilde A_a^\ell$ is of the form $(\alpha X + \beta Y) \otimes X$ where $\alpha$ and $\beta$ are real values that depend on $f(s_\ell)$. Thus its time evolution can be implemented by classically computing $\alpha$ and $\beta$ then using them as rotation angles in the circuit.

    Overall, to simulate $e^{-i\tilde K_a^\Omega \sqrt{\tau}}$, we need $\bigO(1)$ applications of $e^{-iH_a^\Omega s_M}$, $\bigO(M)$ applications of $e^{-i H_a^\Omega \Delta \tau}$, and $\bigO(M)$ applications of $e^{-i \frac{\sqrt{\tau}}{2} \tilde A_a^\ell}$. This gives a total of
    \begin{equation}
    \bigO \left(\frac{Jr^{D}}{\mu}\mathrm{log}\left( \frac{C}{\mu \epsilon' \epsilon} \right) + M \log \left(\frac{M}{\epsilon}\right)\right)    
    \end{equation}
    queries. $M$ is given by
    \begin{equation}
    M = \Theta \left(\frac{s_M}{\Delta\tau} \right) = \Theta \left( \frac{1}{\mu \hat \mu}\log \left(\frac{C}{\mu \epsilon'} \right)  \left( \hat \mu \alpha_a^\Omega + \log \left(\frac{\hat C}{\epsilon'}\right) + r^D\right)\right)\, .
    \end{equation}
    The number of queries used is
    \begin{equation}
        N_{\mathrm{queries}} = \bigO\left(\left( \frac{J}{\mu} + \frac{1}{\mu \hat \mu}  + \frac{1}{\mu \hat \mu\,  r^{D}} \log \left(\frac{\hat C}{\epsilon'}\right)\right) r^{D} \polylog \left( \frac{C}{\mu \epsilon' \epsilon}\right) \right)\, .
    \end{equation}
    Hiding logarithmic factors in $C, \mu, \hat C, \hat \mu$, we get the desired result. Since each query uses $\bigO(r^D)$ gates, we obtain $N_{\mathrm{gates}}$ as desired.

\end{proof}

Finally, the following lemma analyzes the error when implementing the Stinespring dilation channel $\Eo_a[\rho]$ which introduces an ancilla and traces it out. With this, we can approximate the entire jump operator channel $\scriptK_a$. 

\begin{lemma}
    \label{lemma:stinespring-dilation-gates}
    Let $\epsilon'$ be the quadrature error given by Lemma~\ref{lemma:quadrature-error}. For any $\tau >0$, the channel $\rho \mapsto \Eo_a [\rho]$ can be approximated to accuracy
    \begin{equation}
        \bigO \left(\|f\|_{L^1} \epsilon' \tau \right) + \bigO \left( \tau^2  \|f\|_{L^1}^4\right)
    \end{equation}
    in superoperator 1-1 norm with
    \begin{equation}
        N_{\mathrm{queries}} = \bigO\left(\left( \frac{J}{\mu} + \frac{1}{\mu \hat \mu}\right) r^{D} \polylog \left( \frac{1}{ \epsilon'\tau }\right) \right)\, 
    \end{equation}
    queries to $\Prep_{H_a^\Omega}, \Sel_{H_a^\Omega}$ and
    \begin{equation}
        N_{\mathrm{gates}} = \bigO\left(\left( \frac{J}{\mu} + \frac{1}{\mu \hat \mu}\right) r^{2D} \polylog \left( \frac{1}{ \epsilon'\tau }\right) \right)\, 
    \end{equation}
    gates. Here, we hide logarithmic factors in $C, \hat C, \mu, \hat \mu$.
\end{lemma}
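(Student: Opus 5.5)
We would split the error into three parts by the triangle inequality. Let $\tilde\Eo_a$ denote the implemented channel $\rho\mapsto \Tr_a(\tilde U[|0\rangle\langle 0|\otimes\rho]\tilde U^\dagger)$, where $\tilde U$ is the gate-level circuit for the product in Eq.~\ref{eq:final-product}. Let $\Eo_a'$ denote the dilation channel built from the discretized operator, $e^{-i(\tilde K_a^\Omega)'\sqrt\tau}$. We then compare
\begin{equation}
\|\Eo_a-\tilde\Eo_a\|_{1\to1}\leq \|\Eo_a-\Eo_a'\|_{1\to1}+\|\Eo_a'-\Eo_a''\|_{1\to1}+\|\Eo_a''-\tilde\Eo_a\|_{1\to1},
\end{equation}
where $\Eo_a''$ uses the exact second-order Trotter product in Eq.~\ref{eq:second-order-trotter}. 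For each term we use the fact that unitary channels $U\cdot U^\dagger$ and $V\cdot V^\dagger$ differ in $1\to1$ norm by at most $2\|U-V\|$, and that tensoring with $|0\rangle\langle0|$ and taking the partial trace do not increase the norm.

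\textbf{Quadrature term.} We would not bound $\|e^{-i\tilde K\sqrt\tau}-e^{-i\tilde K'\sqrt\tau}\|\le\sqrt\tau\|\tilde K-\tilde K'\|$ directly, because that only gives $\sqrt\tau\,\epsilon'$. Instead we expand each dilation channel to second order in $\sqrt\tau$, as in Lemma~3 of \cite{ding_single-ancilla_2024}:
\begin{equation}
\Eo_a[\rho]=\rho+\tau\,\scriptK_a^\Omega[\rho]+\bigO(\tau^2\|K_a^\Omega\|^4).
\end{equation}
The $\sqrt\tau$ terms vanish after tracing out the ancilla, since they are off-diagonal in the ancilla. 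The difference is therefore $\tau(\scriptK_a^\Omega-\scriptK_a'){}$ plus $\bigO(\tau^2(\|K\|^4+\|K'\|^4))$. Lemmas~\ref{lemma:jump1} and \ref{lemma:jump2}, with the explicit constant $2\|A\|\epsilon'+\epsilon'^2$, give $\bigO(\|f\|_{L^1}\epsilon'\tau)$. Here we use Lemma~\ref{lemma:jump-operator-norm-bound} and $\|K'\|\lesssim\|f\|_{L^1}+\epsilon'$, since a Riemann sum of $|f|$ is comparable to $\|f\|_{L^1}$ under Assumption~\ref{assumption:rapid-decay-filter}. The remainders contribute $\bigO(\tau^2\|f\|_{L^1}^4)$.

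\textbf{Trotter term.} The middle term is the second-order product-formula error for $\exp(-i\sqrt\tau\sum_\ell\tilde A_a^\Omega(s_\ell))$. This is the main obstacle: a naive bound $\bigO(\tau^{3/2}(\sum_\ell\|\tilde A_\ell\|)^3)$ is only $\tau^{3/2}$, not $\tau^2$. We would follow \cite{ding_single-ancilla_2024} and exploit the block-off-diagonal structure. Every $\tilde A_\ell=\sigma_\ell\otimes A(s_\ell)$ anticommutes with $\sigma_z\otimes I$, so the leading third-order error term is again ancilla-off-diagonal and is killed by the partial trace with the ancilla in $|0\rangle$. The surviving error is fourth order, $\bigO(\tau^2(\sum_\ell|f(s_\ell)|\Delta\tau)^4)=\bigO(\tau^2\|f\|_{L^1}^4)$. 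If this parity argument proves delicate, we would instead expand both channels to order $\tau$ and check that the order-$\tau$ parts coincide. Both equal $\tau(K'\rho K'^\dagger-\tfrac12\{K'^\dagger K',\rho\})$ because the symmetric product reproduces the quadratic term exactly.

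\textbf{Gate synthesis and cost.} The last term comes from Lemma~\ref{lemma:n-gates} with target accuracy $\epsilon=\tau^2$, which is absorbed into the $\tau^2$ term. This gives $N_{\mathrm{queries}}$ and $N_{\mathrm{gates}}$ with $\polylog(1/(\epsilon'\tau^2))=\polylog(1/(\epsilon'\tau))$. Summing the three contributions yields the claimed accuracy.
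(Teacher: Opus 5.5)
Your proposal is correct and follows the paper's proof. It uses the same three-way split into quadrature, second-order Trotter, and gate-synthesis errors, controls the first two through the ancilla-off-diagonal (parity) structure of $\tilde K_a^\Omega$ under the partial trace (which is what the cited bounds (D8)--(D9) of \cite{ding_single-ancilla_2024} encode), and handles the last via Lemma~\ref{lemma:n-gates} at accuracy $\tau^2$; your fallback of matching only the order-$\tau$ parts would not by itself reach $\tau^2$, since removing the $\tau^{3/2}$ terms still needs the parity argument, but your main argument does not depend on it.
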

\begin{proof}
    Let $\rho_a \coloneq |0\rangle \langle 0| \otimes \rho$. Let $U \coloneq e^{-i\tilde K_a^\Omega \sqrt{\tau}}$ and let its discretized approximation be $U' \coloneq e^{-i(\tilde K_a^\Omega)' \sqrt{\tau}}$. Then the total error for the gate implementation of $\Eo_a$ is given by
    \begin{align}
        \|\Eo_a[\rho] - \Tr_a (R'_{\Delta \tau} \, \rho_a\, (R'_{\Delta \tau})^\dagger) \|_1 &= \|\Tr_a (U \rho_a U^\dagger) - \Tr_a (U' \rho_a (U')^\dagger )\|_1 \\
        & + \| \Tr_a (U' \rho_a (U')^\dagger ) - \Tr_a (R_{\Delta \tau} \rho_a R_{\Delta \tau}^\dagger) \|_1  \\
        & + \| \Tr_a (R_{\Delta \tau} \rho_a R_{\Delta \tau}^\dagger) - \Tr_a  (R_{\Delta \tau}' \rho_a (R_{\Delta \tau}')^\dagger)\|_1
    \end{align}
    where $R_{\Delta \tau}$ is the product given in Eq.~\ref{eq:final-product} and $R_{\Delta \tau}'$ is its gate implementation according to Lemma~\ref{lemma:n-gates}. This splits the error into three terms: the first term is the discretization error, the second is Trotterization error, and the third is gate implementation error.

    Note that the first two terms are bounded not with standard channel bounds, but rather the property of partial trace and that $\tilde K_a^\Omega$ is off-diagonal in the ancilla basis. More detail can be found in \cite{ding_single-ancilla_2024}.

    The first term is bounded according to Eq.~(D8) in \cite{ding_single-ancilla_2024}, given by
    \begin{equation}
        \bigO \left( \left\|\tilde K_a^\Omega - (\tilde K_a^\Omega)'\right\| \left(\|\tilde K_a^\Omega \| + \|(\tilde K_a^\Omega )'\|\right ) \tau \right) = \bigO \left(\|f\|_{L^1} \epsilon' \tau \right)\, .
    \end{equation}

    The second term is bounded according to Eq.~(D9) in \cite{ding_single-ancilla_2024}, given by
    \begin{equation}
        \bigO \left( \left( \sqrt{\tau} \sum_{\ell} \left\| \tilde A_a^\Omega (s_\ell) \right\| \right)^4 \right) = \bigO \left( \tau^2  \|f\|_{L^1}^4\right) \, .
    \end{equation}

    With sufficiently many gates, the third term can be suppressed to an arbitrarily low error by Lemma~\ref{lemma:n-gates}. In this case, it suffices to suppress it to an error $\tau^2$, obtaining the number of queries and gates stated.

\end{proof}

\end{document}